\documentclass[11pt]{article}

\usepackage{fullpage}

\usepackage{setspace}
\usepackage{amsmath, amsfonts, amssymb, amsthm}
\usepackage{natbib}
\usepackage{authblk}
\usepackage{xltabular}
\usepackage{booktabs}

\author[1]{Ron Berman}
\author[1]{Walter W. Zhang}
\author[2]{Hangcheng Zhao}
\affil[1]{The Wharton School, The University of Pennsylvania}
\affil[2]{Rutgers Business School}
\date{\today}

\usepackage{booktabs}
\usepackage[ruled]{algorithm2e} 

\SetAlFnt{\small}
\SetAlCapFnt{\small}
\SetAlCapNameFnt{\small}
\SetAlCapHSkip{0pt}
\IncMargin{-\parindent}

\usepackage{graphicx}
\usepackage{xcolor}
\usepackage{soul}
\usepackage{tikz}
\usepackage{enumitem}
\usepackage[table]{xcolor} 
\usepackage{threeparttable}

\usepackage{tabularx}
\usepackage{array}
\DeclareMathOperator*{\argmax}{arg\,max}

\usepackage{nicematrix}

\usepackage[hidelinks]{hyperref}
\usepackage{cleveref}
\usepackage{url}
\usepackage{subcaption}                                                       

\newtheorem{lemma}{Lemma}
\newtheorem{theorem}{Theorem}

\title{Valuing Winners: When and How to Correct for Selection Bias in Randomized Experiments\thanks{
Berman: \href{mailto:ronber@wharton.upenn.edu}{ronber@wharton.upenn.edu}, 
Zhang: \href{mailto:wwz@wharton.upenn.edu}{wwz@wharton.upenn.edu}, 
Zhao: \href{mailto:hangcheng.zhao@rutgers.edu}{hangcheng.zhao@rutgers.edu}.
We thank Eric T. Bradlow, Sanjog Misra, and Christophe Van den Bulte for their helpful comments.
}}

\begin{document}

\maketitle
\thispagestyle{empty} 
\vspace{-30pt}
\begin{abstract}

Decision-makers often deploy the best-performing treatment from a randomized experiment, creating a winner’s curse: selection favors treatments whose observed outcomes are high partly because of statistical noise, so the na\"{i}ve estimate of the winner is upward biased. We distinguish two forms of winner’s curse, bias relative to the true best treatment (global) and bias relative to the selected treatment’s true mean (selective), and link them to regret from deploying a suboptimal treatment. This framework defines seven decision-relevant evaluation targets: mean bias, mean squared error, and confidence interval coverage for the global and selective winner’s curse, and mean regret. 
We then show that methods that perform well on one target can perform poorly on others, so corrections should be matched to the manager’s objective. Across simulations with varying effect sizes, multiple-arm settings, and data calibrated to an online A/B testing platform, no method dominates uniformly: the plug-in estimator performs best when treatment differences are large, cross-fitting performs best when treatments are similar, and resampling methods often achieve low mean squared error for moderate differences. We also introduce an adaptive empirical likelihood procedure that delivers asymptotically valid confidence intervals across settings without the tuning sensitivity of resampling-based methods.
\end{abstract}
\textbf{Keywords}: A/B Testing, Selection Bias, Field Experiments, Empirical Likelihood, Randomized Control Trials
\newpage

\setcounter{page}{1}
\section{Introduction}
A common task for decision-makers after an experiment (e.g., randomized-controlled trial or A/B test) is to decide whether to implement the best-performing policy going forward. This decision usually appears in one of two variants. In ``champion-challenger'' experiments, an existing treatment (the champion) is already implemented and is used as a default treatment or control condition to determine if switching to the winner (the challenger) will provide enough gain \citep{simester2020efficiently}. In ``pick-the-best'' experiments, there is no pre-existing default treatment to compare to, and the goal is to implement the best treatment among those tested \citep{Feit2019}.

In both scenarios, a common approach is to compute the average outcome for each treatment, and then pick the one with the highest outcome for implementation. This approach, however, involves post-selection inference and leads to what is called the winner's curse \citep{andrews2024inference, xu2025} or the optimizer's curse \citep{SmithWinkler2006}: Making a prediction about future performance based on the observed average outcome is upwards biased because of selection. Intuitively, one does not know if the best performing treatment has a high average outcome in the data because it was lucky in the chosen sample (i.e., due to noise), or because it is truly better on average than other treatments in the population.

The literature proposes multiple solutions to this problem. These include sample splitting, conditional inference \citep{andrews2024inference} and resampling methods such as bootstrap-based methods \citep{xu2025}. Some of this research focuses on finite-sample bias reduction, while some focuses on inference, i.e., computing confidence intervals. Most of these papers, however, do not consider the decision maker's objective and whether their winner's curse correction approach supports the goal of the decision maker.

In this paper, we study how the winner's curse affects decision-makers who rely on experimental estimates to choose among treatments. We distinguish three goals a decision-maker may have: (1) Precisely predicting how a specific winning policy will perform if deployed, (2) Estimating how a specific winning policy will perform compared to the truly best treatment, and (3) Minimizing regret with respect to the performance of the truly best arm. We call the bias arising from the first goal the \emph{selective} winner's curse, which is the difference between the na\"{i}ve estimate and the selected treatment's true mean. We call the bias from the second goal the \emph{global} winner's curse, which is the difference between the na\"{i}ve estimate and the true mean of the best treatment. We then make the connection between these two concepts to the concept of regret in decision-theoretic analysis. 

Our paper makes four contributions. First, we derive an identity linking all three quantities in \Cref{eq:Winners-Curse-Identity} and show that this connection has practical consequences: methods that perform well on one objective can perform poorly on another. Second, for each objective, we analyze how proposed corrections perform and what their implications are for decision-makers.
We also show that in some cases, too much of a good thing can hurt the decision-maker: In the champion-challenger scenario, for example, it is sometimes important to know if the winner outperforms the control to make a decision. It is well-known that commonly used bias correction techniques increase the variance of the estimate due to the bias-variance tradeoff (e.g., bootstrap bias correction \citep{efron1993bootstrap}). As a result, for champion-challenger experiments, bias correction might increase the variance of the estimate so much that one can no longer confidently say if it is worthwhile to deploy the winner. 

Third, after analyzing the different solutions and their performance for different decision-making goals, we focus on three common scenarios that decision makers encounter: (1) null effects (Cohen's $d = 0$), (2) small treatment effects or small sample sizes (Cohen's $d = 0.2-0.5$) and (3) large treatment effects (Cohen's $d = 0.8$). Interestingly, across simulations, some of the proposed solutions in the literature perform inconsistently across these scenarios; many of the solutions hinge on a tuning parameter and its optimal value is context dependent. A method optimized for small effect regimes underperforms with null effects and vice-versa. These results persist when we extend the study to experiments with many treatments and in a simulation study calibrated to experiments from an online A/B testing platform \citep{berman2022false}. 

Consequently, our fourth contribution is a new inference method for experiments subject to the winner's curse. Building on empirical likelihood techniques \citep{owen2001empirical}, we develop an adaptive procedure (\Cref{alg:active_set}) that constructs valid confidence intervals across all settings we consider. We establish that this procedure achieves pointwise asymptotic coverage (\Cref{thm:pointwise-coverage-two-arms,thm:pointwise-coverage-many-arms}). Compared to resampling methods, our approach is both less sensitive to tuning parameters and more computationally efficient. In simulations, it consistently achieves nominal coverage across all scenarios.

\section{Related literature}
We distinguish among three different targets that a decision-maker might have with respect to the winner's curse and note that the literature has not identified an optimal method for many of these scenarios. This framework allows us to identify optimal correction procedures for realistic scenarios decision-makers might encounter.

We contribute to several interconnected literature streams on the winner's curse, post-selection inference, and optimal decision-making in experiments. \citet{SmithWinkler2006} introduced the optimizer's curse, showing that point estimates of selected alternatives are systematically upward biased. \citet{thaler2025} review similar reversion-to-the-mean phenomena in behavioral economics. In statistics, the selective inference literature tackles a related problem of conducting valid inference after model selection, where the choice of model parallels our setting of choosing the winner \citep{Kuchibhotla2022}. We extend this literature by distinguishing three objectives (global winner's curse, selected winner's curse, and regret) and showing that the optimal correction depends on the decision-maker's objective.

A related literature studies inference for rankings \citep{Bechhofer1954}. \citet{Gu2023} show that ranking and best-arm identification are directly connected and propose an empirical Bayes procedure. \citet{Mogstad2024} develop a resampling-based method for constructing statistical tests on ranks. While related, this literature focuses on identifying the best arm rather than estimating its value. Identifying the best arm addresses regret, but practitioners also need accurate estimates of the selected treatment's effect size to justify deployment and set expectations.

Estimating the maximum of means presents a parameter boundary problem: standard inference methods fail at the kink where $\mu_1 = \mu_2$.
\citet{andrews2000inconsistency} shows that standard resampling methods become inconsistent when parameters lie at boundaries, and proposes pre-test, shrinkage, and subsampling solutions. \citet{fang2019inference} and \citet{hong2018numerical} develop derivative-based bootstrap procedures for directionally differentiable statistics. The post-selection inference literature distinguishes simultaneous from conditional (selective) inference \citep{Kuchibhotla2022}. \citet{andrews2024inference} develop conditional inference specifically for winners, proposing a hybrid procedure that bounds interval length near ties. For bias correction, \citet{efron1993bootstrap} establishes the classical non-parametric bootstrap approach, and \citet{xu2025} apply it to the selected winner's curse. Our simulations reveal that these resampling methods, while asymptotically valid, are sensitive to tuning parameters in finite samples. The hybrid procedure proposed by \citet{andrews2024inference} is more stable but produces overly conservative confidence intervals in our simulation study. 

On the decision-making side, \citet{simester2020efficiently} and \citet{Feit2019} study experimental design for treatment selection, and \citet{bastani2025beating} propose inference-aware policy optimization. The minimax-regret literature (e.g., \citet{song2014local}, \citet{JooChiong2025}, and \citet{joo2025}) shows that the plug-in estimator is locally asymptotically minimax optimal; our regret analysis corroborates this finding. 

On the inference side, we contribute to the empirical likelihood (EL) literature \citep{owen2001empirical} by using chi-bar-squared critical values \citep{Dykstra1991} to achieve valid coverage of the confidence interval at the kink. Our adaptive procedure then builds on the pre-tests suggested by \citet{andrews2000inconsistency}, and we show that it provides pointwise asymptotic coverage of the confidence interval (\Cref{thm:pointwise-coverage-two-arms,thm:pointwise-coverage-many-arms} in \Cref{app:EL-adaptive}) both away from and at the kink. The EL approach is both less sensitive to tuning parameters and less computationally burdensome than resampling methods and thus offers both an efficient and robust alternative for constructing confidence intervals.

\section{Decision making scenarios with A/B testing}\label{sec:Three-Definitions}

Although A/B testing uses relatively standardized methods and analysis, it is applied to make decisions in distinct decision making scenarios, which imply that the same numerical results might yield different decisions under different scenarios and objectives of the experimenter. We provide two vignettes which we use to illustrate different solutions for the winner's curse below:

\paragraph{Vignette 1: Champion-Challenger with Optimized Default} 

An experimenter in this scenario already has a previously well-tested and optimized version which is the current default. They are testing new variations, nudges or interventions, to see if they improve on the existing default. Because the existing default is already optimized, it is likely that most new proposed challenger interventions will not improve on the champion and will have null effects. The goal of the experimenter is to select a new intervention that improves over the default with high confidence, and in addition properly estimate the amount of improvement over the default to assess whether it justifies the additional cost of switching to the challenger.

\paragraph{Vignette 2: One-shot pick the best}
An experimenter in this scenario runs a mega-study comparing many interventions, with a goal of selecting the best one going forward \citep{milkman2021megastudies}. The results of this study will be published and used to drive future public policy. It is anticipated that the best interventions will have a non-null effect compared to the other interventions \citep{Milkman2024Nature,Milkman2025PNAS}. The goal of the experimenter is to have a precise estimate of the expected outcome of the intervention to be deployed, in order to assess whether the outcomes outweigh the costs. This setting also appears in the behavioral economics literature \citep{thaler2025}, which emphasizes reversion to the mean for "winners" where the chosen winner may perform worse after being selected.

Although these two scenarios appear similar a priori, we present below how they impact decision-making when accounting for the winner's curse.

\subsection{Defining the winner's curse}

\definecolor{ECBlue}{RGB}{224,232,242}   
\definecolor{ECSand}{RGB}{241,235,225}   
\definecolor{ECSage}{RGB}{228,238,231}   
\definecolor{ECHeader}{RGB}{245,245,245}
\definecolor{ECNA}{RGB}{250,250,250}

\newcommand{\mCF}[1]{\cellcolor{ECBlue}\textbf{#1}}
\newcommand{\mBoot}[1]{\cellcolor{ECSand}\textbf{#1}}
\newcommand{\mPlugin}[1]{\cellcolor{ECSage}\textbf{#1}}
\newcommand{\mNA}[1]{\cellcolor{ECNA}#1}
\definecolor{ECLavender}{HTML}{D9D2E9} 
\definecolor{ECPeach}{HTML}{FCE5CD}

\newcommand{\mEL}[1]{\cellcolor{ECLavender}\textbf{#1}}
\newcommand{\mAKM}[1]{\cellcolor{ECPeach}\textbf{#1}}

\definecolor{ECRose}{HTML}{F4CCCC}     

  \newcommand{\mBayes}[1]{\cellcolor{ECRose}\textbf{#1}}

\newcommand{\twolinecell}[2]{
  \begingroup
  \setlength{\tabcolsep}{0pt}
  \renewcommand{\arraystretch}{0.85}
  \begin{tabular}[c]{@{}c@{}}
    #1\\[-0.55ex]#2
  \end{tabular}
  \endgroup
}

We present three alternative definitions for the magnitude of the winner's curse that have been used in the literature. We illustrate how they affect decision making in our three illustrative scenarios. We focus our analysis on experiments with two treatments (arms), and note that they naturally extend to more than two.

An experimenter runs an A/B test with two treatments/variations $k \in \{1,2\}$. The observed outcome of person $i$ who is exposed to treatment $k$, $y_{ik}$, is drawn from a distribution with mean $\mu_{k}$ and variance $\sigma^2$ where $\sigma$ is assumed known for simplicity. When treatment effects are null, we assume $\mu_1 = \mu_2$. When one treatment is better than the other, we assume it is treatment one, i.e.,  $\mu_1 = \mu_2 + \Delta \mu$ with $\Delta \mu >0$.

For each treatment group, the experimenter computes an estimate $\hat{\mu}_k$ of the true population mean $\mu_k$. A common way to compute the estimate is by taking the sample mean $\hat{\mu}_k = \frac{1}{n_k} \sum_i y_{ik}$, where $n_k$ is the number of individuals exposed to treatment $k$. However other methods (e.g., OLS regression) are also often used.

After the experiment, the experimenter picks one treatment to deploy in the future. If they pick the treatment with the highest $\hat{\mu}_k$ (denoted with $\hat{k}$), and use the value of $\hat{\mu}_{\hat{k}}$ (the plug-in estimator) to predict the future outcomes, then this value is expected to be upwards biased. Because the maximum function is convex, Jensen's inequality tells us that $E[\max(\hat{\mu}_1,\hat{\mu}_2) ] \ge \max(E[\hat{\mu}_1],E[\hat{\mu}_2)] ] = \max(\mu_1, \mu_2)$.
There is also some chance that the chosen treatment is not the truly optimal one, i.e., that $\hat{k} \neq \argmax(\mu_1, \mu_2)$.

This upward bias leads to a definition of the size of the winner's curse as the difference between the best estimated mean outcome and the truly best mean outcome in the population:
\begin{equation}
WC_{global} = \max(\hat{\mu}_1, \hat{\mu}_2) - \max(\mu_1, \mu_2).
\end{equation}
In other words, this estimate answers ''If one implements the treatment with the highest estimated mean, how far is the estimate from the truly best outcome''.

Although estimating (and minimizing) $WC_{global}$ is desirable, the true values of $\mu_k$ are unobserved and may not be precisely estimated. However, one can pick a benchmark to compare to (for example, a default control group). Let that benchmark be denoted as treatment $0$, we can rewrite 
\begin{equation}
WC_{global} = \max(\hat{\mu}_1, \hat{\mu}_2) - \hat{\mu}_0 + (\hat{\mu}_0 - \max(\mu_1, \mu_2))
\end{equation}
Minimizing $\max(\hat{\mu}_1, \hat{\mu}_2) - \hat{\mu}_0$ is equivalent to minimizing $WC_{global}$ because the last term in parenthesis is fixed and does not depend on the identity of the estimated winner $\hat{k}$ or the value $\max(\hat{\mu}_1, \hat{\mu}_2)$. This approach to measuring or correcting the winner's curse, which focuses on comparing to a control, addresses the objective of the experimenter in Vignette 1, by comparing to a champion treatment. An empirical application which focuses on this type of objective is the inference-aware policy optimization of \cite{bastani2025beating}. The focus in this analysis is often on having an improvement with a statistical guarantee of a positive effect, which implies that performing precise inference on $WC_{global}$ is desirable.

The literature on the winner's curse often focuses on a different objective, answering ''For the treatment selected for implementation, how far is the estimate from its true unobserved mean''. The winner's curse of the \emph{selected} treatment is defined as:
\begin{equation}
    WC_{select} = \hat{\mu}_{\hat{k}} - \mu_{\hat{k}} = \max(\hat{\mu}_1, \hat{\mu}_2) - \mu_{\argmax(\hat{\mu}_1, \hat{\mu}_2)}
\end{equation}
This metric is useful in Vignette 2, where a policy maker is interested in estimating (or minimizing) $WC_{select}$ in order to predict their returns when implementing the winning treatment \citep{xu2025}.

These two definitions lead to the third, which is regret, commonly used in decision theory \citep{bell1982regret}. A decision maker minimizing regret is not interested in estimates of outcomes, but only true outcomes. Hence, they might ask ''if I implement treatment $\hat{k}$, how much do I lose because I didn't implement the true optimal treatment?''. This question yields the definition of ex-post regret:
\begin{equation}
regret =  \max(\mu_1, \mu_2) -\mu_{\hat{k}}
\end{equation}

We observe that regret only depends on the selected arm $\hat k$ and not its estimate $\hat\mu$. Thus, the selecting the best arm is necessary and sufficient for zero ex-post regret. Further, since only arm selection matters, we can focus our efforts in estimating only $\hat k$ instead of both $\hat\mu_k$ and $\hat k$. This is why we expect the plug-in estimates to perform well in minimizing regret (as they are using the whole data set to just figure out $\hat k$).

\Cref{fig:WC-Def-Paper} illustrates the difference between these definitions. Noting that $\hat{\mu}_{\hat{k}} = \max(\hat{\mu}_1, \hat{\mu}_2)$, we can combine these definitions to write the identity:
\begin{equation}\label{eq:Winners-Curse-Identity}
   WC_{select} -WC_{global} =  regret.
\end{equation}
This identity implies a tradeoff between minimizing $WC_{global}$ and $WC_{select}$ if one keeps regret constant.

\begin{figure}[h]
    \begin{center}
    \begin{tikzpicture}[>=stealth, thick]

        \draw[->] (0,0) -- (6,0) node[right] {Arm choice}; 
        \draw[->] (0,0) -- (0,5) node[above] {Arm value}; 

        \coordinate (K_hat_tick) at (2,0);
        \coordinate (K_star_tick) at (5,0);
        
        \coordinate (Mu_hat_K_hat) at (2, 3);
        \coordinate (Mu_K_hat) at (2, 1.5);  
        \coordinate (Mu_K_star) at (5, 4.5); 
        \coordinate (Mu_hat_K_star) at (5, 2.5);

        \draw (K_hat_tick) -- ++(0,-0.15) node[below] {$\hat{k}$};
        \draw (K_star_tick) -- ++(0,-0.15) node[below] {$k^*$};

        \fill (Mu_hat_K_hat) circle (2pt);
        \fill (Mu_K_hat) circle (2pt);
        \fill (Mu_K_star) circle (2pt);
        \fill (Mu_hat_K_star) circle (2pt);

        \node[below right] at (Mu_K_hat) {$\mu_{\hat{k}}$};
        \node[left] at (Mu_hat_K_hat) {$\hat{\mu}_{\hat{k}}$};
        \node[right] at (Mu_K_star) {$\mu_{k^*}$};
        \node[left] at (Mu_hat_K_star) {$\hat{\mu}_{k^*}$};

        \draw[-] (Mu_K_hat) -- (Mu_K_star) node[midway, below, sloped, font=\small] {Regret};

        \draw[-] (Mu_hat_K_hat) -- (Mu_K_star) node[midway, above, sloped, font=\small] {Global};

        \draw[-] (Mu_hat_K_hat) -- (Mu_K_hat) node[midway, left, font=\small] {Select};

    \end{tikzpicture}
    \caption{Three definitions of winner's curse}
    \label{fig:WC-Def-Paper}
\end{center}

\footnotesize
Note: The selected arm is $\hat{k}$ while the truly best arm is $k^*$. Although the true means ensure $\mu_{\hat{k}}<\mu_{k^*}$ in the population, the estimated means in the sample may lead to
$\hat{\mu}_{\hat{k}} > \hat{\mu}_{k^*}$
The lines illustrate the values of $WC_{global}$, $WC_{select}$ and the Regret and are linked in our identity \Cref{eq:Winners-Curse-Identity}.
\end{figure}

Given these three definitions and their applicability in different experimentation scenarios and objectives, we investigate if the correction procedures offered in the literature properly help the experimenter minimize the winner's curse. Because these definitions do not coincide, we would not expect one correction procedure to be the best for all scenarios and all objectives.\footnote{There is no tradeoff when all of the values are zero, which means both the correct treatment has been selected and the value of the treatment has been correctly estimated.} 
Also, we note that the correction procedures sometimes target different objectives such as correcting bias or conducting inference, and these are sometimes conflicting. For example, in the case of $WC_{select}$, a debiasing procedure will aim to minimize $E[WC_{select}]$, which will allow an experimenter in Scenario 2 to have the most precise prediction of the future performance of the selected intervention.
For scenario 1, however, debiasing estimates is expected to increase their variance; a larger variance makes statistical inference about the probability of improving over the optimized default harder. Some methods might also be too conservative, over estimate the variance, and  lead to overcovering confidence intervals, which may lead managers to decide against deployment despite it being beneficial.

In our analysis, we will compare the different proposed methods on both their ability to debias, and on providing tight confidence intervals. We will also consider a mean square error (MSE) metric that trades off bias and variance, and allows a manager to consider how much might be lost by debiasing or choosing a method that is overly conservative. Table \ref{tab:Winner-Curse-Def} summarizes the different definitions and provides the notation for each one of the metrics we will analyze. 

\begin{table}[ht]
\centering
\caption{Definitions of Winner’s Curse and Associated Statistical Properties}
\label{tab:Winner-Curse-Def}
\renewcommand{\arraystretch}{1.15}

\begin{tabular}{c|c|c|c}
\hline
\rowcolor{ECHeader}
 & \textbf{Bias}  &\textbf{MSE} & \textbf{Coverage}\\
\hline
WC Global  & $E[\hat\mu_{\hat k} -  \mu_{k^*}]$ & $E[\left(\hat\mu_{\hat k} -  \mu_{k^*}\right)^2]$ & $Pr\left(\mu_{k^*} \in CI_{\alpha}(\hat\mu_{\hat k})\right)$\\
\hline
WC Select & $E[\hat\mu_{\hat k} - \mu_{\hat k}]$ & $E[\left(\hat\mu_{\hat k} - \mu_{\hat k}\right)^2]$
& $Pr\left(\mu_{\hat k} \in CI_{\alpha}(\hat\mu_{\hat k})\right)$ \\
\hline
Regret & \multicolumn{3}{c}{$\mu_{k^*} - \mu_{\hat k}$} \\ 
\hline
\end{tabular}

\vspace{0.5em}
\footnotesize
Note: $CI_\alpha(\hat \mu$) indicates the $\alpha$-level two-sided confidence interval around the estimate $\hat \mu$. For regret, we do not target the MSE and Coverage as they are less interpretable and typically less useful in decision making. Minimizing regret is equivalent to selecting the correct arm $\hat k = \arg\max_k \mu_k$.
\end{table}

In the remainder of the paper we first describe and then analyze the different solutions offered to correct the winner's curse by comparing their performance across the different objectives. Our analysis shows that the appropriate correction procedure should be matched to the target objective of the decision maker and the expected effect sizes measured by Cohen's $d$. Table \ref{tab:wc_scenarios} previews the recommended procedure for each scenario for reducing bias and Table \ref{tab:wc_scenarios_mse} contains our full recommendation for all seven objectives.

\begin{table}[!ht]
\centering
\caption{Winner’s Curse Scenarios and Recommended Methods}
\vspace{0.5em}
\label{tab:wc_scenarios}
\renewcommand{\arraystretch}{1.15}
\setlength{\tabcolsep}{8pt}
\begin{NiceTabular}{c|cc|c}
    \CodeBefore
      \rectanglecolor{ECBlue}{2-2}{4-2}
      \rectanglecolor{ECBlue}{2-3}{2-3}
      \rectanglecolor{ECSand}{3-3}{4-3}
      \rectanglecolor{ECSage}{3-4}{5-4}
    \Body
    \rowcolor{ECHeader}
      \textbf{Cohen's $d$}
      & \textbf{Bias (Select)}
      & \textbf{Bias (Global)}
      & \textbf{Regret} \\
    \hline
    $d = 0$
      & \mCF{CF}
      & \mCF{CF}
      & \mNA{Does not matter} \\
    $d = 0.2$
      & \Block{2-1}{\textbf{CF}}
      & \Block{2-1}{\textbf{Bootstrap}}
      & \Block{3-1}{\textbf{Plug-in}} \\
    $d = 0.5$
      & & & \\
    $d = 0.8$
      & \mPlugin{Plug-in}
      & \mPlugin{Plug-in}
      & \\
  \end{NiceTabular}

\vspace{0.5em}
\footnotesize
Note: Each cell reports the method recommended by the simulation results for the corresponding performance criterion and target. CF denotes cross-fitting. Bootstrap denotes bootstrap bias correction. Plug-in denotes the uncorrected plug-in estimator. \Cref{tab:wc_scenarios_mse} provides the full set of results for our simulation, reporting additional recommendations when the focus is not on the mean but on the MSE and the coverage of the confidence interval.
\end{table}

\section{Family of solutions}

This section introduces the families of solutions that address the winner's curse. We first formally set up the problem and outline the mathematical properties of the maximum of means, which is central to understanding the winner's curse. The discussion highlights the challenges with standard inferential techniques when dealing with the non-differentiability of the maximum function. Then, we describe the proposed solutions, and categorize them by their underlying machinery. One downside of these prior methods is that they require tuning parameters, which we show can lead to substantial degradation in performance if picked incorrectly. Hence, we introduce a novel techniques for constructing confidence intervals using empirical likelihood, which is less sensitive to underlying tuning parameters.

\subsection{Plug-in estimator}

The winner's curse can be recast as a problem studying the maximum of means. We focus on the two-arm setting and the parameter $\theta = \max(\mu_1, \mu_2) = \phi(\mu_1, \mu_2)$, where $\mu_1$ and $\mu_2$ are the means of two independent populations. We let $\{X_{1i}\}_{i=1}^{n_1}$ and $\{X_{2i}\}_{i=1}^{n_2}$ be independent and identically distributed (i.i.d.) samples from distributions with means $\mu_1, \mu_2$ and variances $\sigma_1^2, \sigma_2^2$ respectively. The sample estimators for the means are $\hat\mu_1 = \bar{X}_1$ and $\hat\mu_2 = \bar{X}_2$. 

A natural estimator for this statistic is the plug-in estimator $\hat{\theta}_n = \max(\bar{X}_1, \bar{X}_2) = \phi(\hat\mu_1,\hat\mu_2)$ that plugs the sample mean estimates into the max function. It is straightforward to show this function is continuous and Lipschitz in its arguments. Further, it is differentiable except at the \emph{kink} where $\hat\mu_1=\hat{\mu_2}$. The maximum function is convex, and by Jensen's inequality, we have $E[\hat{\theta}_n] \ge \max(\mu_1, \mu_2) = \theta$. This upward bias is precisely the "winner's curse".

In \Cref{app:Plugin-Asymptotics}, we establish the consistency of the estimator and derive its asymptotic distribution. We show that the asymptotic distribution is a maximum of two normal distributions, which behaves irregularly when at the kink ($\mu_1 = \mu_2$) due to the non-differentiability of $\phi$ (the maximum function) at this point when the means of the two groups are equal. As a result, standard approaches that rely on the Delta method  will fail at the kink \citep{hirano2012impossibility}, and the vanilla bootstrap methods for bias correction and uncertainty quantification are inconsistent for this statistic \citep{mammen1992bootstrap}.

\subsection{Outline of solutions to the Winner's Curse}

We categorize the several families of proposed solutions into five categories: sample-splitting, resampling, conditional inference, empirical Bayes, and empirical likelihood. In our simulations, we will evaluate the performance of these methods in terms of the different objectives outlined in \Cref{sec:Three-Definitions}.

\subsubsection{Sample-splitting}
Sample splitting decouples the selection of the winner from the estimation of its value. The data is partitioned into two independent folds; one fold is used to select the arm with the highest sample mean, and the other fold is used to estimate the mean of that selected arm. While this eliminates winner's curse, it reduces the effective sample size and can be inefficient. Sample splitting methods are reviewed in \citet{Kuchibhotla2022} and are considered as a general, model-agnostic method for de-biasing. However, the splitting of the data will widen the confidence interval and accurate selection of the winner is not guaranteed. 

Cross-fitting can recover some efficiency by averaging estimates over exchanged folds. The procedure recycles the sample-splitting folds for selection and estimation and then averages the results across the folds to provide a more data-efficient estimator.

\subsubsection{Resampling methods}

Resampling methods, such as the bootstrap, are widely used for bias correction and inference \citep{efron1993bootstrap}. 
\citet{fang2019inference} show that a statistic needs to be fully Hadamard differentiable for the bootstrap to be consistent. In \Cref{app:Plugin-Hadamard}, we show that the maximum operator $\max(\cdot)$ would not yield consistent bootstrap estimates because it is only directionally (and not fully) Hadamard differentiable.

\citet{andrews2000inconsistency} provides four classes of adjustments for resampling methods to be consistent for estimating a parameter on the boundary of its domain.
We adapt these methods for our setting where we have two arms and the statistic of interest is the maximum of the two arms. The boundary is at the kink when the two arms yield equal means.

First, \citet{andrews2000inconsistency} proposes using a pre-test that initially tests whether the two arms' means are equal or not for each bootstrap iteration. 
If the pre-test fails to reject the null hypothesis that the two means are equal, then the asymptotic distribution of the max of two normals is used. Otherwise, the limiting distribution of the higher arm is used. The critical value of the pre-test is chosen to scale with the number of observations which ensures asymptotic consistency.

Second, \citet{andrews2000inconsistency} proposes a shrinkage parametric bootstrap. The parametric approaches uses the limiting distribution of the problem. Like in the first approach, for each bootstrap iteration, a pre-test is run to see if the means are equal.
If the test fails to reject, the means are pooled and draws from a normal distribution with the pooled means are used. Otherwise, the maximum of the draws of two normals centered around the sample means of each arm are used. The pre-test critical value is also chosen to ensure asymptotic consistency. 

Intuitively, the pre-test in both approaches will ''snap'' the estimates to the kink if the sample means are close enough and smaller than the pre-test critical value. In practice, the choice of the pre-test value is a tuning parameter, and the adjustment leads to pointwise asymptotically consistent performance when at or away from the kink, but possibly worse performance in finite samples when close to the kink.

As the third and fourth solutions, \citet{andrews2000inconsistency} proposes two subsampling procedures: the $m$-out-of-$n$ subsampling estimator \citep{Politis1994} and the $m$-out-of-$n$ or rescaled bootstrap. Both use a smaller subsample size to smooth out the limit distribution and can be generally applied to i.i.d. settings. The subsampling size, $m$, acts as the tuning parameter and as long as $m\to\infty$ and $m/n\to 0$, the estimator is asymptotically consistent. As $m$ gets close to $n$, for the $m$-out-of-$n$ bootstrap, we recover the standard nonparametric bootstrap.

\citet{fang2019inference} and \citet{hong2018numerical} offer two additional resampling procedures based on the Hadamard derivative of $\phi$ and are similar to the pre-test procedures outlined by \citet{andrews2000inconsistency}. \citet{fang2019inference} uses a pre-test for each bootstrap iteration to determine if the means are equal and then uses the analytical Hadamard derivative at the kink if it fails to reject. Just like the pre-test approaches, the effective critical value of the test is a tuning parameter that needs to be chosen during implementation. \citet{hong2018numerical} expand on this approach and numerically compute the Hadamard derivative for each bootstrap iteration, and the step size of the numerical derivative becomes the tuning parameter. The numerical derivative approach is readily generalizable to many arms because it does not require a pre-test across the different arms. 

Separately, these resampling methods can be used for bias correction \citep{efron1993bootstrap}, which includes the winner's curse or bias of the plug-in estimator \citep{xu2025}. Any of these resampling methods that explicitly adjust for the kink will provide consistent bias correction. 

These approaches have two drawbacks: sensitivity to tuning parameters and heavy computational cost, with the latter especially acute for bootstrap-within-bootstrap confidence intervals around bootstrap bias corrections. While theory guarantees asymptotic consistency, in finite samples the choice of tuning parameter will substantially shift the sampling distribution. Each method has its own tuning parameter: $m$ in $m$-in-$n$ resampling, the critical value for pre-test and shrinkage estimators, and the step size for the numerical Hadamard derivative. Our simulations confirm that results depend meaningfully on these choices.

\subsubsection{Conditional inference}

A related literature examines constructing confidence intervals that are valid conditional on the selection event \citep{Kuchibhotla2022}. By explicitly conditioning on the fact that a specific arm was chosen (e.g., $\hat{\mu}_1 > \hat{\mu}_2$), these methods adjust the inference to account for the truncation of the sampling distribution induced by selection and are known in the literature as selective inference. In contrast, the sample-splitting and resampling approaches provide simultaneous inference, which covers both the arm selection and the mean estimation in our example. 

\citet{andrews2024inference} studies the problem of conducting inference conditional on choosing the best performing arm. In their framework, a manager faces a set of normally distributed arms and chooses the "winner" with the highest estimate. To remedy the winner's curse, they propose a conditional inference procedure based on the distribution that is conditional on the selection. By conditioning on the identity of the winning arm, they show that the residual distribution of the winner's estimator follows a truncated normal distribution rather than a standard normal. Inference proceeds by inverting this truncated distribution, which yields estimators that are median-unbiased and confidence intervals with valid coverage conditional on the specific arm selected. However, a limitation of this approach is that when the "best" arms are close in performance (close to the kink), the conditioning event becomes highly restrictive, leading to potentially infinite-length confidence intervals. 

To resolve this concern, \citet{andrews2024inference} propose a hybrid inference procedure that combines the conditional inference procedure with a simultaneous "projection" interval, which is an overly conservative interval valid for all arms simultaneously. Specifically, the projection interval is constructed to cover the true parameters of all potential winners with a high probability, ensuring unconditional validity regardless of which arm is ultimately selected. The hybrid procedure conditions on the selection event but restricts the parameter space to the projection interval, effectively using the simultaneous interval to bound the length of the conditional interval. The authors show that this approach prevents the confidence intervals from becoming excessively wide and ensures valid unconditional coverage. Our simulation study evaluates the performance of these two approaches for our outcomes of interest.

\subsubsection{Empirical Bayes}

\citet{SmithWinkler2006} propose a Bayesian correction in which the posterior mean of each arm is a convex combination of its sample mean and a prior mean. We implement an empirical Bayes version that shrinks each arm toward the grand mean, with the prior variance estimated from the data by method of moments:
\begin{equation}
    \hat{v}_k = \alpha_k \bar{Y}_k + (1-\alpha_k)\bar{\mu}, \qquad \alpha_k = \frac{\hat{\sigma}_\mu^2}{\hat{\sigma}_\mu^2 + \widehat{se}_k^2},
\end{equation}
where $\bar{\mu} = K^{-1}\sum_k \bar{Y}_k$ is the grand mean and $\hat{\sigma}_\mu^2$ is a method-of-moments estimator of the between-arm variance. The winner is selected as $\hat{k} = \arg\max_k \hat{v}_k$ and its shrunk mean $\hat{v}_{\hat{k}}$ is reported as the point estimate.

The intuition is that the naive estimator $\bar{Y}_{\hat{k}}$ is unbiased conditional on the true means but becomes upward biased once the selection step is introduced, because extreme sample means are disproportionately likely to come from arms whose noise realizations happened to be large. Shrinking toward the grand mean reduces these extremes: an arm whose observed mean is far above the others is partially attributed to noise rather than to a genuinely superior true mean, and its estimate is pulled back accordingly. The amount of shrinkage is governed by the signal-to-noise ratio $\alpha_k$. When sampling noise is large relative to the spread of true means, $\alpha_k$ is small and estimates are shrunk heavily; when the spread of true means dominates, $\alpha_k$ approaches one and the shrinkage vanishes. This approach is the Bayesian analogue of the James--Stein estimator \citep{JamesStein1961, EfronMorris1973}.

The Empirical Bayes approach is most advantageous when the true means are plausibly drawn from a common distribution and when $K$ is large enough to estimate $\sigma_\mu^2$ reliably. It is straightforward to implement, requires no resampling, and has no tuning parameter beyond the variance components, which are estimated from the data. Its main limitation in our setting is that the estimate of $\sigma_\mu^2$ is noisy when $K$ is small, and shrinkage toward the grand mean can introduce downward bias when one arm is genuinely far better than the rest.

\subsubsection{Empirical likelihood}

Empirical likelihood (EL) offers a computationally efficient, non-parametric approach to constructing confidence intervals. We incorporate EL for our problem and introduce an adaptive procedure based on pre-tests that accounts for the kink and ensures proper pointwise asymptotic coverage of its confidence interval. 

EL constructs confidence intervals by inverting a profile likelihood ratio test: For a candidate value $\theta$, one computes the maximum of the empirical likelihood subject to the constraint that the parameter equals $\theta$, and rejects when the resulting deviance exceeds a $\chi^2$ critical value \citep{owen2001empirical}. The likelihood ratio test holds under smooth, interior constraints. 

However, the constraint $\max(\mu_1, \mu_2) = \theta$ introduces a constraint at the boundary. When $\mu_1 = \mu_2 = \theta$, the constraint set has a kink, and the standard $\chi^2$ calibration for the critical value is incorrect. In \Cref{app:Chi-bar-square-adaption}, we show the profile deviance at the kink instead converges to a chi-bar-squared distribution, which is a mixture of $\chi^2$ variates, and accounts for the boundary constraint \citep{Dykstra1991}. We also show that applying the standard $\chi^2$ critical value at the kink leads to under-coverage, but using the chi-bar-squared critical value everywhere leads to over-coverage away from the kink.

In practice, we do not know if we are at the kink or not. We therefore propose an adaptive procedure based on pre-tests to adaptively use the $\chi^2$ or the chi-bar-squared critical value. In \Cref{app:EL-adaptive}, we show that this approach is asymptotically consistent and provides pointwise asymptotic coverage \Cref{lem:Consistency-2-arm} and \Cref{thm:pointwise-coverage-two-arms}  in \Cref{app:EL-adaptive}. 

\Cref{lem:Consistency-2-arm} establishes that the pre-test consistently classifies whether the data generating process is at the kink ($\mu_1 = \mu_2$) or away from it.\footnote{$\hat K_n$ is a binary variable that represents the sample classification (See \Cref{eq:K_n_definition}).}

\begin{quote}
\noindent\textbf{Lemma 1 (Consistency of the pre-test decision).}
Assume $n_1,n_2\to\infty$, $n_1/(n_1+n_2)\to\pi\in(0,1)$, $s_g^2\to_p \sigma_g^2\in(0,\infty)$.
If $\Delta=0$, then $\Pr(\hat{K}_n=1)\to 1$.
If $\Delta\neq 0$, then $\Pr(\hat{K}_n=1)\to 0$.
\end{quote}

\noindent The intuition behind the proof relies on the behavior of the scaled difference between the sample means. At the kink (when the true means are equal), this difference is driven purely by random noise and remains bounded. Because our pre-test threshold $\kappa_n$ grows with the sample size, it eventually exceeds this noise, correctly identifying the tie. Away from the kink, the true difference between the means causes the scaled statistic to grow very quickly (at a rate proportional to $\sqrt{n}$). Because our threshold is chosen to grow slower than this rate, the test statistic easily exceeds it, ensuring the pre-test correctly detects that the means are different.

Building on \Cref{lem:Consistency-2-arm}, \Cref{thm:pointwise-coverage-two-arms} shows that the adaptive EL confidence interval achieves correct pointwise asymptotic coverage.

\begin{quote}
\noindent\textbf{Theorem 1 (Pointwise asymptotic coverage of the adaptive EL CI).}
Let $\theta_0=\max(\mu_1,\mu_2)$ denote the true value. Under the same conditions as Lemma~1, for any fixed $\alpha\in(0,1)$, $\Pr\big(\theta_0\in \mathcal{C}^{EL,ad}_{1-\alpha}(\theta)\big)\to 1-\alpha$.
\end{quote}

\noindent The proof evaluates the coverage probability by splitting it into two scenarios based on the pre-test's decision. If the true means are equal (at the kink), the pre-test consistently identifies the tie. This triggers the use of the chi-bar-squared critical value, which matches the nonstandard distribution of the test statistic at the boundary, ensuring correct coverage. Conversely, if the true means differ (away from the kink), the pre-test consistently detects the difference. The procedure then applies the standard $\chi^2$ critical value, which provides correct coverage because the test statistic follows a standard, smooth distribution in this regime.

We then extend our analysis $J$ arms. We derive the associated chi-bar-squared critical value (\Cref{sec:Many-arms-chi-bar-squared-adaption}) and propose an active-set approach (\Cref{alg:active_set}), in which we only consider comparisons of the arms to the empirically best performing arm. \Cref{lem:Consistency-J-arm} and \Cref{thm:pointwise-coverage-many-arms} respectively show that the procedure is asymptotically consistent for identifying the active-set of best performing arms and provides pointwise asymptotic coverage (\Cref{sec:Adaptive-many-arms}).

The advantages of our EL method are twofold: compared to resampling techniques, its tuning parameter is more rigorously justified by theory and, in practice, demonstrates lower sensitivity to its tuning parameter.
In finite samples, the performance of $m$-out-of-$n$ resampling methods depends heavily on the chosen subsample size, $m$. As our simulations demonstrate, varying this tuning parameter causes the method to systematically over- or under-correct. Further, there is no established theoretical guidance for selecting an optimal $m$.
The EL approach requires a pre-test cutoff $\kappa_n$ that is guided by asymptotic theory, requiring $\kappa_n \to \infty$ and $\kappa_n = o(\sqrt{n})$. With theoretically guided cutoff $\kappa_n = \sqrt{\log\log(\min\{n\})}$, the EL method avoids the finite-sample volatility observed in resampling procedures in our simulation study; the tuning parameter for EL also only impacts the critical value and not the sampling distribution, which provides stability in its estimate. 

While our EL method uses a pre-test like the adaptive resampling techniques of \citet{andrews2000inconsistency} and \citet{fang2019inference}, the impact of the pre-test is fundamentally different. In adaptive bootstrap methods, the pre-test determines which distribution to simulate. A misclassification forces the estimator to draw from the wrong distribution, causing significant volatility in finite samples. In contrast, our EL procedure uses the pre-test only to select the critical value (toggling between $\chi_1^2$ and $\bar{\chi}^2$), while the underlying test statistic remains unchanged. Therefore, a pre-test misclassification in the EL framework merely applies a slightly different threshold to the correct test statistic and the EL approach is structurally more stable.

Like all pre-test procedures, our EL method achieves pointwise, rather than uniform, asymptotic coverage \citep{andrews2000inconsistency}. Because methods relying on a tuning parameter to adjust for the kink inherently suffer from this limitation, we advocate the EL approach primarily through its finite sample stability and its significant computational advantages. While resampling methods require computationally expensive bootstrap-with-bootstraps procedures to construct valid confidence intervals, our EL method completely sidesteps this burden. Instead, it relies on a highly efficient one-dimensional root-finding procedure, making it vastly more scalable in practice.

\section{Analysis of winner's curse correction methods}\label{sec:Results-Comparison}

Our Monte Carlo simulations are designed to compare the four categories of solutions at representative treatment effect values that match the three scenarios from \Cref{sec:Three-Definitions}. We focus on the seven different objectives: bias correction, MSE, and coverage for WC select and WC global as well as regret.

\subsection{Simulation setup}
\label{subsec:sim_settings}

\paragraph{Data generating process}
The simulation generates data from a two-arm randomized experiment with balanced assignment. In each Monte Carlo replication, we draw an i.i.d.~sample of size
$N$ with sample sizes $n_1=\lfloor N/2\rfloor$ and $n_2=N-n_1$, where $N \in \{20, 40, 60, 80, 100\}$.
Individual outcomes for unit $i$ are from 
$Y_i \overset{iid}{\sim} \mathcal{N}(\mu_k,\sigma^2)$, 
with $\mu_1 = 1$ and $\mu_2 = 1+\Delta\mu$. 
Throughout, we report Cohen's $d \equiv \Delta\mu / \sigma$ as a standardized effect-size summary, so that the grid spans $d \in \{0, 0.1, 0.2, 0.5, 0.8, 1.0\}$ when varying $\sigma$.
For each $(N, d ,\sigma)$ tuple, we run $R=20,000$ replications.
Resampling-based procedures use $B=200$ resamples per replication.

\paragraph{Objectives for comparison}
We evaluate all methods on the following criteria:
\begin{enumerate}
    \item \textbf{Bias of point estimates (Winner's curse).}
    In each replication, we compute arm sample means $\bar{Y}_1,\bar{Y}_2$, select the winner
    $\hat{k}=\arg\max_{k\in\{1,2\}}\bar{Y}_k$, and form the plug-in estimator $\hat{\phi}=\max\{\bar{Y}_1,\bar{Y}_2\}=\bar{Y}_{\hat{k}}$.
    
    We evaluate winner's-curse errors for the plug-in estimator as well as for each debiased estimator produced by the methods described below.
    We report errors under two targets: (i) the \emph{select} target $\mu_{\hat{k}}$ (the true mean of the selected arm) and (ii) the \emph{global} target $\mu_{k^*}$ where $k^*=\arg\max_{k\in\{1,2\}}\mu_k$.

    \item \textbf{Mean squared error (MSE).}
    For each method and each target, we compute the MSE by averaging squared estimation errors across replications.

    \item \textbf{Coverage of 95\% confidence intervals.}
    For methods that directly output 95\% intervals, we directly compute empirical coverage of the true value. For bootstrap-corrected estimators, we construct 95\% intervals using a \emph{bootstrap-of-bootstraps} procedure: within each Monte Carlo replication, we repeatedly resample from the realized dataset, re-run the full bias-correction procedure on each resample to obtain a bootstrap distribution of the corrected estimator, and then construct a percentile interval from that distribution. Coverage is reported as the fraction of replications in which the resulting interval contains the target parameter (select or global).

    \item \textbf{Regret.}
    We also measure regret, $\mathrm{Regret}=\mu_{k^*}-\mu_{\hat{k}}$.
    
\end{enumerate}

\paragraph{List of methods}

\Cref{tab:methods_summary} summarizes the procedures we analyze in our simulations. Here, we report results of analyses for the following methods:
(1) standard nonparametric bootstrap; 
(2) $m$-out-of-$n$ bootstrap with $m=\lfloor N^{\gamma}\rfloor$ and $\gamma\in\{0.45,0.95\}$; 
(3) parametric shrinkage bootstrap; 
(4) FS19 derivative bootstrap; 
(5) sample splitting; 
(6) cross fitting; 
(7) AKM24 hybrid;
(8) empirical likelihood (EL);  and 
(9) Bayesian shrinkage. 
\Cref{app:Simulation-Details} reports the full set of simulation results for all methods we analyze.

For the bootstrap-based bias-correction methods, we use two correction amounts that mirror our two evaluation targets (select vs.\ global).
Let $D=\{(T_i,Y_i)\}_{i=1}^N$ denote the realized dataset, and define the arm sample means
$\bar{Y}_k(D)$ and the plug-in winner estimator
$\hat{\phi}(D)=\max\{\bar{Y}_1(D),\bar{Y}_2(D)\}$.
Each bootstrap procedure generates bootstrap resamples $D^{*(b)}$ for $b=1,\ldots,B$ and computes the corresponding bootstrap winner
$\hat{\phi}^{*(b)}=\hat{\phi}(D^{*(b)})$, from which we form two bias estimates.

The \emph{global} correction  estimates the upward bias of the plug-in winner by centering each bootstrap winner at the original plug-in value,
\[
\widehat{WC}_{\mathrm{glo}}
=\frac{1}{B}\sum_{b=1}^B\big(\hat{\phi}^{*(b)}-\hat{\phi}(D)\big).
\]
The \emph{select} correction  re-centers each bootstrap winner around the mean of the arm selected within that bootstrap sample, evaluated on the original data,
\[
\widehat{WC}_{\mathrm{sel}}
=\frac{1}{B}\sum_{b=1}^B\big(\hat{\phi}^{*(b)}-\bar{Y}_{\hat{k}^{*(b)}}(D)\big),
\qquad
\hat{k}^{*(b)}=\arg\max_{k\in\{1,2\}}\bar{Y}_k(D^{*(b)}).
\]
For either choice $\diamond\in\{\mathrm{sel},\mathrm{glo}\}$, the corrected estimator has the common form $\hat{\phi}_{\diamond}=\hat{\phi}(D)-\widehat{WC}_{\diamond}$,
where $\widehat{WC}_{\diamond}$ is computed using $B=200$ bootstrap resamples.\\

\newcolumntype{Y}{>{\raggedright\arraybackslash}X}

\newcommand{\algcell}[1]{
  \parbox[t]{\linewidth}{
    \scriptsize
    \raggedright
    \setlength{\parindent}{0pt}
    \setlength{\parskip}{0pt}
    #1
  }
}

\newlength{\methodcolw}
\setlength{\methodcolw}{0.22\textwidth} 

\begingroup
\linespread{1.0}
\setlength{\tabcolsep}{3pt}
\renewcommand{\arraystretch}{1.02}
\setlength{\aboverulesep}{0pt}
\setlength{\belowrulesep}{0pt}

\begin{xltabular}{\textwidth}{@{}>{\raggedright\arraybackslash}p{\methodcolw}Y@{}}
\caption{Summary of methods analyzed in simulation}
\label{tab:methods_summary}\\
\toprule
\textbf{Method} & \textbf{Procedure} \\
\midrule
\endfirsthead

\multicolumn{2}{l}{\textit{Table \ref{tab:methods_summary} continued from previous page}}\\
\toprule
\textbf{Method} & \textbf{Procedure} \\
\midrule
\endhead

\midrule
\multicolumn{2}{r}{\textit{Continued on next page}}\\
\endfoot

\bottomrule
\endlastfoot
\textbf{(1) Standard nonparametric bootstrap (NPB)} &
\algcell{
\textbf{Input:} $D=\{(T_i,Y_i)\}_{i=1}^N$, $B=200$. Compute $\bar{Y}_k(D)$ and $\hat{\phi}(D)=\max_k \bar{Y}_k(D)$.\\
\textbf{For} $b=1,\dots,B$ \textbf{do}\\
\hspace*{1em}Sample $N$ rows from $D$ with replacement until both arms appear; call resample $D^{*(b)}$.\\
\hspace*{1em}Compute $\hat{\phi}^{*(b)}=\max_k \bar{Y}_k(D^{*(b)})$ and $\hat{k}^{*(b)}=\arg\max_k \bar{Y}_k(D^{*(b)})$.\\
\textbf{End for.}\\[2pt]
\textbf{Corrections:}\;
$\displaystyle
\widehat{WC}_{\mathrm{sel}}
=\frac{1}{B}\sum_{b=1}^B\Big(\hat{\phi}^{*(b)}-\bar{Y}_{\hat{k}^{*(b)}}(D)\Big),\quad
\widehat{WC}_{\mathrm{glo}}
=\frac{1}{B}\sum_{b=1}^B\Big(\hat{\phi}^{*(b)}-\hat{\phi}(D)\Big).$\\
\textbf{Output:}\; $\hat{\phi}_{\diamond}=\hat{\phi}(D)-\widehat{WC}_{\diamond}$, $\diamond\in\{\mathrm{sel},\mathrm{glo}\}$.
} \\\hline
\textbf{(2) $m$-out-of-$n$ bootstrap (w/ repl.)} &
\algcell{
\textbf{Input:} $D$, $B=200$, $\gamma\in\{0.45,0.95\}$. For each $\gamma$, set $m(\gamma)=\lfloor N^{\gamma}\rfloor$.\\
\textbf{For each} $\gamma\in\{0.45,0.95\}$ \textbf{do}\\
\hspace*{1em}\textbf{For} $b=1,\dots,B$ \textbf{do}\\
\hspace*{2em}Sample $m(\gamma)$ rows from $D$ with replacement until both arms appear; call it $D_{m(\gamma)}^{*(b)}$.\\
\hspace*{2em}Compute $\hat{\phi}_{m(\gamma)}^{*(b)}=\max_k \bar{Y}_k(D_{m(\gamma)}^{*(b)})$ and
$\hat{k}_{m(\gamma)}^{*(b)}=\arg\max_k \bar{Y}_k(D_{m(\gamma)}^{*(b)})$.\\
\hspace*{1em}\textbf{End for.}\\[2pt]
\hspace*{1em}\textbf{Corrections (for this $\gamma$):}\;
$\displaystyle
\widehat{WC}_{\mathrm{sel}}(\gamma)
=\frac{1}{B}\sum_{b=1}^B\Big(\hat{\phi}_{m(\gamma)}^{*(b)}-\bar{Y}_{\hat{k}_{m(\gamma)}^{*(b)}}(D)\Big),\quad
\widehat{WC}_{\mathrm{glo}}(\gamma)
=\frac{1}{B}\sum_{b=1}^B\Big(\hat{\phi}_{m(\gamma)}^{*(b)}-\hat{\phi}(D)\Big).$\\
\hspace*{1em}\textbf{Output (for this $\gamma$):}\; $\hat{\phi}_{\diamond}(\gamma)=\hat{\phi}(D)-\widehat{WC}_{\diamond}(\gamma)$,
$\diamond\in\{\mathrm{sel},\mathrm{glo}\}$.\\
\textbf{End for.}
} \\\hline
\textbf{(3) Parametric shrinkage bootstrap (PB)} &
\algcell{
\textbf{Input:} $D$, $B=200$, $c_{\eta}=1.1$. Compute $\hat{\sigma}_k$ and $\hat{se}(\bar{Y}_k)=\hat{\sigma}_k/\sqrt{n_k}$.\\
Compute $\hat{se}=\sqrt{\hat{se}(\bar{Y}_1)^2+\hat{se}(\bar{Y}_2)^2}$ and $\eta_N=c_{\eta}\hat{se}\sqrt{2\log\log(N)}$.\\
Shrink-to-tie means: if $|\bar{Y}_1-\bar{Y}_2|\le \eta_N$, set $\tilde{\tau}_1=\tilde{\tau}_2=\tfrac12(\bar{Y}_1+\bar{Y}_2)$; else set $(\tilde{\tau}_1,\tilde{\tau}_2)=(\bar{Y}_1,\bar{Y}_2)$.\\
\textbf{For} $b=1,\dots,B$ \textbf{do}\\
\hspace*{1em}Keep $\{T_i\}$ fixed; draw $Y_i^{*(b)}\sim\mathcal N(\tilde{\tau}_{T_i},\hat{\sigma}_{T_i}^2)$ independently to form $D^{*(b)}$.\\
\hspace*{1em}Compute $\hat{\phi}^{*(b)}=\max_k \bar{Y}_k(D^{*(b)})$ and $\hat{k}^{*(b)}=\arg\max_k \bar{Y}_k(D^{*(b)})$.\\
\textbf{End for.}\\[2pt]
\textbf{Corrections:}\;
$\displaystyle
\widehat{WC}_{\mathrm{sel}}
=\frac{1}{B}\sum_{b=1}^B\Big(\hat{\phi}^{*(b)}-\tilde{\tau}_{\hat{k}^{*(b)}}\Big),\quad
\widehat{WC}_{\mathrm{glo}}
=\frac{1}{B}\sum_{b=1}^B\Big(\hat{\phi}^{*(b)}-\hat{\phi}(D)\Big).$\\
\textbf{Output:}\; $\hat{\phi}_{\diamond}=\hat{\phi}(D)-\widehat{WC}_{\diamond}$, $\diamond\in\{\mathrm{sel},\mathrm{glo}\}$.
} \\\hline
\textbf{(4) FS19 derivative bootstrap (FS)} &
\algcell{
\textbf{Input:} $D$ and bootstrap arm means $\{\bar{Y}_1^{*(b)},\bar{Y}_2^{*(b)}\}_{b=1}^B$, $B=200$.\\
Compute $\hat{se}=\sqrt{\hat{se}(\bar{Y}_1)^2+\hat{se}(\bar{Y}_2)^2}$, $t=(\bar{Y}_1-\bar{Y}_2)/\hat{se}$, and $\kappa_n=\sqrt{\log(\min\{n_1,n_2\})}$. (Chosen following \cite{fang2019inference}; see that paper for details.)\\
\textbf{For} $b=1,\dots,B$ \textbf{do}\\
\hspace*{1em}$h_1^{*(b)}=\sqrt{n_1}(\bar{Y}_1^{*(b)}-\bar{Y}_1)$,\;
$h_2^{*(b)}=\sqrt{n_2}(\bar{Y}_2^{*(b)}-\bar{Y}_2)$.\\
\hspace*{1em}Set $d^{*(b)}=h_1^{*(b)}$ if $t>\kappa_n$; $d^{*(b)}=h_2^{*(b)}$ if $t<-\kappa_n$; else $d^{*(b)}=\max\{h_1^{*(b)},h_2^{*(b)}\}$.\\
\textbf{End for.}\\
Set $n_{\mathrm{eff}}=\min\{n_1,n_2\}$ and
$\displaystyle \widehat{WC}^{\mathrm{FS19}}=\Big(\frac{1}{B}\sum_{b=1}^B d^{*(b)}\Big)\big/\sqrt{n_{\mathrm{eff}}}$.\\
\textbf{Output:}\; corrected point estimate $\hat{\phi}(D)-\hat{WC}^{\mathrm{FS19}}$.
} \\ \hline
\textbf{(5) Sample split (SS)} &
\algcell{
\textbf{Input:} $D$, split proportion $1/2$. Randomly split indices into folds $A$ (train) and $B$ (eval). Select $\hat{k}_A=\arg\max_k \bar{Y}_k(D_A)$ and estimate $\hat{\theta}_{\mathrm{SS}}=\bar{Y}_{\hat{k}_A}(D_B)$.\\
\textbf{95\% CI:} $t$-interval based on the evaluation-fold variance of the chosen arm.
} \\\hline
\textbf{(6) Cross fit (CF)} &
\algcell{
\textbf{Input:} $D$, split proportion $1/2$. Split into folds $A,B$. 
Select on $A$, estimate on $B$: $\hat{k}_A=\arg\max_k \bar{Y}_k(D_A)$, \ $\hat{\theta}_{A\to B}=\bar{Y}_{\hat{k}_A}(D_B)$.
Select on $B$, estimate on $A$: $\hat{k}_B=\arg\max_k \bar{Y}_k(D_B)$, \ $\hat{\theta}_{B\to A}=\bar{Y}_{\hat{k}_B}(D_A)$.\\
Set $\hat{\theta}_{\mathrm{CF}}=\tfrac12(\hat{\theta}_{A\to B}+\hat{\theta}_{B\to A})$. \textbf{95\% CI:} let $\hat{\mathrm{Var}}_{B}=s^2_{\hat{k}_A,B}/n_{\hat{k}_A,B}$ and $\hat{\mathrm{Var}}_{A}=s^2_{\hat{k}_B,A}/n_{\hat{k}_B,A}$. $\hat{se}_{\mathrm{CF}}=\tfrac12\sqrt{(\hat{\mathrm{Var}}_{A}+\hat{\mathrm{Var}}_{B})}$, and report
$\hat{\theta}_{\mathrm{CF}}\pm t_{N-1,0.975}\hat{se}_{\mathrm{CF}}$.
} \\\hline
\textbf{(7) AKM24 Hybrid} &
\algcell{
\textbf{Input:} $x_W=\hat{\phi}(D)$, $x_L$, $sd_W$, $\alpha=0.05$, $\beta=\alpha/10$, $K=2$.\\
Let $c_\beta=\Phi^{-1}\!\left(\frac{1+(1-\beta)^{1/K}}{2}\right)$ and define
$L(\mu)=\max\{x_L,\mu-c_\beta sd_W\}$, $U(\mu)=\mu+c_\beta sd_W$.\\
\textbf{Point estimate:} solve for $\hat{\mu}_{\mathrm{hyb}}$ such that
$F_{\mathrm{TN}}(x_W;\mu,sd_W,L(\mu),U(\mu))=\tfrac12$.\\
\textbf{95\% CI:} set $\gamma=(\alpha-\beta)/(2(1-\beta))$ and find $(\mu^L,\mu^U)$ solving
$F_{\mathrm{TN}}(x_W;\mu^L,\cdot)=1-\gamma$ and $F_{\mathrm{TN}}(x_W;\mu^U,\cdot)=\gamma$.
} \\\hline
\textbf{(8) Empirical likelihood (EL)} &
\algcell{
\textbf{Input:} arm samples $y_1=\{Y_i:T_i=1\}$ and $y_2=\{Y_i:T_i=2\}$, $\alpha=0.05$.\\
\textbf{Adaptive cutoff:} compute $\hat{se}=\sqrt{\hat{se}(\bar{Y}_1)^2+\hat{se}(\bar{Y}_2)^2}$ and $\eta_N=c_{\eta}\hat{se}\sqrt{2\log\log(N)}$ (same rule as PB).\\
If $|\bar{Y}_1-\bar{Y}_2|\le \eta_N$, use chi-bar-squared cutoff $q_{\bar{\chi}^2}(1-\alpha)$; otherwise use $\chi^2_1(1-\alpha)$.\\
\textbf{Profile EL deviance:} for candidate $\theta$, enforce $\max(\mu_1,\mu_2)=\theta$ by profiling along two paths: (A) $\mu_1=\theta,\mu_2\le\theta$; (B) $\mu_2=\theta,\mu_1\le\theta$.\\
Compute one-sample EL deviances $D_j(\theta)=-2\log LR(y_j;\mu=\theta)$ (via \texttt{el.test}).\\
Set $\mathrm{dev}_A(\theta)=D_1(\theta)+\mathbf{1}\{\bar{y}_2>\theta\}D_2(\theta)$,\;
$\mathrm{dev}_B(\theta)=D_2(\theta)+\mathbf{1}\{\bar{y}_1>\theta\}D_1(\theta)$, and $\mathrm{dev}(\theta)=\min\{\mathrm{dev}_A(\theta),\mathrm{dev}_B(\theta)\}$.\\
\textbf{95\% CI:} invert the test; find the smallest/largest $\theta$ with $\mathrm{dev}(\theta)\le$ cutoff (1D root finding around $\hat{\phi}(D)$). 
} \\\hline
\textbf{(9) Bayesian shrinkage} &
  \algcell{
  \textbf{Input:} $D=\{(T_i,Y_i)\}_{i=1}^N$. Compute $\bar{Y}_k$,
  $\hat{se}_k=\hat{\sigma}_k/\sqrt{n_k}$, and $\hat{\phi}(D)=\max_k
  \bar{Y}_k$.\\
  \textbf{Mean:} $\bar{\mu}=\tfrac12(\bar{Y}_1+\bar{Y}_2)$.\textbf{Prior variance:} $\hat{\sigma}_\mu^2=\max\!\Big(0,\;\frac{(\bar{
  Y}_1-\bar{Y}_2)^2-\hat{se}_1^2-\hat{se}_2^2}{2}\Big)$.\\
  \textbf{Shrinkage weights:}
  $\alpha_k=\hat{\sigma}_\mu^2\big/(\hat{\sigma}_\mu^2+\hat{se}_k^2)$,
  $k=1,2$.\\
  \textbf{Posterior means:}
  $\hat{v}_k=\alpha_k\bar{Y}_k+(1-\alpha_k)\bar{\mu}$.\\
  \textbf{Selection \& point estimate:} $\hat{k}_{\mathrm{Bayes}}=\arg\max_k
  \hat{v}_k$. Output: $\hat{\phi}_{\mathrm{Bayes}}=\hat{v}_{\hat{k}}$.\\
  \textbf{95\% CI:} Let $\sigma_{\bar{\mu}}^2=\tfrac12\hat{\sigma}_\mu^2+\tfrac1
  4(\hat{se}_1^2+\hat{se}_2^2)$ and $\sigma_{\mathrm{post}}^2=(1-\alpha_{\hat{k}
  })\hat{\sigma}_\mu^2+(1-\alpha_{\hat{k}})^2\sigma_{\bar{\mu}}^2$. Report
  $\hat{v}_{\hat{k}}\pm z_{1-\alpha/2}\sqrt{\sigma_{\mathrm{post}}^2}$.
  } \\
\bottomrule
\end{xltabular}
\endgroup
\linespread{1.5}

\subsection{Results}

This section presents the results of our simulations and provides an analysis for our seven objectives of interest.

\begin{figure}[ht]
     \begin{center}
        \caption{Winner's Curse Bias (Global)}\label{fig:wc_global}
\includegraphics[width=0.9 \textwidth]{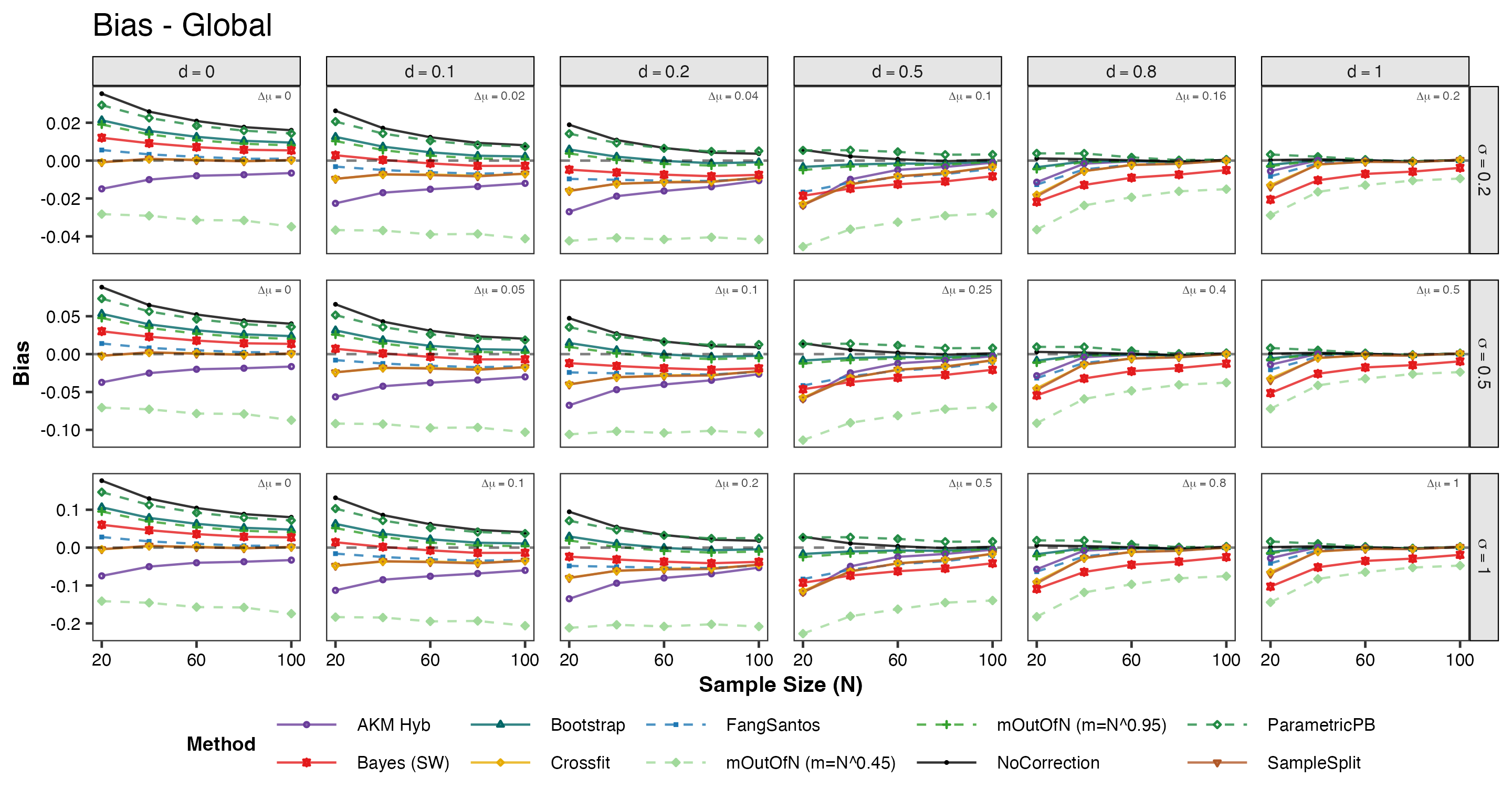}
    \end{center}
\footnotesize
    Notes: WC Global as a function of sample size. Panels vary by $\Delta\mu$ (columns) and $\sigma$ (rows). Methods: (1) standard nonparametric bootstrap; 
(2) $m$-out-of-$n$ bootstrap with $m=\lfloor N^{\gamma}\rfloor$ and $\gamma\in\{0.45,0.95\}$; 
(3) parametric shrinkage bootstrap (ParametricPB); 
(4) \citep{fang2019inference} Hadamard-derivative based bootstrap; 
(5) sample splitting; 
(6) cross-fitting; 
(7) AKM24 hybrid and (8) plug-in (no correction).
\end{figure}

\begin{figure}[!ht]
    \begin{center}
        \caption{Winner's Curse Bias (Select)}\label{fig:wc_select}
\includegraphics[width=0.9 \textwidth]{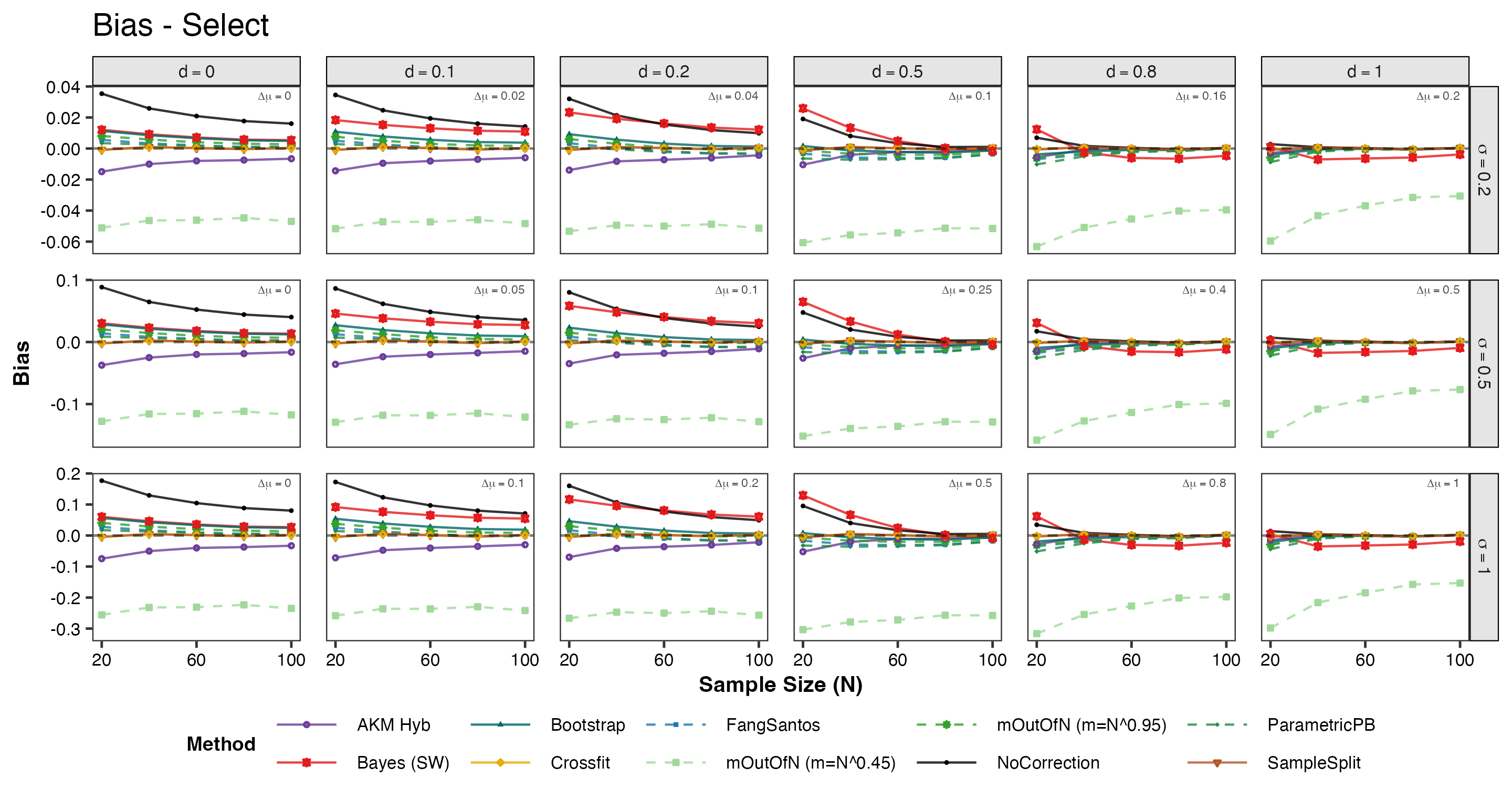}
    \end{center}
    \footnotesize
    Notes: WC Select bias as a function of sample size $N$. Panels vary by $\Delta\mu$ (columns) and $\sigma$ (rows). The legend matches that of Figure~\ref{fig:wc_global}.
\end{figure}

\begin{figure}[!ht]
    \begin{center}
        \caption{Mean Squared Error (Global)}
\includegraphics[width=0.9 \textwidth]{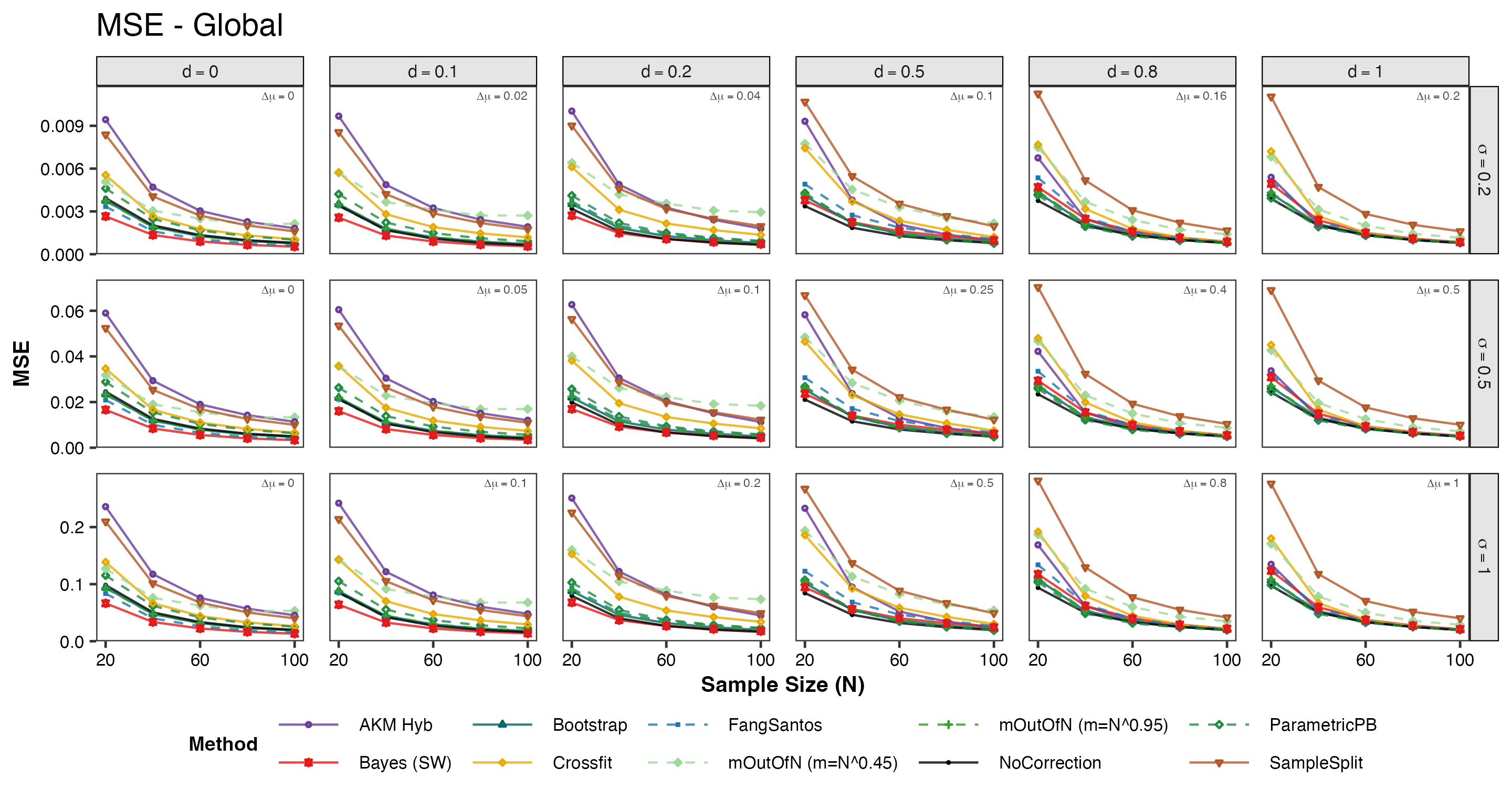}
    \label{fig:mse_global}
    \end{center}
     \footnotesize
    Notes: Mean squared error (MSE) for WC Global as a function of sample size $N$. Panels vary by $\Delta\mu$ (columns) and $\sigma$ (rows).
\end{figure}
\begin{figure}[!ht]
    \centering
        \caption{Mean Squared Error (Select)}
\includegraphics[width=0.9 \textwidth]{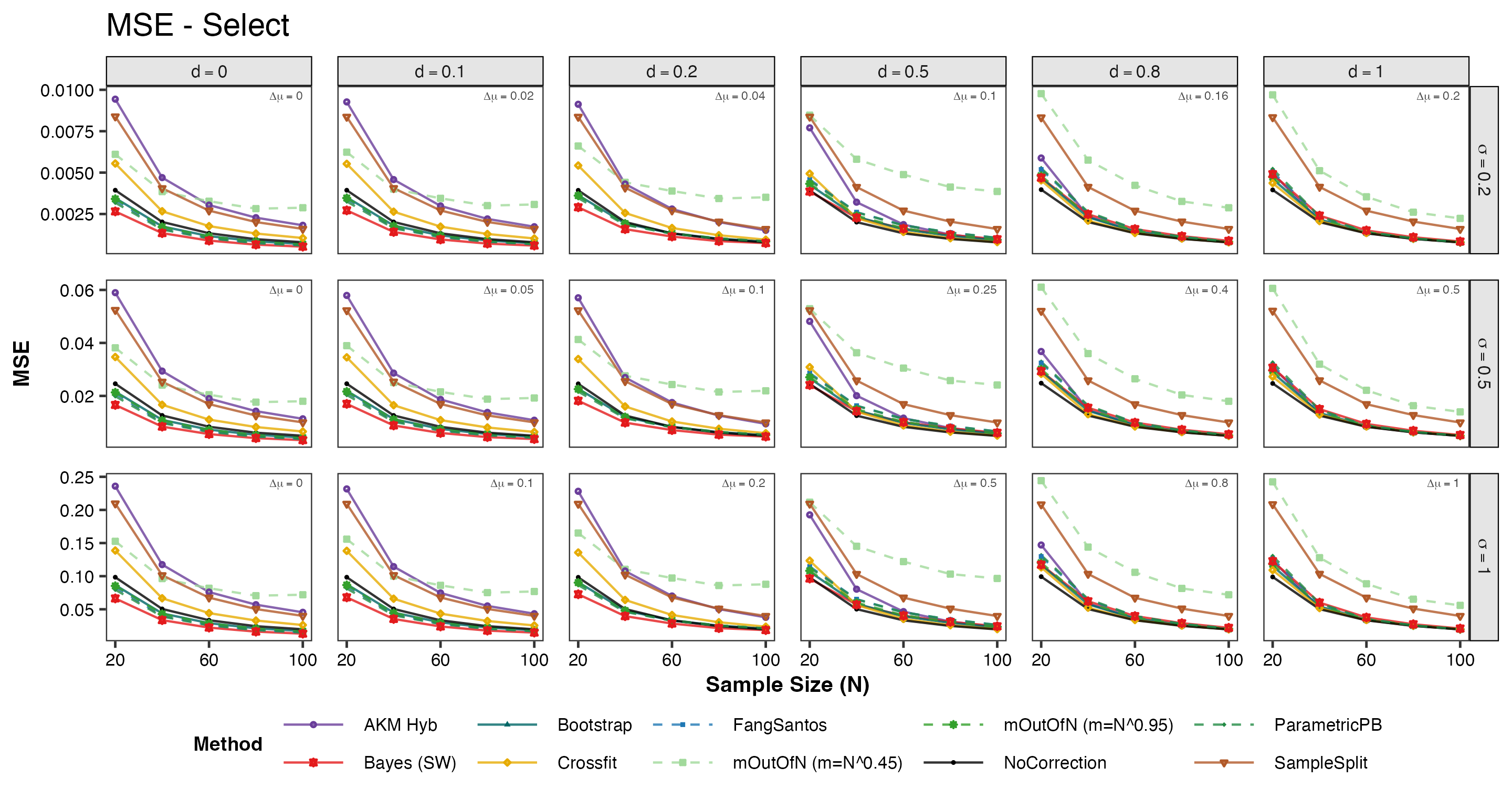}
    \label{fig:mse_select}
    
    \footnotesize
    Notes: Mean squared error (MSE) for WC Select as a function of sample size $N$. Panels vary by $\Delta\mu$ (columns) and $\sigma$ (rows).
\end{figure}

\begin{figure}[!ht]
    \begin{center}
    \caption{Coverage (Global)}
    \includegraphics[width=0.9 \textwidth]{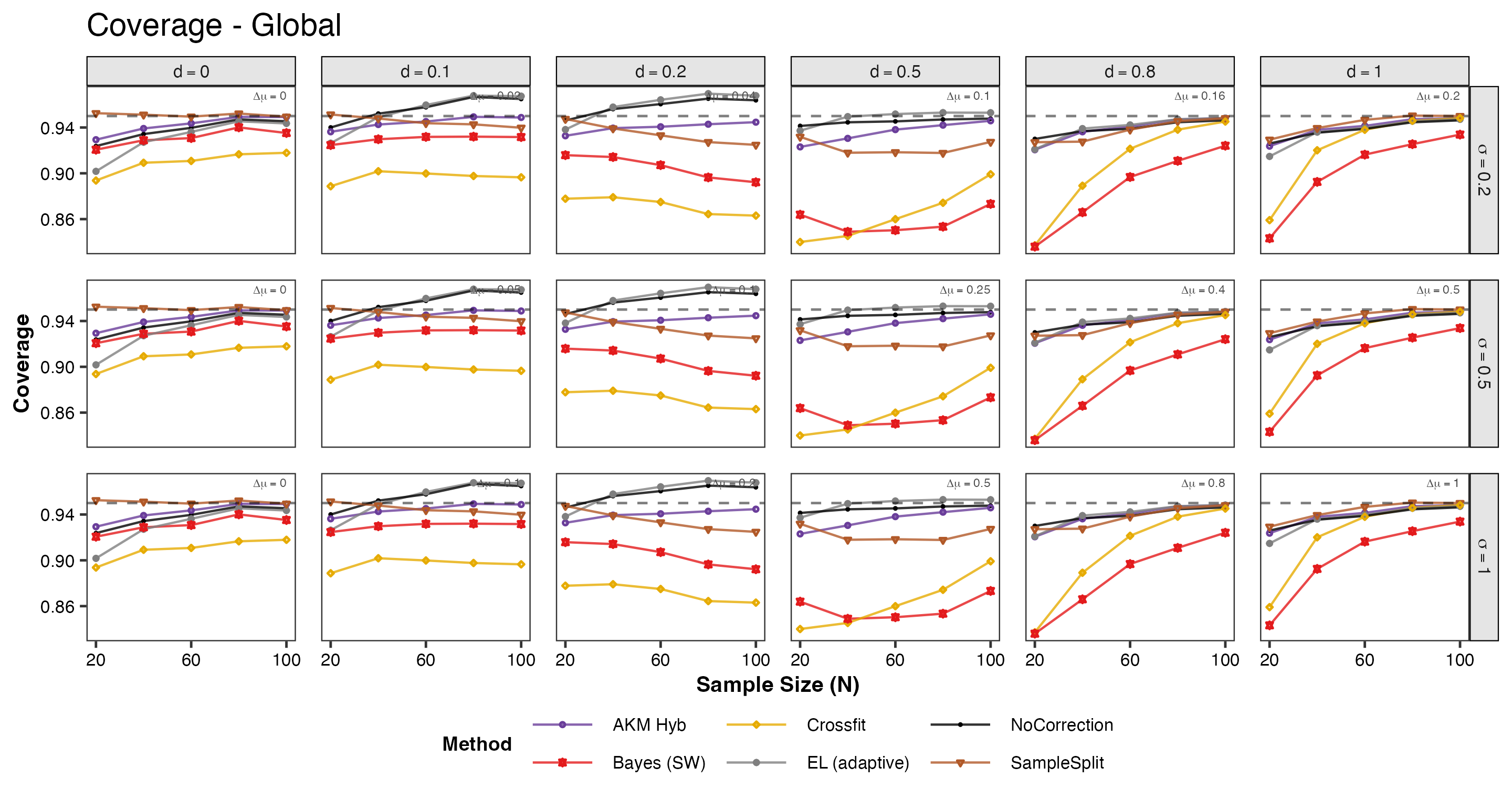}
    \label{fig:cov_global}
\end{center}
    \footnotesize
Notes: Empirical coverage of $\alpha$-level two-sided confidence intervals for $\mu_{k^*}$, i.e., $\Pr\left(\mu_{k^*}\in CI_\alpha(\hat\mu_{\hat k})\right)$, as a function of sample size $N$. Panels vary by $\Delta\mu$ (columns) and $\sigma$ (rows).   EL denotes empirical likelihood. 
\end{figure}

\begin{figure}[!ht]
    \begin{center}
    \caption{Coverage (Select)}
    \includegraphics[width=0.9 \textwidth]{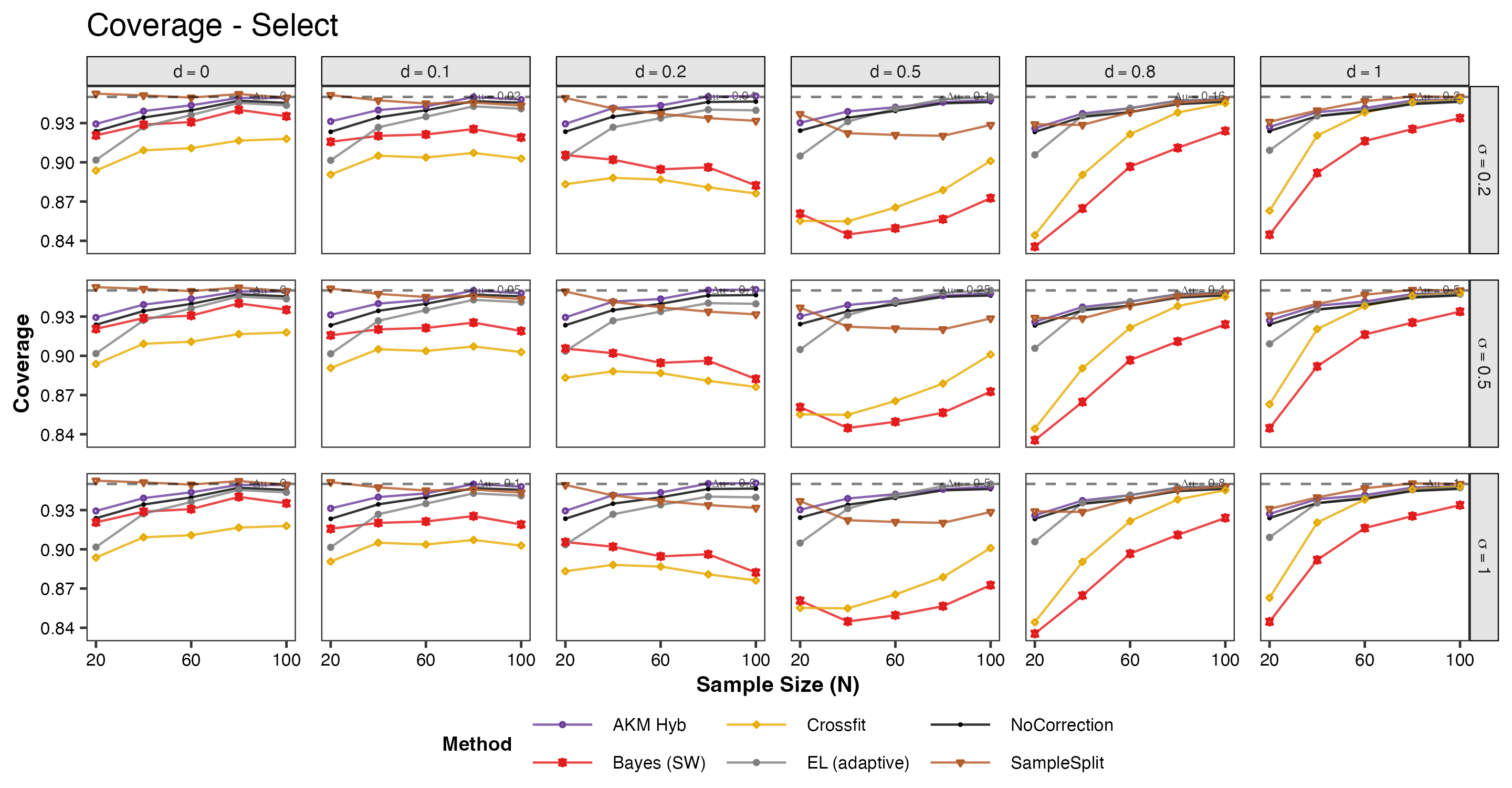}
    \label{fig:cov_select}
\end{center}
\footnotesize
     Notes: Empirical coverage of $\alpha$-level two-sided confidence intervals for $\mu_{\hat k}$, i.e., $\Pr\left(\mu_{\hat k}\in CI_\alpha(\hat\mu_{\hat k})\right)$, as a function of sample size $N$. Panels vary by $\Delta\mu$ (columns) and $\sigma$ (rows). 
\end{figure}

\normalsize

\paragraph{Winner’s-curse bias.}
Figures \ref{fig:wc_global} and \ref{fig:wc_select} report the average estimation error for the global and  select targets.
The magnitude of winner’s-curse bias is largest in regimes with low signal-to-noise ratios: small sample sizes, large noise $\sigma$, and near-ties (small $\Delta\mu$), and it diminishes as $N$ increases or the gap $\Delta\mu$ grows.
The direction of bias reflects how aggressively a method corrects for selection: The plug-in estimator  exhibits a large upward bias in tie/near-tie settings, while some resampling-based corrections can over-shoot and become negatively biased in these same regimes (most visibly for the more aggressive $m$-out-of-$n$ choice with smaller $m$).

Methods that explicitly decouple selection from estimation such as sample splitting and cross-fitting produce biases close to zero for the select target, especially under ties or at the kink, at the cost of reduced effective sample size.

\paragraph{Mean squared error (MSE)}
Figures~\ref{fig:mse_global} and \ref{fig:mse_select} summarize the MSE trade-off.
The ranking of methods depends on whether the problem is a tie/near-tie or a clear-winner regime.
In ties and near-ties, bias reduction is valuable, yet overly aggressive correction can inflate MSE through increased variance and/or negative bias.
This is most apparent when using a small $m$ in the $m$-out-of-$n$ bootstrap, which can lead to systematic negative bias and larger MSE.
In contrast, when $\Delta\mu$ is large (clear separation), the winner’s-curse bias is already small and additional correction is typically unnecessary; in these regimes, the plug-in estimator often dominates in the MSE because it avoids the extra variability introduced by correction or sample splitting.

\paragraph{Regret}
Regret is driven by selection mistakes choosing the inferior arm, rather than by post-selection bias correction. From the simulation, the expected regret has the closed form
\[
\mathbb{E}[\mathrm{Regret}]
=
|\mu_1-\mu_2|\,
\Phi\!\left(
-\frac{|\mu_1-\mu_2|}{\sigma\sqrt{1/n_1+1/n_2}}
\right),
\]
where $\Phi(\cdot)$ is the standard normal CDF. Procedures that devote only part of the sample to selection (e.g., sample splitting or cross-fitting) increase the variance of the comparison $\hat\mu_1-\hat\mu_2$, thereby raising the probability of selecting the inferior arm and increasing regret relative to selecting with the full sample. For a sample split with equal size folds, selection uses $n_k/2$ observations per arm, implying
\[
\mathbb{E}[\mathrm{Regret}_{\mathrm{SS}}]
=
|\mu_1-\mu_2|\,
\Phi\!\left(
-\frac{|\mu_1-\mu_2|}{\sigma\sqrt{1/(n_1/2)+1/(n_2/2)}}
\right)
=
|\mu_1-\mu_2|\,
\Phi\!\left(
-\frac{|\mu_1-\mu_2|}{\sigma\sqrt{2(1/n_1+1/n_2)}}
\right).
\]

Bootstrap-based bias corrections have the same regret as the plug-in rule because the selection decision still uses the full-sample comparison $\hat\mu_1-\hat\mu_2$. 

In clear-winner regimes, the plug-in rule is therefore preferred when regret is the primary objective, whereas under ties ($\mu_1=\mu_2$) regret is uninformative since $\mathrm{Regret} = 0$.

\paragraph{Coverage}
\Cref{fig:cov_global} and  \Cref{fig:cov_select} report the empirical coverage of $95\%$ confidence intervals for the global and select targets.
In exact ties, sample splitting yields close to nominal coverage for both targets, whereas cross-fitting can under-cover at small $N$.\footnote{\Cref{app:Cross-Fit-Coverage} provides a detailed discussion of why cross-fitting can under-cover in this setting.}
When Cohen's $d$ is moderate, the empirical likelihood and AKM24 are the preferred methods.
When Cohen's $d$ is large, the plug-in approach yields intervals whose coverage converges to the target level as $N$ grows.

The bootstrap-of-bootstrap is the most computationally expensive method, yet it underperforms relative to empirical likelihood. Empirical likelihood requires root-finding, which can be slower than closed-form alternatives, but it remains relatively fast compared to resampling methods.

\subsection{Scenario-based recommendations}

\begin{table}[!ht]
\centering
\caption{Winner's Curse Scenarios and Recommended Methods}
\vspace{0.5em}
\label{tab:wc_scenarios_mse}
\renewcommand{\arraystretch}{1.15}
\setlength{\tabcolsep}{6pt}
\resizebox{.98\textwidth}{!}{
 \begin{NiceTabular}{c|cc|cc|cc|c}                                             
  \CodeBefore                                                                   
    \rectanglecolor{ECBlue}{3-2}{5-2}                                           
    \rectanglecolor{ECRose}{3-4}{4-5}         
    \rectanglecolor{ECSage}{5-4}{6-5}                                           
    \rectanglecolor{ECPeach}{4-6}{5-6}                                          
    \rectanglecolor{ECLavender}{4-7}{5-7}                                       
    \rectanglecolor{ECSage}{4-8}{6-8}                                           
    \rectanglecolor{ECSand}{4-3}{5-3}                                           
  \Body      
  \rowcolor{ECHeader}  
   & \multicolumn{2}{c|}{\textbf{Bias}}
    & \multicolumn{2}{c|}{\textbf{MSE}} 
    & \multicolumn{2}{c|}{\textbf{Coverage}}            
    & \Block{2-1}{\textbf{Regret}} \\     
  \rowcolor{ECHeader}
    \textbf{Cohen's $d$} 
    & \textbf{Select}  
    & \textbf{Global}  
    & \textbf{Select}          
    & \textbf{Global} 
    & \textbf{Select}
    & \textbf{Global}  
    & \\                      
    \hline    
  $d=0$  
    & \Block{3-1}{\textbf{CF}}
    & \mCF{CF} 
    & \Block{2-2}{\textbf{Bayes}}
    &             
    & \mCF{SS}                                                 
    & \mCF{SS}
    & \mNA{Does not matter} \\ 
    \cline{3-3}\cline{6-8}
  $d = 0.2$  
    &                          
    & \Block{2-1}{\textbf{Bootstrap}}
    &           
    &    
    & \Block{2-1}{\textbf{AKM24}} 
    & \Block{2-1}{\textbf{EL}}
    & \Block{3-1}{\textbf{Plug-in}} \\
    \cline{4-5} 
  $d = 0.5$                  
    &
    &          
    & \Block{2-2}{\textbf{Plug-in}}
    &          
    &                          
    &
    & \\         
    \cline{2-3}\cline{6-7}
  $d = 0.8$  
    & \mPlugin{Plug-in}        
    & \mPlugin{Plug-in}  
    &     
    &
    & \mPlugin{Plug-in} 
    & \mPlugin{Plug-in}
    & \\      
        \hline\hline
  A/B Testing Platform
    & \mCF{CF}
    & \mCF{CF} / \mBayes{Bayes}
    & \mBayes{Bayes}
    & \mBayes{Bayes}
    & \mEL{EL}
    & \mEL{EL}
    & \mPlugin{Plug-in} \\
  \end{NiceTabular}
}
\vspace{0.5em}

\footnotesize
Note: Each cell reports the method recommended by the simulation for the corresponding metric. SS denotes sample splitting, CF denotes cross-fitting. Bootstrap denotes bootstrap bias correction. Plug-in denotes the uncorrected plug-in estimator. EL denotes the adaptive empirical likelihood procedure. AKM24 refers to \citep{andrews2024inference}. The A/B Testing platform is in \Cref{sec:AB-Testing-Platform}.
\end{table}

Table~\ref{tab:wc_scenarios_mse} summarizes a practical rule-of-thumb for experimenters who might need to choose a winner's curse correction method.

\begin{itemize}
    
    \item When $d=0$ (ties), selection-induced bias is most severe. In this case, which is common in Scenario 1,  cross-fitting is preferable for controlling the mean (bias) under both targets, while bootstrap-style correction can be recommended when prioritizing MSE.
    
    \item When $d \in (0.2, 0.5)$ but the gap is moderate, which might be common in Scenarios 2 and 3, cross-fitting or bootstrap correction are strong choices for controlling mean bias, whereas MSE performance often favors either Bayes method or the plug-in estimator depending on $(N,\sigma)$. For Regret, the plug-in estimator dominates.
    
    \item When $d\ge 0.8$, the problem is effectively “easy”: selection is reliable, the winner’s-curse bias is small, and the plug-in estimator is recommended across objectives, including regret. 
    
    \item For coverage, sample splitting is recommended under ties, empirical likelihood and AKM24 under moderate gaps, and plug-in intervals in the large-gap regime. Bootstrap methods are generally less recommended.

\end{itemize}

\subsection{Robustness/Sensitivity Analysis}
We provide additional simulation, scenario-based recommendations for (1) when arms are binomial and (2) for many (normals) arms. The details are in Appendix \ref{app:Simulation-Details}.

\subsubsection{Binomial arms}

To verify that the results do not depend  on the Gaussian noise assumption, we re-run the two-arm Cohen's $d$ grid with Bernoulli outcomes $Y_{ik}\sim\text{Bernoulli}(p_k)$. We sweep base rates $p_1 \in \{0.1, 0.3, 0.5\}$ and target effect sizes  $d \in \{0, 0.1, 0.2, 0.5, 0.8, 1.0\}$, with $p_2$ chosen so that the Cohen's $d$ matches the column label: $  d \;=\; \frac{p_2 - p_1}{\sqrt{\tfrac{1}{2}\big(p_1(1-p_1) +  p_2(1-p_2)\big)}}.$ Figures~\ref{fig:dgrid-bias-global}--\ref{fig:dgrid-cov-select}  report the same simulation with Bernoulli outcomes. Bias and MSE rankings are essentially unchanged. Coverage is the dimension where the noise distribution matters: Sample Splitting and \cite{andrews2024inference} under-cover at small $N$, while the empirical-likelihood interval remains  near nominal under both DGPs and is the safer default for binary outcomes.      
  
\subsubsection{Many arms}

We extend the simulation to $K \in \{2, 3, 4\}$ arms with true means drawn from a Gaussian prior,  $\tau_k \overset{iid}{\sim} \mathcal{N}(1, \sigma_0^2)$, and observations $Y_i \mid T_i = k \sim \mathcal{N}(\tau_k, \sigma^2)$. The prior standard deviation  $\sigma_0 \in \{0, 0.05, 0.1, 0.2, 0.5, 1\}$ controls treatment-effect  heterogeneity, $\sigma \in \{0.1, 0.2, 0.5, 1\}$, and $N = 100$ is fixed and balanced    across arms. Method-by-method generalizations are summarized in  Appendix Table~\ref{tab:k-arm-changes}. Figures~\ref{fig:prior_K_bias_global_app}--\ref{fig:prior_K_cov_select_app} report the  full results faceted by $\sigma_0$ (columns) and $\sigma$ (rows), with  $K$ on the $x$-axis.   The qualitative patterns extend consistently to $K > 2$.

\section{A/B Testing Platform Calibration}\label{sec:AB-Testing-Platform}

To study the effectiveness of the methods in real-world settings, we calibrate the simulation to data from the Optimizely A/B experimentation platform from \citet{berman2022false}. The simulation mimics how the methods would perform if deployed on the platform.
There are two main differences: (1) we consider many arms and (2) the underlying data generating process is more complicated and contains a mix of small effects and null effects.

 \begin{figure}[!ht]                                                             
      \centering                                                                
      \caption{Winner's Curse Bias (Binomial, Fixed Conversion Rate)}           
      \begin{subfigure}{0.48\textwidth}                                      
          \centering                                                            
          \includegraphics[width=0.9 \textwidth]{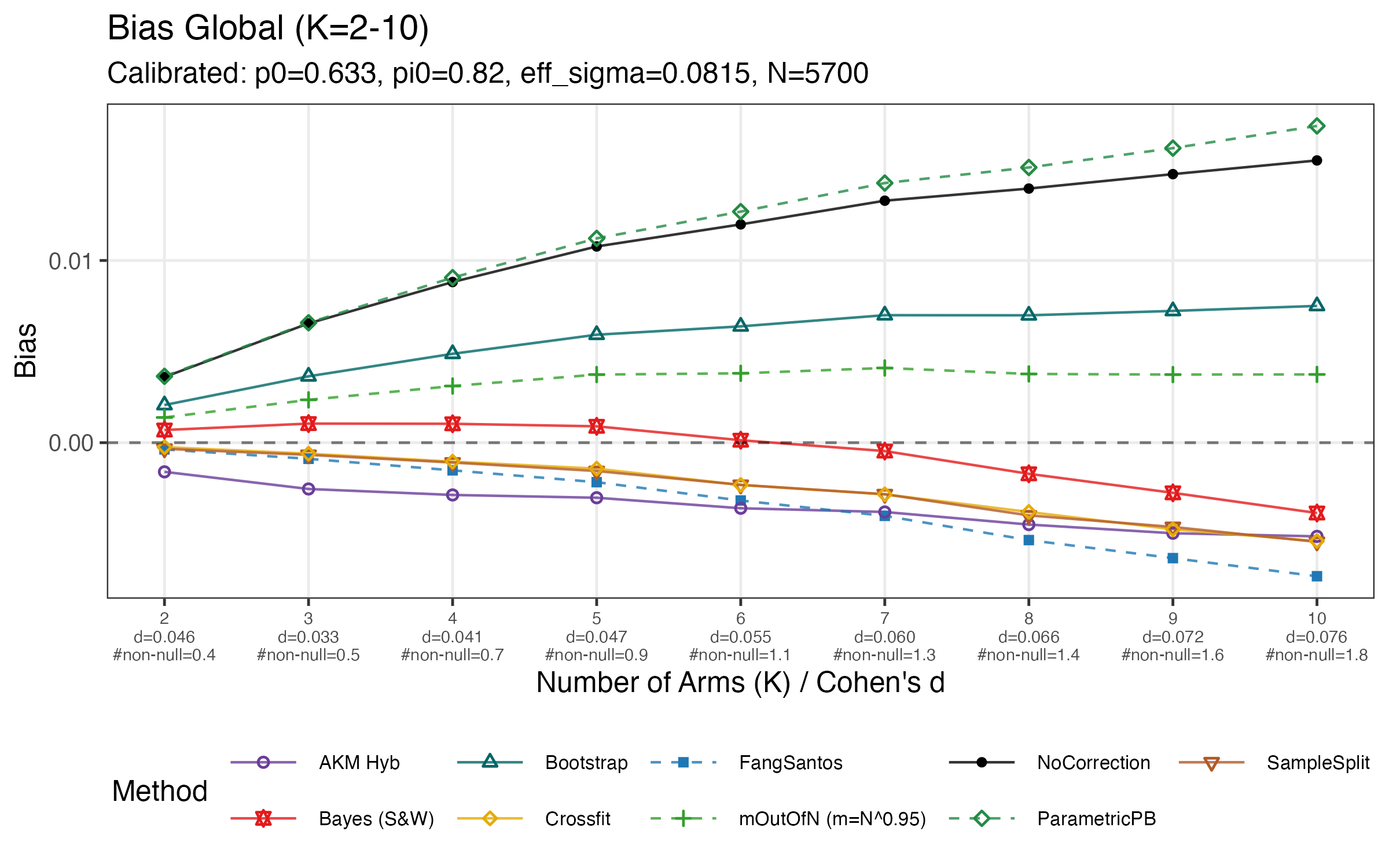}   
          \caption{Global}                                                      
          \label{fig:binom_fixed_bias_global}
      \end{subfigure}                                                           
      \hfill      
      \begin{subfigure}{0.48\textwidth}                                      
          \centering
          \includegraphics[width=0.9 \textwidth]{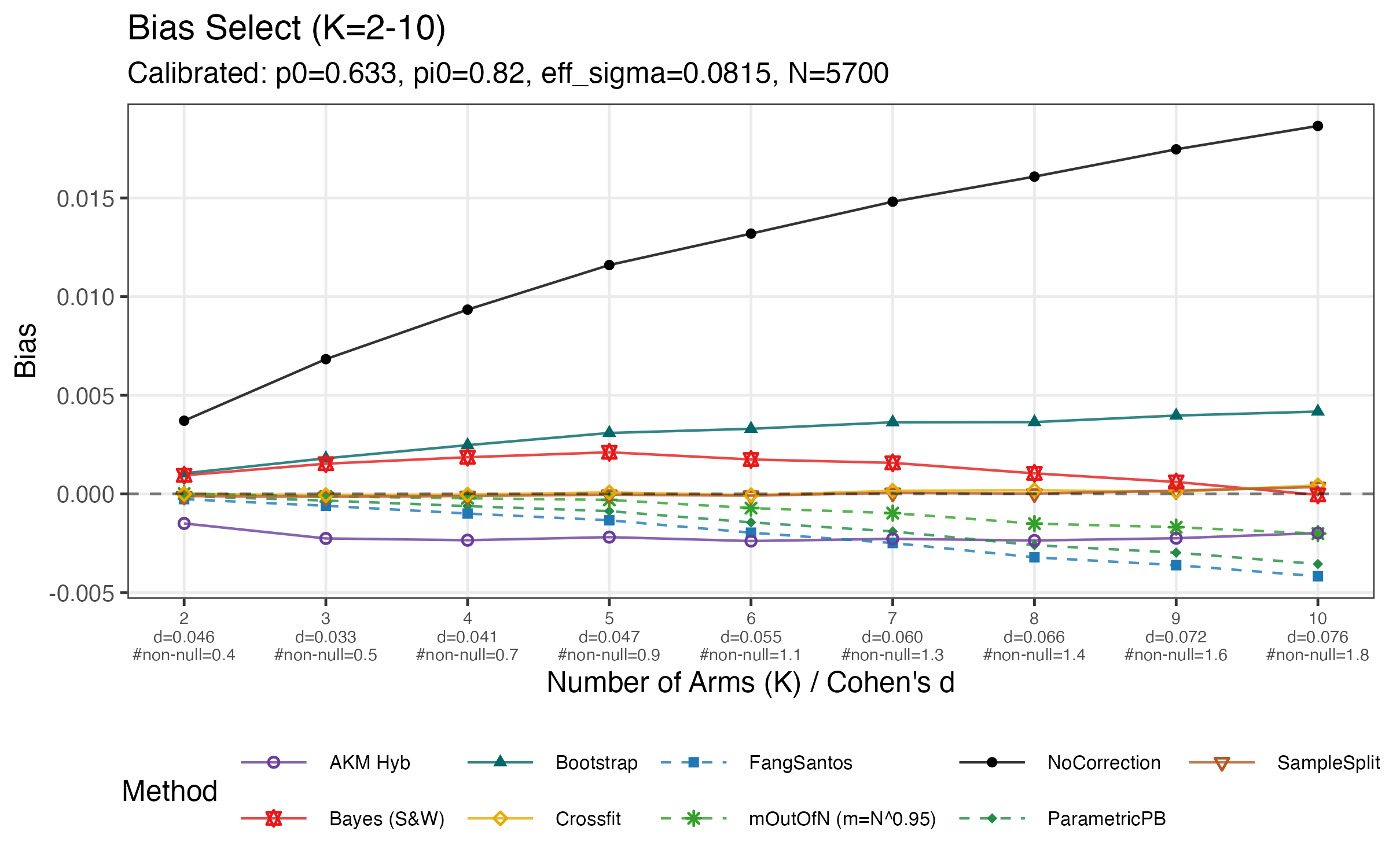}   
          \caption{Select}
          \label{fig:binom_fixed_bias_select}                                   
      \end{subfigure}
      \label{fig:binom_fixed_bias}                                              
      \footnotesize
      Notes: WC bias as a function of sample size $N$ under the binomial DGP
  with fixed base conversion rate. Panels vary by $K$ (number of arms).         
  \end{figure}
                                            
  \begin{figure}[h]
      \centering
      \caption{MSE (Binomial, Fixed Conversion Rate)}                           
      \begin{subfigure}[b]{0.48\textwidth}
          \centering                                                            
          \includegraphics[width=0.9 \textwidth]{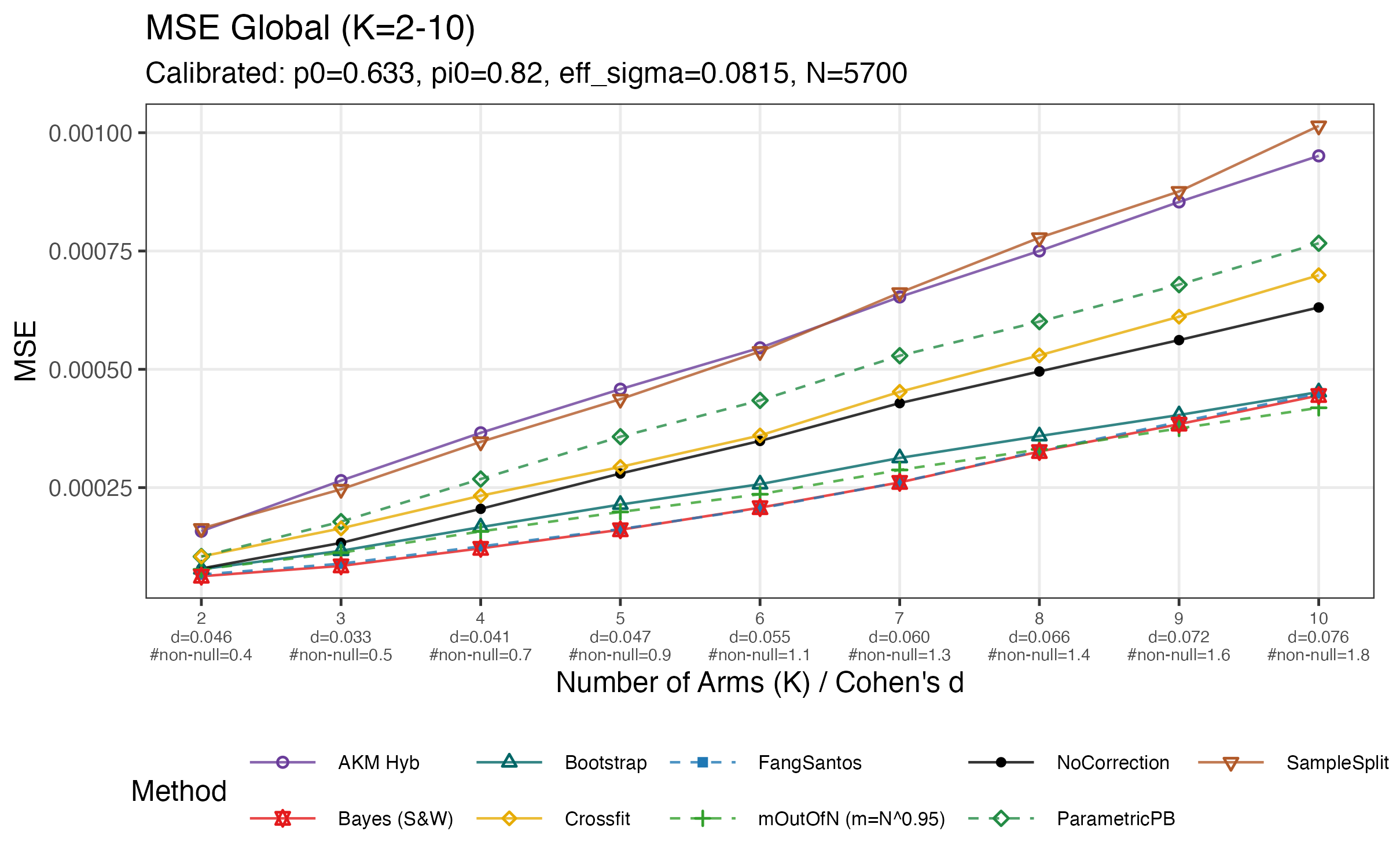}
          \caption{Global}                                                      
          \label{fig:binom_fixed_mse_global}
      \end{subfigure}                                                           
      \hfill      
      \begin{subfigure}[b]{0.48\textwidth}
          \centering                                                            
          \includegraphics[width=0.9 \textwidth]{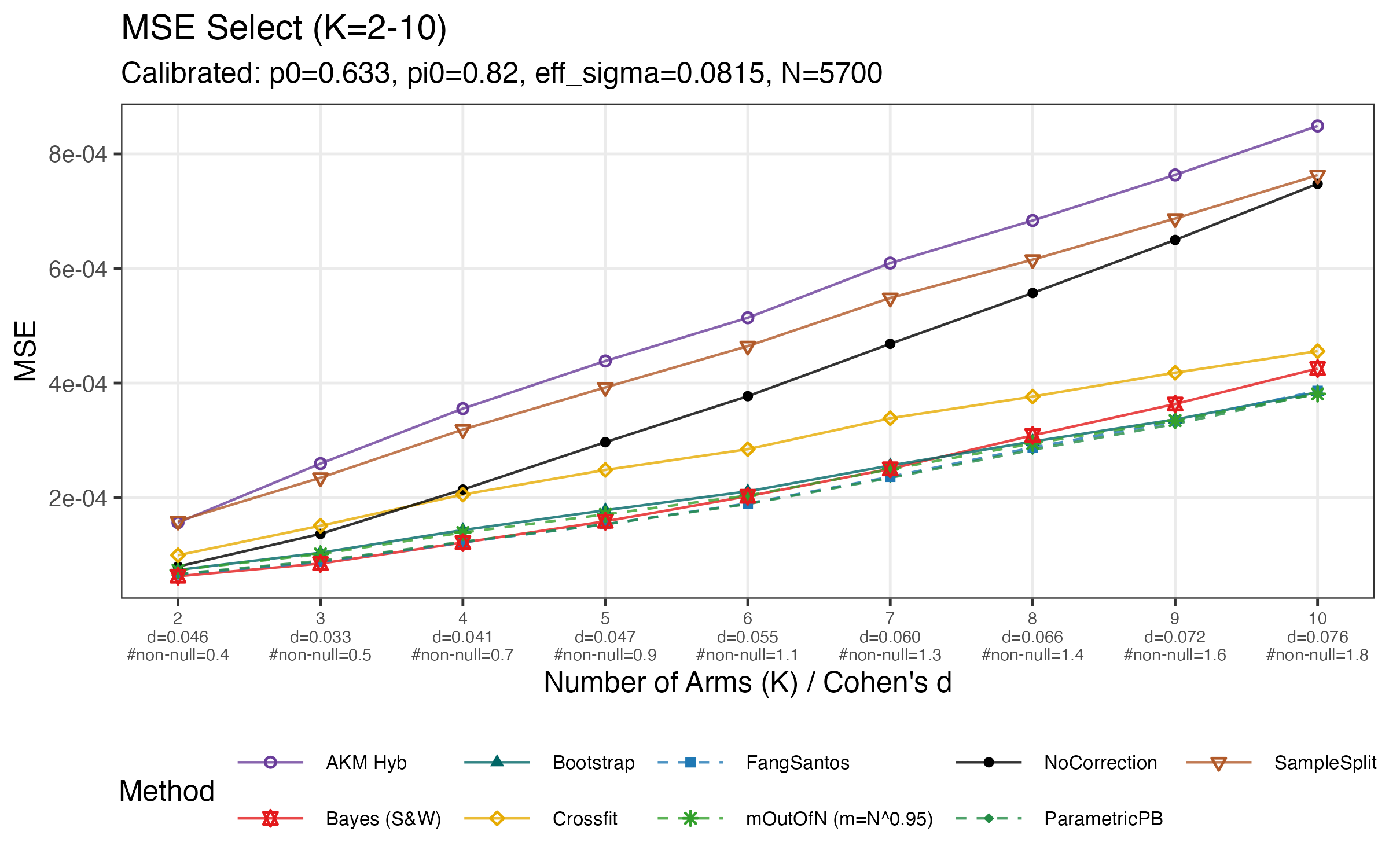}
          \caption{Select}                                                      
          \label{fig:binom_fixed_mse_select}
      \end{subfigure}
      \label{fig:binom_fixed_mse}                                               
      \footnotesize
      Notes: MSE as a function of sample size $N$ under the binomial DGP with   
  fixed base conversion rate. Panels vary by $K$.                               
  \end{figure}
                                                  
  \begin{figure}[h]
      \centering
      \caption{Coverage (Binomial, Fixed Conversion Rate)}
      \begin{subfigure}[b]{0.48\textwidth}
          \centering
          \includegraphics[width=0.9 \textwidth]{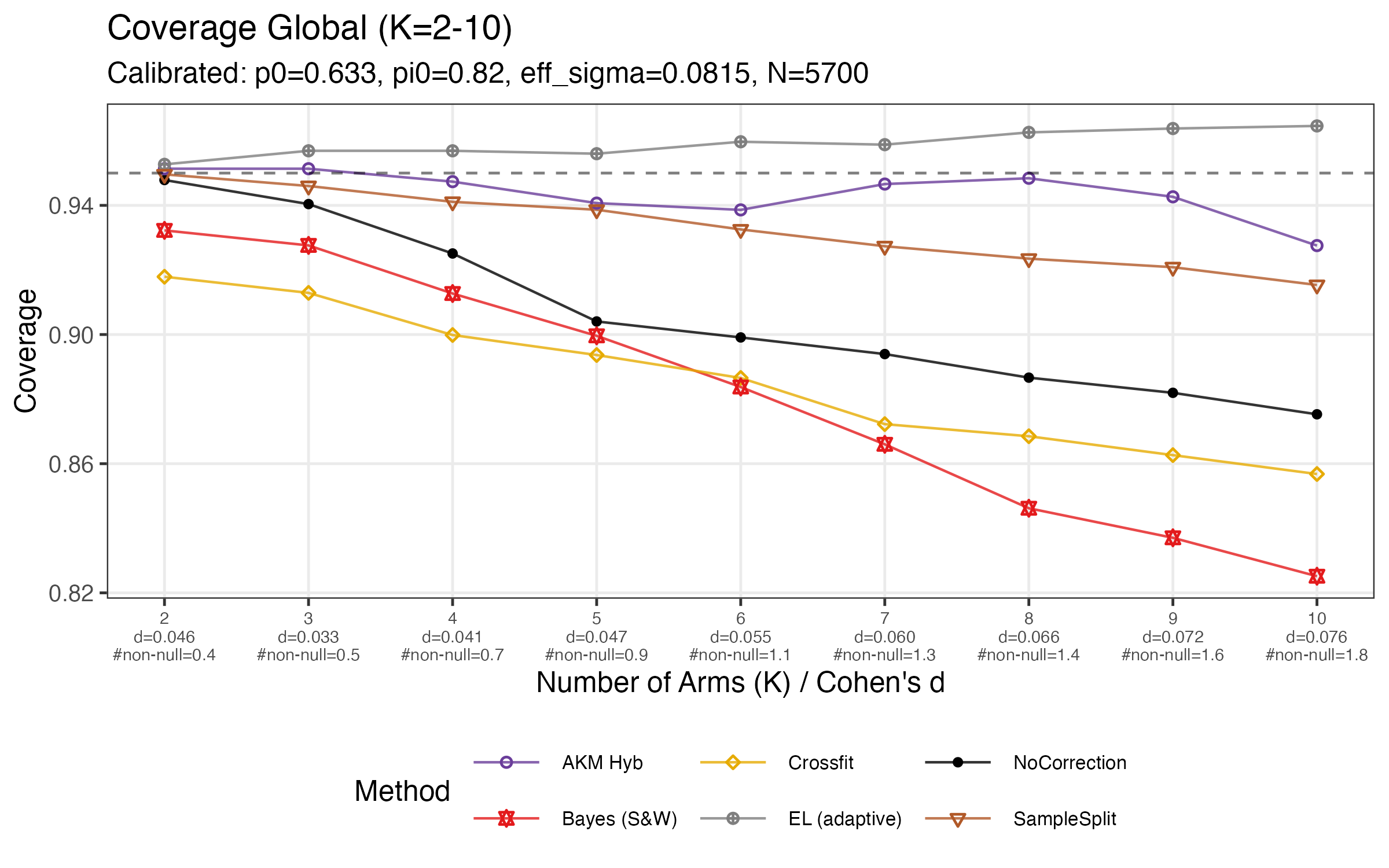}
          \caption{Global}                                                      
          \label{fig:binom_fixed_cov_global}
      \end{subfigure}                                                           
      \hfill      
      \begin{subfigure}[b]{0.48\textwidth}                                      
          \centering
          \includegraphics[width=0.9 \textwidth]{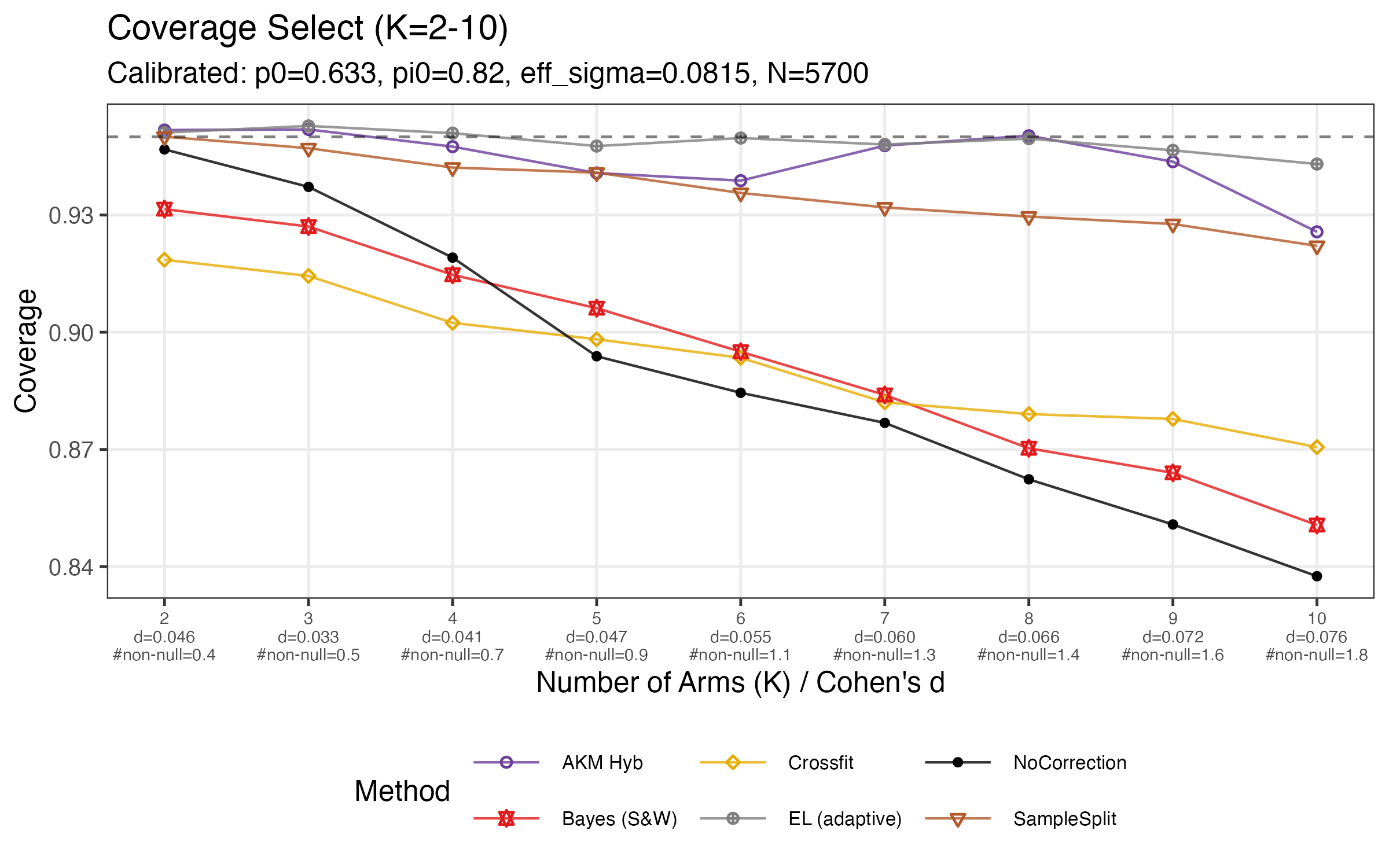}    
          \caption{Select}
          \label{fig:binom_fixed_cov_select}
      \end{subfigure}                                                           
      \label{fig:binom_fixed_cov}
      \footnotesize                                                             
      Notes: Coverage as a function of sample size $N$ under the binomial DGP
  with fixed base conversion rate. Panels vary by $K$. The dashed line indicates
   the nominal 95\% level.
  \end{figure}                                                                                
  \begin{figure}[h]
      \centering
      \caption{Regret (Binomial, Fixed Conversion Rate)}
      \includegraphics[width=0.5\textwidth]{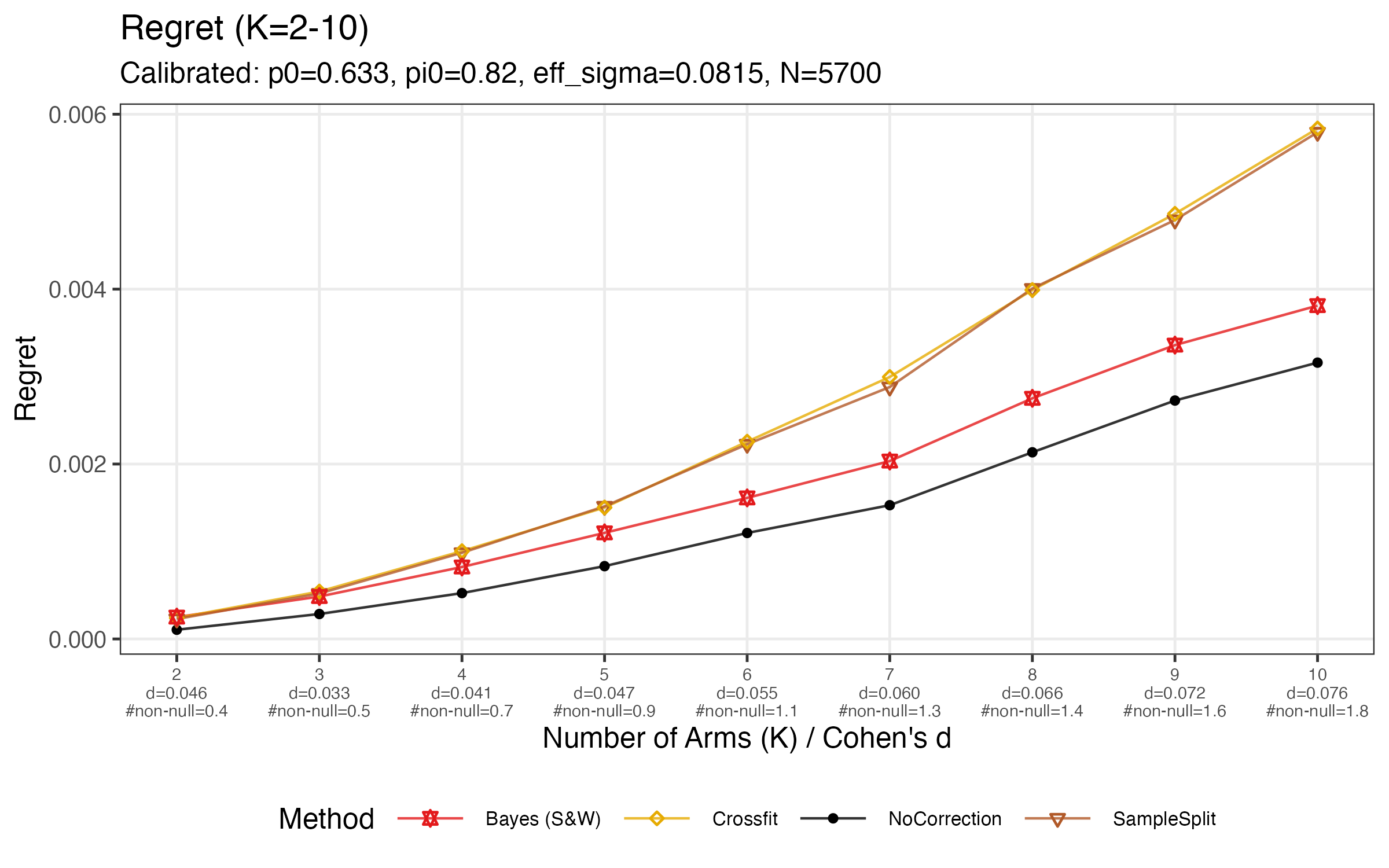}        
      \label{fig:binom_fixed_regret}
      
      \footnotesize                                                             
      Notes: Regret as a function of sample size $N$ under the binomial DGP with
   fixed base conversion rate. Panels vary by $K$.                              
  \end{figure}   
  
In contrast to our prior simulation's data generating process, outcomes here are binary: each visitor $i$ in arm $k$ produces a Bernoulli outcome $Y_{ik} \sim \text{Bernoulli}(p_k)$, representing a conversion event. The conversion probability $p_k$ is constructed from a null/non-null mixture. Further, each arm is independently designated as null with probability $\pi_0$ and non-null arms receive a treatment effect $\tau_k \sim \mathcal{N}(\mu_e, \sigma_e^2)$, so that $p_k = p_0 + D_k \cdot \tau_k$ where $D_k \sim \text{Bernoulli}(1 - \pi_0)$. 

We set $p_0 =0.633$, $\pi_0 = 0.82$, $\mu_e = -0.0032$, and $\sigma_e = 0.0815$ to match the empirical distribution of treatment effects observed on the Optimizely  platform, and fix $N = 5,700$ with balanced allocation across arms.\footnote{$\pi_0 = 0.82$ implies 82\% of the arms are null effects.} We vary the number of arms from $K = 2$ to $K = 10$. As $K$ grows, selection bias and the Winner's Curse increase. We report standardized effect size as Cohen's $d$. 
  Across Monte Carlo replications $r = 1, \ldots, R$, let $\mu_{(1)}^{(r)} \geq 
  \mu_{(2)}^{(r)}$ denote the top-two true arm rates and $\hat{s}_{(1)}^{(r)},  
  \hat{s}_{(2)}^{(r)}$ their sample standard deviations.   Cohen's $d$ increases with the number of arms and ranges from $0.04-0.08$, which implies small effects.\footnote{We define     $d = \frac{\tfrac{1}{R}\sum_{r=1}^{R} \left(\mu_{(1)}^{(r)} -  
  \mu_{(2)}^{(r)}\right)}         
               {\tfrac{1}{R}\sum_{r=1}^{R}                  
  \sqrt{\tfrac{1}{2}\!\left(\hat{s}_{(1)}^{(r)\,2} +                            \hat{s}_{(2)}^{(r)\,2}\right)}}$
  the MC-averaged top-two gap divided by the MC-averaged pooled empirical SD of 
  those two arms.}
Figures~\ref{fig:binom_fixed_bias}--\ref{fig:binom_fixed_regret} report the results. 

The qualitative patterns broadly mirror those of the prior simulation. For bias (Figure~\ref{fig:binom_fixed_bias}), cross-fitting performs best for both the selective and global targets at small $K$ and small $d$; as the number of arms and Cohen's $d$ increase, the Bayes shrinkage estimator and bootstrap corrections become more competitive. For MSE (Figure~\ref{fig:binom_fixed_mse}), the Bayes estimator performs best at small $K$, while bootstrap corrections overtake it as $K$ grows. For coverage (Figure~\ref{fig:binom_fixed_cov}), empirical likelihood and \cite{andrews2024inference} methods provide almost nominal coverage across the board. 
For regret (Figure~\ref{fig:binom_fixed_regret}), the plug-in estimator achieves the lowest regret across all $K$.  
  
\section{Conclusion}

Addressing the winner's curse is important when the results of experiments are used to decide which treatment to deploy and to inform policy making. Our paper identifies that by itself, the term "winner's curse" does not imply a well-defined objective, and there are three different goals that an experimenter might care about: the global winner's curse, the selected winner's curse, and regret (to the optimal decision). While each of these three objectives have been explored in past literature, we provide a framework to evaluate all three and discuss when each case is relevant. Our Winner's Curse identity in \Cref{eq:Winners-Curse-Identity} links these three objectives together. 

Our discussion of the the problem from a statistical perspective showed that the standard plug-in estimator (the max of two mean estimator) exhibits a Winner's Curse, causing an upward bias over the true value. The estimator contains a kink, or non-differentiable point, when the group means are equal, which poses a challenge for estimation and inference.

By reviewing and categorizing the different solutions in the literature, we establish the statistical properties for each estimator: bias, mean-squared error (MSE), and coverage of the confidence interval around the true value. We then provide a simulation study in \Cref{sec:Results-Comparison} to contrast these solutions, where we focus on the most common scenarios an experimenter might encounter: null effects, small effects, moderate effects, and large effects.

Methodologically, we provide a novel adaptation of empirical likelihood (EL) techniques for conducting inference and constructing confidence intervals. EL offers a non-parametric approach for likelihood-based inference. We first account for the kink in the problem by adapting critical value to be a chi-bar-squared distribution instead of the typical $\chi^2$ distribution used for a likelihood ratio test. We then propose an adaptive procedure based on pre-tests and theoretically ensure this adaptive procedure provides pointwise asymptotic coverage of the confidence interval (\Cref{thm:pointwise-coverage-two-arms} and \Cref{thm:pointwise-coverage-many-arms}). In practice, since we cannot know whether we are at the kink, the adaptive procedure ensures that our constructed confidence intervals attains proper coverage both at or away from the kink.

We then provide a calibrated, simulation to a real-world A/B testing platform, with many arms and non-normally distributed arms.
Increasing the number of arms does have implications for computational complexity and sample size. If more arms imply smaller samples per arms, many of the methods that underperform in the near-tie regime suffer more. This is especially true for bootstrap methods. The computational burden of bootstrap methods with multiple arms also makes them less feasible, making the applicability of our EL method even more relevant.

\Cref{tab:wc_scenarios_mse} summarizes our main simulation results. The optimal correction method depends critically on the anticipated treatment effect regime. When group means are equal or close (at or near the kink), cross-fitting performs best for bias reduction, while bootstrap bias correction achieves the lowest MSE. The efficiency loss inherent in sample-splitting explains why cross-fitting underperforms on MSE despite its strong bias properties. As group means diverge, the plug-in estimator increasingly dominates: It performs best for large treatment effects across all objectives and consistently minimizes regret regardless of effect size. For inference, our empirical likelihood method achieves correct coverage across the simulation designs and proves less sensitive to tuning parameters than resampling alternatives.

There are direct implications from our results for applied work. First, the choice of the global winner's curse ($WC_{global}$), selected winner's curse ($WC_{select}$), and regret will depend on the setting the decision-maker is facing. If they aim to choose the best option and not re-optimize in the future, then the selected winner's curse will apply as they care about bias and uncertainty of their chosen option. If they aim to repeatedly experiment and choose the best arm over time, then our global winner's curse analysis will provide them guidance. 
However, if the decision-maker is concerned about the loss from not implementing the optimal treatment, then our analysis with regret will apply. 
While the three scenarios are linked by our identity, they are not identical and thus the optimal solutions for each will be different. As a result, decision-makers need to determine which solution best suits their setting.

\begin{singlespace}
\bibliographystyle{ecta}
\bibliography{references}
\end{singlespace}

\newpage

\appendix
\crefalias{section}{appendix}
\crefalias{subsection}{appendix}

\newcommand{\appendicesstart}{
  \clearpage
  \setcounter{table}{0}
  \setcounter{figure}{0}
  \setcounter{equation}{0}
  \renewcommand{\thetable}{A.\arabic{table}}
  \renewcommand{\thefigure}{A.\arabic{figure}}
  \renewcommand{\theequation}{A.\arabic{equation}}
}

\appendicesstart

\part*{Supplementary Materials}  
\addcontentsline{toc}{part}{Supplementary Materials}
\setcounter{page}{1}

\section{Properties of the plug-in estimator}

\subsection{Asymptotic distribution
\label{app:Plugin-Asymptotics}}

We first establish consistency of the plug-in estimator.
Let $\hat{\mu}_n = (\hat{\mu}_{1n}, \hat{\mu}_{2n})^\top$ be the vector of sample means from two independent groups and $\mu = (\mu_1, \mu_2)^\top$ be their true means. By the Multivariate Central Limit Theorem, we have,
\begin{equation}
    \sqrt{n}(\hat{\mu}_n - \mu) \xrightarrow{d} Z \sim N\left( \begin{pmatrix} 0 \\ 0 \end{pmatrix}, \Sigma \right)
\end{equation}
where $\Sigma = \text{diag}(\sigma_1^2, \sigma_2^2)$. Let $Z = (Z_1, Z_2)^\top$.

We are interested in the asymptotic distribution of $\hat{\theta}_n = \max(\hat{\mu}_{1n}, \hat{\mu}_{2n})$. The parameter of interest is $\theta = \max(\mu_1, \mu_2)$.
Consider the normalized difference:
\[
\sqrt{n}(\hat{\theta}_n - \theta) = \sqrt{n}(\max(\hat{\mu}_{1n}, \hat{\mu}_{2n}) - \max(\mu_1, \mu_2))
\]
We analyze the limit case by case.

\textbf{Case 1: $\mu_1 > \mu_2$.}
In this case, $\theta = \mu_1$. From the weak law of large numbers, we have $\hat{\mu}_{1n} \xrightarrow{p} \mu_1$ and $\hat{\mu}_{2n} \xrightarrow{p} \mu_2$. For sufficiently large $n$, $\hat{\mu}_{1n} > \hat{\mu}_{2n}$ with probability approaching 1. 
We define $A_n = \{\hat{\mu}_{1n} \ge \hat{\mu}_{2n}\}$ to be this event and $ \mathbf{1}\{A_n\}\xrightarrow{p} (1+ o_p(1))$. Similarly, we define $A_n^c = \{\hat{\mu}_{1n} < \hat{\mu}_{2n}\}$ and $\mathbf{1}\{A_n^c\}\xrightarrow{p} o_p(1)$.
\begin{align*}
    \sqrt{n}(\hat{\theta}_n - \theta) &= \sqrt{n}(\hat{\mu}_{1n} - \mu_1) \mathbf{1}\{A_n\} + \sqrt{n}(\hat{\mu}_{2n} - \mu_1) \mathbf{1}\{A_n^c\} \\
    &= \sqrt{n}(\hat{\mu}_{1n} - \mu_1) \cdot (1 + o_p(1)) + o_p(1) \\
    &= \sqrt{n}(\hat{\mu}_{1n} - \mu_1) + o_p(1) \\
    &\xrightarrow{d} Z_1
\end{align*}

\textbf{Case 2: $\mu_2 > \mu_1$.}
By symmetry, $\theta = \mu_2$ and we have $ \sqrt{n}(\hat{\theta}_n - \theta) \xrightarrow{d} Z_2$ .

\textbf{Case 3: $\mu_1 = \mu_2$. }
This case represents the kink in the $\max$ function. We let $\mu_1 = \mu_2 = \mu^0$, then $\theta = \max\{\mu_1,\mu_2\}= \mu^0$.
\begin{align*}
    \sqrt{n}(\hat{\theta}_n - \theta) &= \sqrt{n}(\max(\hat{\mu}_{1n}, \hat{\mu}_{2n}) - \mu^0) \\
    &= \max(\sqrt{n}(\hat{\mu}_{1n} - \mu^0), \sqrt{n}(\hat{\mu}_{2n} - \mu^0))
\end{align*}
Since the joint vector of normalized differences converges to $(Z_1, Z_2)$ by the Multivariate Central Limit Theorem, applying the Continuous Mapping Theorem yields
\[
    \sqrt{n}(\hat{\theta}_n - \theta) \xrightarrow{d} \max(Z_1, Z_2),
\]
which establishes the consistency of the plug-in estimator.
However, the $\max(Z_1, Z_2)$ is not necessarily Gaussian as it is the maximum of two Gaussian distributions. For example, if $Z_1$ and $Z_2$ are independent Gaussian distributions with variances $\sigma_1^2$ and $\sigma_2^2$, then for any $t \in \mathbb{R}$,
\[
\mathbb{P}\big(\max(Z_1,Z_2) \le t\big)
= \mathbb{P}(Z_1 \le t, Z_2 \le t)
= \Phi\!\left(\frac{t}{\sigma_1}\right) \Phi\!\left(\frac{t}{\sigma_2}\right),
\]
which is not the CDF of a Gaussian distribution. Because the $\max$ function itself is not differentiable at the kink, we cannot apply the Delta method to construct the limiting distribution for this statistic which thus poses a challenge for inference.

\subsection{Finite sample bias \label{app:Plugin-Bias}}

From the asymptotic distribution, we can show that bias of the max of two means estimator, or Winner's Curse (Global), will decay at $\sqrt{n}$ rate at the kink when the two means are equal. 
We first define $X_n = \sqrt{n}(\hat\theta_n - \theta) = \max \{\sqrt{n}(\hat\mu_1 - \mu_1),\sqrt{n}(\hat\mu_2 - \mu_2)\}$ and $X=\max\{Z_1,Z_2\}$.  We then define bias as $\sqrt{n}b_n = \sqrt{n}(E[\hat\theta_n] - \theta) = E[X_n]$. 

From our asymptotic results in the prior subsection, we have $\sqrt{n}b_n = E[X_n] \to E[\max\{Z_1,Z_2\}]$. Then the bias is 
\[b_n = \frac{E[\max(Z_1,Z_2)]}{\sqrt{n}} + o_p\left(\frac{1}{\sqrt{n}}\right)\]
which implies the bias decays at $\sqrt{n}$ rate. 

For $Z_1,Z_2$ as independent mean-zero normals with variances, $\sigma_1,\sigma_2$, we can compute the bias term explicitly. We first rewrite
\[
\max(Z_1, Z_2)
= \frac{Z_1 + Z_2 + |Z_1 - Z_2|}{2}
\] which implies
$\mathbb{E}[\max(Z_1, Z_2)] = \frac{1}{2}\,\mathbb{E}|Z_1 - Z_2|$.
For the difference of independent normals, $Z_1 - Z_2 \sim N(0,\sigma_1^2 + \sigma_2^2)$, we then attain
\[
\mathbb{E}|Z_1 - Z_2| = \sqrt{\frac{2}{\pi}}\sqrt{\sigma_1^2 + \sigma_2^2}.
\]
Plugging this result into the prior max equation, we find
\[
\mathbb{E}[\max(Z_1, Z_2)]
= \sqrt{\frac{\sigma_1^2 + \sigma_2^2}{2\pi}}
\]
which gives us the analytical bias term
\[
b_n
= \mathbb{E}[\hat{\theta}_n] - \theta
= \frac{1}{\sqrt{n}} \sqrt{\frac{\sigma_1^2 + \sigma_2^2}{2\pi}}
+ o_p\left(\frac{1}{\sqrt{n}}\right)
\]
that decays at rate $1/\sqrt{n}$ at the kink.

\subsection{Hadamard differentiability \label{app:Plugin-Hadamard}}

\citet{fang2019inference} formalize that a statistic needs to be fully Hadamard differentiable as a necessary and sufficient condition for the standard nonparametric bootstrap to be consistent. We derive the Hadamard derivative for our plug-in estimator and show that it is only directionally, and not fully, Hadamard differentiable.

We let $\mathbb{D}$ and $\mathbb{E}$ be Banach spaces and $\phi : \mathbb{D}_\phi \to \mathbb{E}$ be a map with $\mathbb{D}_\phi \subset \mathbb{D}$. 
We say that $\phi$ is directionally Hadamard differentiable at $\mu \in \mathbb{D}_\phi$ tangentially to a subset $\mathbb{D}_0 \subset \mathbb{D}$ if there exists a map $\phi'_\mu : \mathbb{D}_0 \to \mathbb{E}$
such that, for every $h \in \mathbb{D}_0$ and for every sequence $t_n \downarrow 0$ and 
$h_n \to h$ with $h_n \in \mathbb{D}_0$ and $\mu + t_n h_n \in \mathbb{D}_\phi$ for all $n$,
\[
    \lim_{n \to \infty} 
    \frac{\phi(\mu + t_n h_n) - \phi(\mu)}{t_n}
    = \phi'_\mu(h).
\]
If the map $h \mapsto \phi'_\mu(h)$ is linear and continuous on $\mathbb{D}_0$, 
then $\phi$ fully Hadamard differentiable. 
The important requirement for the (functional) Delta Method to apply in the usual linear form  (and for the standard bootstrap to be consistent in this setting) is precisely that 
$\phi'_\mu$ is linear and continuous.

We now derive the Hadamard derivative for our functional $\phi(x) = \max(x_1, x_2)$ defined on $\mathbb{R}^2$. Let $\mu = (\mu_1, \mu_2)$ and perturbation $h = (h_1, h_2)$. Let $\phi : \mathbb{R}^2 \to \mathbb{R}$ be given by
    $\phi(x_1,x_2) = \max(x_1,x_2), $
and fix $\mu = (\mu_1,\mu_2)^\top \in \mathbb{R}^2$.
We consider sequences $t_n \downarrow 0$ with $t_n > 0$ and $h_n = (h_{n1},h_{n2})^\top \to h = (h_1,h_2)^\top$,
and compute the limit
\[
    \lim_{n\to\infty}
    \frac{\phi(\mu + t_n h_n) - \phi(\mu)}{t_n}.
\]

\textbf{Case 1: $\mu_1 > \mu_2$}. We define $\Delta = \mu_1 - \mu_2 > 0$.
We have
    $\phi(\mu + t_n h_n)
    = \max(\mu_1 + t_n h_{n1}, \mu_2 + t_n h_{n2})$.
We first observe that
\[
    \mu_1 + t_n h_{n1} \ge \mu_2 + t_n h_{n2}
    \iff
    \Delta \ge t_n (h_{n2} - h_{n1}).
\]
Since $h_n \to h$, the sequence $(h_{n2} - h_{n1})$ is bounded, we know $|h_{n2} - h_{n1}| \le M$ for a constant $M$ and sufficiently large $n$.
Because $t_n \to 0$ for large enough $n$, so $t_n M \to 0$ and $t_n M < \Delta$. Together, we have 
\[
    |t_n (h_{n2} - h_{n1})| \le t_n M < \Delta,
\]
which implies
    $t_n (h_{n2} - h_{n1}) \le \Delta
    \quad\text{for sufficiently large } n$.
Then, we rearrange the equation
    $\mu_1 + t_n h_{n1} \ge \mu_2 + t_n h_{n2}$,
so that
\[
    \phi(\mu + t_n h_n) =  \max(\mu_1 + t_n h_{n1}, \mu_2 + t_n h_{n2}) = \mu_1 + t_n h_{n1}
    \quad\text{for all sufficiently large } n.
\]
Therefore,
\[
    \frac{\phi(\mu + t_n h_n) - \phi(\mu)}{t_n}
    = \frac{\mu_1 + t_n h_{n1} - \mu_1}{t_n}
    = h_{n1}
    \quad\text{for sufficiently enough } n,
\]
and since $h_{n1} \to h_1$,
\[
    \phi'_\mu(h) = \lim_{n\to\infty}
    \frac{\phi(\mu + t_n h_n) - \phi(\mu)}{t_n}
    = h_1.
\]
Thus the directional derivative at $\mu$ in direction $h$ exists and is $\phi'_\mu(h) = h_1$,
which is linear in $h$.

\textbf{Case 2: $\mu_2 > \mu_1$}. By symmetric and similar logic to the prior case, the Hadamard derivative is $h_2$ and is linear in $h$.

\textbf{Case 3: $\mu_1 = \mu_2 = \mu^0$} (the kink). In this case, we have
    $\phi(\mu + t_n h_n)
    = \max(\mu^0 + t_n h_{n1},\, \mu^0 + t_n h_{n2})$,
so
\begin{align*}
    \frac{\phi(\mu + t_n h_n) - \phi(\mu)}{t_n}
    &= \frac{\max(\mu^0 + t_n h_{n1}, \mu^0 + t_n h_{n2}) - \mu^0}{t_n} \\
    &= \frac{\mu^0 + t_n \max(h_{n1},h_{n2}) - \mu^0}{t_n} \\
    &= \max(h_{n1},h_{n2}),
\end{align*}
where we used $t_n > 0$ to factor $t_n$ out of the max in the second line.

Since $h_n \to h$ and the function $(u,v) \mapsto \max(u,v)$ is continuous, we have $\max(h_{n1},h_{n2}) \to \max(h_1,h_2)$ for large enough $n$. 
Thus, we see
\[
    \phi'_\mu(h)
    = \lim_{n\to\infty}
    \frac{\phi(x + t_n h_n) - \phi(x)}{t_n}
    = \max(h_1,h_2).
\]
In this case, the directional derivative exists for every $h$ but the map
\[
    h \mapsto \phi'_\mu(h) = \max(h_1,h_2)
\]
is not linear, so $\phi$ is only \textit{directionally Hadamard differentiable}
at $x$ and not \textit{fully} Hadamard differentiable.

Thus, when the means are equal (at the kink) across arms, the standard nonparametric bootstrap will not apply and adjustments need to be made when using the bootstrap to perform bias correction or for conducting inference. If the means are different (away from the kink), then the bootstrap will apply and no adjustment needs to be made on the algorithm. Managerially, its not possible to know ex-ante which case we are in, which creates a conundrum.

\section{Empirical likelihood details \label{app:EL-details}}

This section provides the technical details for the empirical likelihood (EL) construction of confidence intervals used for
$$\theta = \max(\mu_1,\mu_2) = \phi(\mu_1,\mu_2),$$
where $\mu_1$ and $\mu_2$ are the two population means defined in the main text. We proceed in five steps: 
First, we briefly review EL for the mean vector $\mu=(\mu_1,\mu_2)^\top$.
Second, we derive the profile EL ratio for the non-smooth parameter $\theta=\max(\mu_1,\mu_2)$ and discuss the naive calibration that uses a standard $\chi^2$ critical value.
Third, we correct for the kink using a chi-bar-squared critical value and show how this value is computed analytically.
Fourth, we propose an adaptive EL procedure that uses a pre-test to decide which critical value ($\chi^2$) or chi-bar-squared) to use. We show that the procedure produces pointwise asymptotically consistent confidence intervals. Lastly, we extend the analysis to finding the maximum across more than two groups.

\subsection{Empirical likelihood for a vector of means} 

We have two independent samples $\{X_{1i}\}_{i=1}^{n_1}$ and $\{X_{2j}\}_{j=1}^{n_2}$ with means $\mu_1,\mu_2$ and write the observed data as $x_{1i}$ ($i=1,\dots,n_1$) and $x_{2j}$ ($j=1,\dots,n_2$).
Empirical likelihood treats the (unknown) distribution in each arm non-parametrically by assigning probability weights
\[
p_{1i}\ge 0,\quad \sum_{i=1}^{n_1}p_{1i}=1,
\qquad\text{and}\qquad
p_{2j}\ge 0,\quad \sum_{j=1}^{n_2}p_{2j}=1
\]
to the observed points in each arm \citep{owen2001empirical}.

The non-parametric likelihood of the empirical distribution is
\begin{equation}
\label{eq:EL-npl}
L(\{p_{1i}\},\{p_{2j}\}) =
\prod_{i=1}^{n_1} p_{1i}\;
\prod_{j=1}^{n_2} p_{2j}
\end{equation}
from the independence across group (or arm).
To evaluate the likelihood at a candidate mean vector $\mu=(\mu_1,\mu_2)^\top$, EL imposes the moment restrictions
\begin{equation}
\label{eq:EL-moment}
\sum_{i=1}^{n_1} p_{1i}\,(x_{1i}-\mu_1)=0,
\qquad
\sum_{j=1}^{n_2} p_{2j}\,(x_{2j}-\mu_2)=0,
\end{equation}
which enforce that the weighted means equal $(\mu_1,\mu_2)$.
The \emph{profile empirical likelihood function} for $\mu$ is obtained by maximizing the non-parametric likelihood  function (\Cref{eq:EL-npl}) subject to the simplex constraints for the probability weights and the moment restrictions (\Cref{eq:EL-moment}):
\[
L(\mu) =
\sup_{\{p_{1i}\},\{p_{2j}\}}
\Big\{
L(\{p_{1i}\},\{p_{2j}\})
:\; \sum_{i=1}^{n_1} p_{1i}\,(x_{1i}-\mu_1)=0; \sum_{j=1}^{n_2} p_{2j}\,(x_{2j}-\mu_2)=0
\Big\}.
\]
Typically, we can solve this constrained optimization problem using Lagrange multipliers. Because the two groups (or arms) are independent, the maximization is separate across groups. For $g\in\{1,2\}$, introduce a Lagrange multiplier $\lambda_g$ for the mean restriction and a multiplier $\nu_g$ for the probability constraint.  The maximizing weights have the closed form
\begin{equation}
\label{eq:EL-weights}
p_{gi}(\mu_g) =
\frac{1}{n_g}\,\frac{1}{1+\lambda_g(\mu_g)\,(x_{gi}-\mu_g)}
\end{equation}
for $i=1,\dots,n_g$ and where $\lambda_g(\mu_g)$ is the unique solution to
\begin{equation}
\label{eq:EL-lambda-eq}
\sum_{i=1}^{n_g}\frac{x_{gi}-\mu_g}{1+\lambda_g(\mu_g)\,(x_{gi}-\mu_g)}=0
\end{equation}
under the feasibility condition $1+\lambda_g(\mu_g)(x_{gi}-\mu_g)>0$ for all $i$. Solving the optimization routine, we find that the empirical likelihood is maximized at $\mu_g=\bar X_g$ (the sample mean) where $\lambda_g(\bar X_g)=0$ and $p_{gi}=1/n_g$.

We the define $\hat\mu=(\bar X_1,\bar X_2)^\top$ and denote the unconstrained maximizer and the profile empirical likelihood ratio as
\begin{equation}
\label{eq:EL-ratio-mu}
R_n(\mu)=
\frac{L(\mu)}{L(\hat\mu)} = 
\prod_{g=1}^2
\prod_{i=1}^{n_g}\bigl(1+\lambda_g(\mu_g)(x_{gi}-\mu_g)\bigr)^{-1},
\end{equation}
and the associated EL likelihood ratio statistic (or deviance) as
\begin{equation}
\label{eq:EL-deviance-mu}
W_n(\mu)
\;=\;
-2\log R_n(\mu)
\;=\;
2\sum_{g=1}^2\sum_{i=1}^{n_g}\log\!\bigl(1+\lambda_g(\mu_g)(x_{gi}-\mu_g)\bigr).
\end{equation}
Under standard regularity conditions, EL satisfies a Wilks-type theorem \citep{owen2001empirical}. For example, if the true mean vector is $\mu^0$, then
\begin{equation}
\label{eq:EL-wilks}
W_n(\mu^0)\ \xrightarrow{d}\ \chi^2_2,
\end{equation}
and $W_n(\mu)$ takes the place of a classical Wald statistic.

An asymptotic $1-\alpha$ confidence region for $\mu$ is therefore
\begin{equation}
\label{eq:EL-CR-mu}
\mathcal{C}^{EL}_{1-\alpha}(\mu)
\;=\;
\Big\{\mu\in\mathbb{R}^2:\; W_n(\mu)\le \chi^2_{2,\,1-\alpha}\Big\},
\end{equation}
where $\chi^2_{k,\,1-\alpha}$ denotes the $(1-\alpha)$ quantile of a $\chi^2_k$ distribution. 

\subsection{Profile empirical likelihood for the max of two means}

We now construct EL confidence intervals for our functional $\theta=\phi(\mu)=\max(\mu_1,\mu_2)$.
For a fixed value $\theta_0$, the set of mean vectors $\mu=(\mu_1,\mu_2)$ consistent with the null hypothesis
$H_0:\theta=\theta_0$ is
\begin{equation}
\label{eq:nullset-theta}
\mathcal{M}(\theta_0) =
\Big\{(\mu_1,\mu_2):\; \max(\mu_1,\mu_2)=\theta_0\Big\} =
\underbrace{\{(\theta_0,\mu_2):\mu_2\le \theta_0\}}_{\mathcal{A}(\theta_0)}
\ \cup\
\underbrace{\{(\mu_1,\theta_0):\mu_1\le \theta_0\}}_{\mathcal{B}(\theta_0)}.
\end{equation}
Thus $H_0:\theta=\theta_0$ is equivalent to either (i) $\mu_1=\theta_0$ and $\mu_2\le\theta_0$, or (ii) $\mu_2=\theta_0$
and $\mu_1\le\theta_0$.
The profile EL ratio for $\theta$ is obtained by maximizing the bivariate EL ratio over the set (\Cref{eq:nullset-theta}),
\begin{equation}
\label{eq:EL-profile-theta}
R_n(\theta_0)
\;=\;
\sup_{\mu\in \mathcal{M}(\theta_0)} R_n(\mu)
\;=\;
\max\Big\{\ \sup_{\mu_2\le \theta_0} R_n(\theta_0,\mu_2),\ \sup_{\mu_1\le \theta_0} R_n(\mu_1,\theta_0)\ \Big\}.
\end{equation}
Because the groups are independent, $R_n(\mu)=R_{1,n}(\mu_1)\,R_{2,n}(\mu_2)$ which is the product of the individual profile likelihood for each group $g$ and $R_{g,n}(\mu_g)$ is the individual profile likelihood for group $g$. Then, for group $g$, we write the one-sample EL deviance at value $m$ as
\begin{equation}
\label{eq:one-sample-deviance}
\ell_{g,n}(m) =
-2\log R_{g,n}(m),
\end{equation}
so that the total deviance for both groups is $W_n(\mu) = W_n(\mu_1,\mu_2)=\ell_{1,n}(\mu_1)+\ell_{2,n}(\mu_2)$.

Under $\mathcal{A}(\theta_0)$, we fix $\mu_1=\theta_0$ and maximize over $\mu_2\le \theta_0$.
Since $R_{2,n}(\mu_2)$ is maximized at $\mu_2=\bar X_2$, the optimizer is
\[
\mu_2^\star(\theta_0)=\min(\bar X_2,\theta_0),
\]
which yields deviance $0$ if $\bar X_2\le\theta_0$ and deviance $\ell_{2,n}(\theta_0)$ if $\bar X_2>\theta_0$  and the constraint binds at the boundary.
Therefore, the profile deviance under $\mathcal{A}(\theta_0)$ is
\[
T_{A,n}(\theta_0) =
\ell_{1,n}(\theta_0) + \mathbf{1}\{\bar X_2>\theta_0\}\,\ell_{2,n}(\theta_0).
\]

Under $\mathcal{B}(\theta_0)$, by symmetry, the profile deviance is
\[
T_{B,n}(\theta_0) =
\ell_{2,n}(\theta_0) + \mathbf{1}\{\bar X_1>\theta_0\}\,\ell_{1,n}(\theta_0).
\]
Combining the two cases and using \Cref{eq:EL-profile-theta}, the overall profile EL likelihood ratio statistic for 
$H_0:\theta=\theta_0$ is
\begin{equation}
\label{eq:profile-deviance-theta}
T_n(\theta_0) = -2\log R_n(\theta_0) =
\min\{T_{A,n}(\theta_0),\,T_{B,n}(\theta_0)\}.
\end{equation}
Equivalently, this procedure takes the maximum of the profile EL ratio over the two null branches $\mathcal A(\theta_0)$ and $\mathcal B(\theta_0)$ (or equivalently, in deviance form, the minimum across the two profiled deviances).

To construct the confidence interval, we see that given a critical value $c_{1-\alpha}$, the EL confidence set for $\theta$ is
\begin{equation}
\label{eq:CI-theta-general}
\mathcal{C}^{EL}_{1-\alpha}(\theta) =
\Big\{\theta\in\mathbb{R}:\; T_n(\theta)\le c_{1-\alpha}\Big\}.
\end{equation}
Numerically, we find its endpoints by solving $T_n(\theta)=c_{1-\alpha}$ on either side of the plug-in estimator
$\hat\theta_n=\max(\bar X_1,\bar X_2) = \max(\hat\mu_1,\hat\mu_2)$.

\subsection{Concerns at the kink}

A natural calibration is to use the standard $\chi^2_1$ critical value
$c^{\chi^2}_{1-\alpha}=\chi^2_{1,\,1-\alpha},$
which leads to the "naive" interval $\{\theta:T_n(\theta)\le \chi^2_{1,\,1-\alpha}\}$.

If the data generating process is away from the kink (e.g., $\mu_1>\mu_2$ so that $\theta=\mu_1$), then $\theta=\max(\mu_1,\mu_2)$ is locally smooth and equals $\mu_1$ in a neighborhood of the true parameter. In terms of the null set \Cref{eq:nullset-theta}, only the branch $\mathcal{A}(\theta)$ is asymptotically relevant: Under $H_0:\theta=\mu_1$, we have $\Pr(\bar X_2>\theta)\to 0$, so the inequality $\mu_2\le \theta$ is slack with probability approaching one. Consequently,
\begin{equation}
\label{eq:limit-smooth-T}
T_n(\theta)
=
\ell_{1,n}(\theta) + o_p(1)
\ \xrightarrow{d}\ \chi^2_1,
\end{equation}
and calibrating with the $\chi^2_{1,\,1-\alpha}$ critical value yields correct pointwise asymptotic coverage. By symmetry, the same logic applies when $\mu_2>\mu_1$.

At the kink, $\mu_1=\mu_2=\theta_0$, both branches $\mathcal{A}(\theta_0)$ and $\mathcal{B}(\theta_0)$ are locally relevant and the effective number of binding constraints is random.
To see this, we first note that under $H_0:\mu_g=\theta_0$ the one-sample EL deviances for group $g$ will satisfy
\[
\ell_{g,n}(\theta_0)
=
Z_g^2 + o_p(1),
\qquad
Z_g:=\frac{\sqrt{n_g}(\bar X_g-\theta_0)}{\sigma_g}\ \xrightarrow{d}\ N(0,1),
\]
and $Z_1\perp Z_2$ by independence of groups.
Plugging into likelihood ratio statistic (\Cref{eq:profile-deviance-theta}) gives the nonstandard limit
\begin{equation}
\label{eq:limit-kink-T}
T_n(\theta_0)\ \xrightarrow{d}\ 
T :=
\min\Big\{Z_1^2+\mathbf{1}\{Z_2>0\}Z_2^2,\;\; Z_2^2+\mathbf{1}\{Z_1>0\}Z_1^2\Big\},
\end{equation}
which has the piecewise representation,
\begin{equation}
\label{eq:T-piecewise}
T=
\begin{cases}
Z_1^2+Z_2^2, & Z_1>0,\ Z_2>0,\\
Z_1^2, & Z_1>0,\ Z_2\le 0,\\
Z_2^2, & Z_1\le 0,\ Z_2>0,\\
\min(Z_1^2,Z_2^2), & Z_1\le 0,\ Z_2\le 0,
\end{cases}
\end{equation} with four cases.
Due to the kink, the distribution in \Cref{eq:limit-kink-T}--\Cref{eq:T-piecewise} is not $\chi^2_1$ and yields larger critical values; consequently, using the smaller cutoff $\chi^2_{1,\,1-\alpha}$ yields a confidence interval that is too short and will undercover at the kink.

\subsection{Chi-bar-squared calibration at the kink}\label{app:Chi-bar-square-adaption}

We now derive the correct critical value for \Cref{eq:CI-theta-general} at the kink. While nonstandard, the limiting distribution $T$ in \Cref{eq:limit-kink-T} admits a closed-form CDF, which allows us to analytically derive the correct critical value.

At the kink ($\mu_1=\mu_2=\theta_0$), local deviations $h=(h_1,h_2)$ satisfying $\max(\theta_0+h_1,\theta_0+h_2)=\theta_0$
are exactly those with $\max(h_1,h_2)=0$, i.e.\ $h_1\le 0$, $h_2\le 0$, and at least one coordinate equals $0$.
This set is the union of the two negative coordinate axes,
$\mathcal{K} = 
\{(0,t):t\le 0\}\ \cup\ \{(t,0):t\le 0\}$.

In likelihood ratio problems under boundary constraints, the limiting statistic can be expressed as the squared Euclidean distance from a Gaussian score vector to the relevant tangent (cone) set \citep{Self1987}; empirical likelihood admits the same local quadratic form after studentization \citep{owen2001empirical}. In our setting, the (whitened) score vector is $Z=(Z_1,Z_2)^\top\sim N(0,I_2)$, and the squared distance to $\mathcal{K}$ is exactly the random variable $T$ in \Cref{eq:T-piecewise}.
The four cases in \Cref{eq:T-piecewise} correspond to which region of $\mathbb{R}^2$ the random vector $Z$ falls in:
(1) when both components are positive,  the closest point in $\mathcal{K}$ is the origin (both constraints bind), yielding $Z_1^2+Z_2^2$; 
(2--3) when exactly one component is positive, the projection lands on the corresponding axis (one constraint binds), yielding a $\chi^2_1$ term; 
(4) when both components are negative, the projection lands on the nearer axis, yielding $\min(Z_1^2,Z_2^2)$.

Let $F_k(t)=\Pr(\chi^2_k\le t)$ denote the $\chi^2_k$ CDF where $k$ is the degrees of freedom of the distribution. Because the signs of $Z_1$ and $Z_2$ are independent of their magnitudes and each quadrant has probability $1/4$, we can compute the CDF of $T$ by conditioning on quadrants:
\begin{align}
\label{eq:chibar-cdf}
\Pr(T\le t)
&=
\frac{1}{4}F_2(t)
+\frac{1}{4}F_1(t)
+\frac{1}{4}F_1(t)
+\frac{1}{4}\Pr\big(\min(\chi^2_1,\chi^2_1)\le t\big) \notag\\
&=
\frac{1}{4}F_2(t) + F_1(t) - \frac{1}{4}F_1(t)^2.
\end{align}
The last step uses $\Pr(\min(U,V)\le t)=1-\Pr(U>t,V>t)=1-(1-F_1(t))^2=2F_1(t)-F_1(t)^2$ for i.i.d.\ $U,V\sim\chi^2_1$. We refer to this distribution as the \emph{chi-bar-squared} calibration for the kink \citep{Dykstra1991}.

We then define the $(1-\alpha)$ critical value $c^{\bar\chi^2}_{1-\alpha}$ as the solution to
\begin{equation}
\label{eq:chibar-crit}
\frac{1}{4}F_2(c^{\bar\chi^2}_{1-\alpha}) + F_1(c^{\bar\chi^2}_{1-\alpha}) - \frac{1}{4}F_1(c^{\bar\chi^2}_{1-\alpha})^2 = 1-\alpha.
\end{equation}
Note that this equation is one-dimensional and can be solved by a standard root finder. For $\alpha=0.05$, we see that 
$c^{\bar\chi^2}_{0.95}\approx 4.245$ while $\chi^2_{1,\,0.95}\approx 3.842$. The chi-bar-squared adjusted EL confidence interval for $\theta$ is obtained by using $c^{\bar\chi^2}_{1-\alpha}$ in
\Cref{eq:CI-theta-general}:
\begin{equation}
\label{eq:CI-theta-chibar}
\mathcal{C}^{EL,\bar\chi^2}_{1-\alpha}(\theta) =
\Big\{\theta\in\mathbb{R}:\; T_n(\theta)\le c^{\bar\chi^2}_{1-\alpha}\Big\}.
\end{equation}
By construction, this calibration yields the correct asymptotic coverage at the kink $\mu_1=\mu_2$, but will overcover when not at the kink.

\subsubsection{Many arms chi-bar-squared critical value}\label{sec:Many-arms-chi-bar-squared-adaption}

After deriving the critical value for two arms, we now generalize these results to finite $J$ arms in an experiment and derive the generalize chi-bar-square critical value at a kink with $J$ ties. More specifically, we evaluate the null hypothesis $H_0: \max(\mu_1, \dots, \mu_J) = \theta_0$ at the kink (where $\mu_1 = \dots = \mu_J = \theta_0$) which requires calibrating against a generalized chi-bar-squared distribution. 

At the exact $J$-arm kink, the local deviations $h = (h_1, \dots, h_J)$ are constrained by $\max_j h_j = 0$. The tangent cone $\mathcal{K}$ for this constraint is the boundary of the negative orthant in $\mathbb{R}^J$, defined by $h_j \le 0$ for all $j$, with at least one $h_j = 0$. The limiting profile empirical likelihood deviance $T$ is the squared Euclidean distance from the studentized score vector $Z = (Z_1, \dots, Z_J)^\top \sim N(0, I_J)$ to the tangent cone $\mathcal{K}$.

Since the components of $Z$ are independent standard normals, $Z$ falls into any of the $2^J$ orthants with equal probability $(1/2)^J$. We use this result to derive the cumulative distribution function (CDF) of $T$ by conditioning on the signs of the components of $Z$:

\begin{enumerate}[noitemsep]
    \item \textbf{At least one positive component:} If exactly $k$ components are positive ($1 \le k \le J$), the projection of $Z$ onto $\mathcal{K}$ sets those $k$ components to $0$ and leaves the negative components unchanged. The squared distance is the sum of the $k$ positive squared components, which follows a $\chi_k^2$ distribution. There are $\binom{J}{k}$ such orthants.
    \item \textbf{All negative components:} If $Z_j \le 0$ for all $j$, the vector lies strictly inside the negative orthant. To project $Z$ onto the boundary $\mathcal{K}$, the component closest to zero must be shifted to zero. The squared distance is $\min(Z_1^2, \dots, Z_J^2)$. The CDF of the minimum of $J$ independent $\chi_1^2$ variables is $1 - (1 - F_1(t))^J$, where $F_1(t) = \Pr(\chi_1^2 \le t)$. There is exactly $1$ such orthant.
\end{enumerate}
By the Law of Total Probability, the exact mixture CDF for $T$ is:
$$ \Pr(T \le t) = \sum_{k=1}^J \binom{J}{k} \left(\frac{1}{2}\right)^J F_k(t) + \left(\frac{1}{2}\right)^J \left[ 1 - (1 - F_1(t))^J \right] $$
where $F_k(t)$ is the CDF of a $\chi_k^2$ random variable. For a given significance level $\alpha$, the critical value $c^{(J)}_{1-\alpha}$ is the unique root solving $\Pr(T \le c^{(J)}_{1-\alpha}) = 1 - \alpha$. Because the CDF is strictly monotonic, this root can be computed via standard 1D numerical root-finding for a given number of arms $J$ at the kink.

\subsection{Adaptive EL confidence intervals via a pre-test for the kink}
\label{app:EL-adaptive}

We adopt the test-based "pre-test" idea of \citet{andrews2000inconsistency} to construct an adaptive empirical likelihood procedure that switches between the standard $\chi^2_1$ calibration (valid \textit{away} from the kink) and the chi-bar-squared calibration (valid \textit{at} the kink). The goal is to achieve correct pointwise asymptotic coverage both at and away from the kink. The procedure has two steps. We first work this out for two arms and then establish the procedure for $J$ arms.

\textbf{Step 1: A pre-test for whether the DGP is at the kink.} Recall, we defined $\Delta = \mu_1-\mu_2$, so that the kink corresponds to $\Delta=0$. We further define the sample difference $\hat\Delta = \bar X_1-\bar X_2$, and let $$\hat s_\Delta^2 = \frac{s_1^2}{n_1}+\frac{s_2^2}{n_2},$$
where $s_g^2$ is the sample variance in group $g$. The studentized difference statistic is
$$S_n = \frac{|\hat\Delta|}{\hat s_\Delta}.$$
We then choose a set of pre-test critical values $\kappa_n>0$ that satisfy
\begin{equation}
\label{eq:kappa-conditions}
\kappa_n \to \infty
\qquad\text{and}\qquad
\kappa_n = o(\sqrt{n}),
\end{equation}
where $n$ denotes the overall sample size (or, equivalently, we can say $\kappa_n=o(\min\{\sqrt{n_1},\sqrt{n_2}\})$ for each groups's size).

The procedure classifies the setting as at the kink when the equality of means is not strongly rejected,
\begin{equation}\label{eq:K_n_definition}
\hat{K}_n = \mathbb{I}\{S_n \le \kappa_n\}.
\end{equation}
Then, $\hat{K}_n=1$ indicates "at (or close to) the kink" and $\hat{K}_n=0$ indicates "away from the kink".

\textbf{Step 2: Adaptive critical value for the EL likelihood ratio inversion.} We let $T_n(\theta)$ denote the profile EL deviance for $\theta=\max(\mu_1,\mu_2)$ defined in \Cref{eq:profile-deviance-theta}. We let $c^{\bar\chi^2}_{1-\alpha}$ be the chi-bar critical value defined by \Cref{eq:chibar-crit}, and let $\chi^2_{1,1-\alpha}$ be the $(1-\alpha)$ quantile of $\chi^2_1$.
The adaptive critical value is defined as
\begin{equation}
\label{eq:adaptive-crit}
c^{ad}_{1-\alpha} = \hat{K}_n\,c^{\bar\chi^2}_{1-\alpha} + (1-\hat{K}_n)\,\chi^2_{1,1-\alpha},
\end{equation}
which provides the adaptive EL confidence set,
\begin{equation}
\label{eq:adaptive-CI}
\mathcal C^{EL,ad}_{1-\alpha}(\theta) = \{\theta\in\mathbb{R}: T_n(\theta)\le c^{ad}_{1-\alpha}\}.
\end{equation}

\textbf{Asymptotic Coverage.}  We now show that the pre-test decision is consistent in \Cref{lem:Consistency-2-arm}, and the adaptive EL approach yields correct pointwise asymptotic coverage both at the kink ($\Delta=0$) and away from the kink ($\Delta\neq 0$) in \Cref{thm:pointwise-coverage-two-arms}.\\

\begin{lemma}[Consistency of the pre-test decision]\label{lem:Consistency-2-arm}
Assume $n_1,n_2\to\infty$, $n_1/(n_1+n_2)\to\pi\in(0,1)$, $s_g^2\to_p \sigma_g^2\in(0,\infty)$.
If $\Delta=0$, then $\Pr(\hat{K}_n=1)\to 1$. 
If $\Delta\neq 0$, then $\Pr(\hat{K}_n=1)\to 0$.\\    
\end{lemma}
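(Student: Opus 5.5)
The plan is to analyze the studentized statistic $S_n=|\hat\Delta|/\hat s_\Delta$ separately in the two cases, using only the CLT for each arm, the consistency $s_g^2\to_p\sigma_g^2$, and the two rate conditions in \Cref{eq:kappa-conditions}. Throughout, write $N=n_1+n_2$. Since $n_g/N\to\pi_g\in(0,1)$, where $\pi_1=\pi$ and $\pi_2=1-\pi$, we have
\[
N\hat s_\Delta^2=\frac{N}{n_1}s_1^2+\frac{N}{n_2}s_2^2\ \to_p\ \frac{\sigma_1^2}{\pi}+\frac{\sigma_2^2}{1-\pi}=:v^2\in(0,\infty).
\]
Hence $\sqrt{N}\hat s_\Delta\to_p v>0$; in particular $\hat s_\Delta=O_p(N^{-1/2})$ and $1/\hat s_\Delta=O_p(\sqrt N)$. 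Because $n_1,n_2,N$ are all of the same order, $\kappa_n=o(\sqrt{N})$ is equivalent to $\kappa_n=o(\min\{\sqrt{n_1},\sqrt{n_2}\})$, so I will not distinguish between the two.

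\textbf{Case $\Delta=0$.} I will show that $S_n=O_p(1)$, and in fact $S_n\to_d |N(0,1)|$. Write
\[
\sqrt N\,\hat\Delta=\sqrt{N/n_1}\,\sqrt{n_1}(\bar X_1-\mu_1)-\sqrt{N/n_2}\,\sqrt{n_2}(\bar X_2-\mu_2),
\]
using $\mu_1=\mu_2$. By the CLT applied to each arm, together with independence across arms and Slutsky, this converges in distribution to $N(0,v^2)$. Slutsky with $\sqrt N\hat s_\Delta\to_p v$ then gives $S_n\to_d|N(0,1)|$. Since $S_n$ is tight and $\kappa_n\to\infty$, for any $\varepsilon>0$ pick $M$ with $\limsup\Pr(S_n>M)<\varepsilon$. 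Eventually $\kappa_n>M$, so $\Pr(\hat K_n=0)=\Pr(S_n>\kappa_n)\le\Pr(S_n>M)$, and $\limsup\Pr(\hat K_n=0)<\varepsilon$. As $\varepsilon$ is arbitrary, $\Pr(\hat K_n=1)\to1$.

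\textbf{Case $\Delta\neq0$.} Here $\hat\Delta\to_p\Delta$ by the WLLN, so $|\hat\Delta|\to_p|\Delta|>0$. It follows that
\[
\frac{S_n}{\sqrt N}=\frac{|\hat\Delta|}{\sqrt N\,\hat s_\Delta}\ \to_p\ \frac{|\Delta|}{v}>0.
\]
Then
\[
\Pr(\hat K_n=1)=\Pr\!\left(\frac{S_n}{\sqrt N}\le \frac{\kappa_n}{\sqrt N}\right).
\]
Because $\kappa_n/\sqrt N\to0$, the right-hand threshold is eventually below $|\Delta|/(2v)$. The probability is therefore bounded by $\Pr\bigl(S_n/\sqrt N\le |\Delta|/(2v)\bigr)\to0$.

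\textbf{Main obstacle.} The only delicate step is controlling the random denominator $\hat s_\Delta$: it must be neither degenerate nor too small at the kink, and of exact order $N^{-1/2}$ away from it. Both requirements are handled by the single Slutsky step above, which uses $\sigma_g^2\in(0,\infty)$ and $\pi\in(0,1)$. The finite-variance assumption needed for the CLT is implicit in the hypothesis $s_g^2\to_p\sigma_g^2<\infty$ under i.i.d. sampling.
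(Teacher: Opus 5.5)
Your proof is correct and follows the same route as the paper. At the kink you use the CLT to get $S_n\Rightarrow|N(0,1)|$ together with $\kappa_n\to\infty$, and away from the kink you show that $S_n$ diverges at rate $\sqrt{n}$ while $\kappa_n=o(\sqrt{n})$. The differences are cosmetic: you normalize by $\sqrt{N}$ and use the WLLN for $\hat\Delta$, where the paper normalizes by $r_n=(1/n_1+1/n_2)^{-1/2}$ and writes $\hat\Delta=\Delta+O_p(r_n^{-1})$, and your explicit tightness argument spells out a step the paper states in one line.
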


\begin{proof}
First, we consider the case when $\Delta=0$. By the Central Limit Theorem, $S_n \Rightarrow |N(0,1)|$. Since $\kappa_n\to\infty$,
$\Pr(\hat{K}_n=0)=\Pr(S_n>\kappa_n)\to 0$, so $\Pr(\hat{K}_n=1)\to 1$.

Second, we consider the case when $\Delta\neq 0$. Then $\hat\Delta=\Delta+O_p(r_n^{-1})$, where
$$
r_n =\left(\frac{1}{n_1}+\frac{1}{n_2}\right)^{-1/2}.
$$
Moreover, under $s_g^2\to_p \sigma_g^2\in(0,\infty)$ we have 
$$
r_n\,\hat s_\Delta \xrightarrow{p} c = \left(\frac{\sigma_1^2}{n_1/(n_1+n_2)}+\frac{\sigma_2^2}{n_2/(n_1+n_2)}\right)^{1/2}\Big/\sqrt{n_1+n_2}.
$$
In particular, we see
$$
r_n^2\,\hat s_\Delta^2
=\frac{s_1^2 n_2+s_2^2 n_1}{n_1+n_2}
= s_1^2 \frac{n_2}{n_1+n_2} +s_2^2 \frac{n_1}{n_1+n_2}
\xrightarrow{p}\sigma_1^2  (1-\pi) + \sigma_2^2 \pi
=c^2,
$$
so $r_n \hat s_\Delta \xrightarrow{p} c\in(0,\infty)$.
Therefore,
$$
S_n=\frac{|\hat\Delta|}{\hat s_\Delta}
=\frac{|\Delta|+O_p(r_n^{-1})}{(c+o_p(1))/r_n}
=\left(\frac{|\Delta|}{c}\right) r_n + o_p(r_n)
\xrightarrow{p}\infty,
$$
since $r_n\to\infty$ when $n_1,n_2\to\infty$. Because $\kappa_n=o(r_n)$, it follows that
$\Pr(S_n\le \kappa_n)\to 0$ and thus $\Pr(\hat K_n=1)\to 0$.
\end{proof}

\begin{theorem}[Pointwise asymptotic coverage of the adaptive EL CI]\label{thm:pointwise-coverage-two-arms}
Let $\theta_0=\max(\mu_1,\mu_2)$ denote the true value. Under the same conditions as \Cref{lem:Consistency-2-arm}, for any fixed $\alpha\in(0,1)$, $ \Pr\big(\theta_0\in \mathcal C^{EL,ad}_{1-\alpha}(\theta)\big)\to 1-\alpha$.
\end{theorem}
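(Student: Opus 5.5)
The plan is to decompose the coverage event according to the pre-test decision $\hat K_n$ and then treat the two regimes separately using \Cref{lem:Consistency-2-arm}. Since $\theta_0\in\mathcal C^{EL,ad}_{1-\alpha}(\theta)$ if and only if $T_n(\theta_0)\le c^{ad}_{1-\alpha}$, we can write
\begin{equation*}
\Pr\big(T_n(\theta_0)\le c^{ad}_{1-\alpha}\big)
=\Pr\big(T_n(\theta_0)\le c^{\bar\chi^2}_{1-\alpha},\hat K_n=1\big)+\Pr\big(T_n(\theta_0)\le \chi^2_{1,1-\alpha},\hat K_n=0\big).
\end{equation*}
In each regime one of the two events $\{\hat K_n=1\}$, $\{\hat K_n=0\}$ has probability tending to one. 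Using $|\Pr(A\cap B)-\Pr(A)|\le\Pr(B^c)$, the coverage probability then differs by $o(1)$ from $\Pr(T_n(\theta_0)\le c)$ for the corresponding deterministic critical value $c$. It therefore suffices to identify the limit law of $T_n(\theta_0)$ in each regime and to check that $c$ is a continuity point of that law.

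\textbf{At the kink} ($\Delta=0$), \Cref{lem:Consistency-2-arm} gives $\Pr(\hat K_n=1)\to1$, so we need $\Pr(T_n(\theta_0)\le c^{\bar\chi^2}_{1-\alpha})\to 1-\alpha$.

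\begin{enumerate}
\item Invoke the one-sample EL Wilks expansion \citep{owen2001empirical}, $\ell_{g,n}(\theta_0)=Z_{g,n}^2+o_p(1)$, with $Z_{g,n}=\sqrt{n_g}(\bar X_g-\theta_0)/\sigma_g$.
\item Note that the indicators satisfy $\mathbf 1\{\bar X_g>\theta_0\}=\mathbf 1\{Z_{g,n}>0\}$ exactly.
\item Write $T_n(\theta_0)=G(Z_{1,n},Z_{2,n})+o_p(1)$, where
\begin{equation*}
G(z_1,z_2)=\min\{z_1^2+\mathbf 1\{z_2>0\}z_2^2,\ z_2^2+\mathbf 1\{z_1>0\}z_1^2\}.
\end{equation*}
The function $G$ is in fact continuous, because each indicator multiplies a term $z^2$ that vanishes at $z=0$.
\item Since $(Z_{1,n},Z_{2,n})\Rightarrow N(0,I_2)$ by independence of the groups and the CLT, the continuous mapping theorem and Slutsky's lemma give $T_n(\theta_0)\Rightarrow T$, as in \Cref{eq:limit-kink-T}.
\item The CDF \Cref{eq:chibar-cdf} is continuous and strictly increasing on $(0,\infty)$, so $c^{\bar\chi^2}_{1-\alpha}$ is a continuity point and $\Pr(T\le c^{\bar\chi^2}_{1-\alpha})=1-\alpha$ by \Cref{eq:chibar-crit}.
\end{enumerate}

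\textbf{Away from the kink} (say $\mu_1>\mu_2$, with the other case symmetric), \Cref{lem:Consistency-2-arm} gives $\Pr(\hat K_n=0)\to1$, so we need $T_n(\theta_0)\Rightarrow\chi^2_1$.

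\begin{enumerate}
\item Since $\bar X_2\to_p\mu_2<\theta_0$, we have $\Pr(\bar X_2>\theta_0)\to0$. Hence $T_{A,n}(\theta_0)=\ell_{1,n}(\theta_0)$ with probability tending to one.
\item The branch-$B$ deviance satisfies $T_{B,n}(\theta_0)\ge\ell_{2,n}(\theta_0)$. Here $\theta_0$ lies a fixed distance above $\mu_2$, so the one-sample EL deviance of arm 2 evaluated at a fixed wrong mean diverges to $+\infty$ in probability. This holds because $\ell_{2,n}(m)/n_2$ converges to a positive constant, and the constant is $+\infty$ if $m$ lies outside the convex hull of the data.
\item Combining the two facts, $T_n(\theta_0)=\min\{T_{A,n},T_{B,n}\}=\ell_{1,n}(\theta_0)$ with probability tending to one.
\item The Wilks theorem gives $\ell_{1,n}(\theta_0)\Rightarrow\chi^2_1$, and $\chi^2_{1,1-\alpha}$ is a continuity point of that law.
\end{enumerate}

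Combining the two regimes yields coverage converging to $1-\alpha$.

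The main obstacle is step 2 away from the kink. The paper's sketch in \Cref{eq:limit-smooth-T} only addresses branch $A$, but the minimum over branches also requires ruling out that branch $B$ is ever smaller. For this I would use the standard fact that the EL ratio at a fixed alternative decays exponentially, via the bound $\ell_{2,n}(m)\ge 2n_2\log(1+\lambda(\bar X_2-m)\cdots)$, or more simply the comparison $\ell_{2,n}(m)\ge n_2(\bar X_2-m)^2/(\text{bounded variance proxy})$ on the event that $m$ lies in the convex hull. A secondary technical point is to justify the joint $o_p(1)$ expansions of both deviances at the kink, which follows from applying the one-sample result separately to each independent arm.
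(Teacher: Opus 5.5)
Your proof follows the paper's. You use the same split on $\hat K_n$, the same $2\Pr(\hat K_n\neq\cdot)\to 0$ bound from \Cref{lem:Consistency-2-arm}, and the same two limit laws. Your additions are useful: continuity of $G$, and the check that the chi-bar CDF is continuous at the cutoff. Above all, you observe that away from the kink $T_n=\min\{T_{A,n},T_{B,n}\}$ also requires $\Pr\big(T_{B,n}(\theta_0)\le c\big)\to 0$. The paper's \Cref{eq:limit-smooth-T} asserts this without argument.

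However, the reason you give for $\ell_{2,n}(\theta_0)\to_p\infty$ is false in general. You claim that $\ell_{2,n}(m)/n_2$ tends to a positive constant (exponential decay of the EL ratio), and that $\ell_{2,n}(m)\ge n_2(\bar X_2-m)^2/(\text{bounded proxy})$. Both hold for bounded support but fail when arm 2 is unbounded above, e.g.\ the paper's Gaussian design. Write $y_i=x_{2i}-\theta_0$. A multiplier $\lambda<0$ is feasible only if $|\lambda|<1/\max_i y_i$, and Jensen then gives $\ell_{2,n}(\theta_0)\le 2n_2|\bar y|/\max_i y_i$. For normal data this is of order $n_2/\sqrt{\log n_2}$, so $\ell_{2,n}(\theta_0)/n_2\to_p 0$.

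The conclusion you need still holds, by a different argument. Use the dual form $\ell_{2,n}(\theta_0)=2\sup_\lambda\sum_i\log(1+\lambda y_i)$ and take $\lambda=-t/M$ with $M=\max_i|y_i|$ and $t\in(0,1/2]$. Apply $\log(1+u)\ge u-u^2$ for $|u|\le 1/2$, and write $\overline{y^2}=n_2^{-1}\sum_i y_i^2$. Choosing $t=\min\{1/2,\,M|\bar y|/(2\overline{y^2})\}$ gives $\ell_{2,n}(\theta_0)\ge\min\{n_2|\bar y|/(2M),\,n_2\bar y^{\,2}/(2\overline{y^2})\}$. This diverges because $|\bar y|\to_p\theta_0-\mu_2>0$ and $M=o_p(\sqrt{n_2})$ under finite variance. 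Alternatively, combine monotonicity of $\ell_{2,n}$ on $[\bar X_2,\infty)$ with the local Wilks expansion at $\mu_2+h/\sqrt{n_2}$, then let $h\to\infty$. With this repair your argument is complete, and it is more rigorous than the paper's on exactly this point.
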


\begin{proof}
We see by definition of the confidence interval (\Cref{eq:adaptive-CI}),
$$
\Pr(\theta_0\in \mathcal C^{EL,ad}_{1-\alpha}) = \Pr(T_n(\theta_0)\le c^{ad}_{1-\alpha}).
$$
We consider the kink and non-kink cases separately.

\emph{Case 1: $\Delta=0$ (at the kink).}
From \Cref{eq:limit-kink-T} and \Cref{eq:CI-theta-chibar}, we have
$T_n(\theta_0)\Rightarrow T$ and $\Pr(T\le c^{\bar\chi^2}_{1-\alpha})=1-\alpha$ by definition of $c^{\bar\chi^2}_{1-\alpha}$ (\Cref{eq:adaptive-crit}).
We decompose the probability based on the pre-test value,
$$
\Pr(T_n(\theta_0)\le c^{ad}_{1-\alpha}) =
\Pr(T_n(\theta_0)\le c^{\bar\chi^2}_{1-\alpha},\ \hat{K}_n=1) +
\Pr(T_n(\theta_0)\le \chi^2_{1,1-\alpha},\ \hat{K}_n=0).
$$
The second term is bounded by $\Pr(\hat{K}_n=0)\to 0$ by \Cref{lem:Consistency-2-arm}. We write out the first term to get
$$
\Pr(T_n(\theta_0)\le c^{\bar\chi^2}_{1-\alpha},\ \hat{K}_n=1) =
\Pr(T_n(\theta_0)\le c^{\bar\chi^2}_{1-\alpha}) - \Pr(T_n(\theta_0)\le c^{\bar\chi^2}_{1-\alpha},\ \hat{K}_n=0),
$$
and the last probability is again bounded by $\Pr(\hat{K}_n=0)\to 0$. Hence, we see
$$
\left| \Pr(T_n(\theta_0)\le c^{ad}_{1-\alpha}) -  \Pr(T_n(\theta_0)\le c^{\bar\chi^2}_{1-\alpha}) \right| \leq  2 \Pr(\hat{K}_n=0) \to 0.
$$
Since  $T_n(\theta_0) \Rightarrow T$, we have
$$
\Pr(T_n(\theta_0)\le c^{ad}_{1-\alpha}) \to \Pr(T\le c^{\bar\chi^2}_{1-\alpha}) = 1-\alpha.
$$

\emph{Case 2: $\Delta\neq 0$ (away from the kink).}
Without loss of generality suppose $\mu_1>\mu_2$, so $\theta_0=\mu_1$.
From the smooth (away from the kink) argument in \Cref{eq:limit-smooth-T}, we have $T_n(\theta_0)\Rightarrow \chi^2_1$.
By \Cref{lem:Consistency-2-arm}, $\Pr(\hat{K}_n=1)\to 0$ and $\Pr(\hat{K}_n=0)\to 1$.
Using the definition of the adaptive critical value \Cref{eq:adaptive-crit}, we can decompose
$$
\Pr\big(T_n(\theta_0)\le c^{ad}_{1-\alpha}\big) =
\Pr\big(T_n(\theta_0)\le \chi^2_{1,1-\alpha},\ \hat{K}_n=0\big) + 
\Pr\big(T_n(\theta_0)\le c^{\bar\chi^2}_{1-\alpha},\ \hat{K}_n=1\big).
$$ The second term is bounded by $\Pr(\hat{K}_n=1)\to 0$.  For the first term, note that
$$
\Pr\big(T_n(\theta_0)\le \chi^2_{1,1-\alpha},\ \hat{K}_n=0\big) =
\Pr\big(T_n(\theta_0)\le \chi^2_{1,1-\alpha}\big) -
\Pr\big(T_n(\theta_0)\le \chi^2_{1,1-\alpha},\ \hat{K}_n=1\big),
$$
and the last probability is bounded by $\Pr(\hat{K}_n=1)\to 0$.
Therefore,
$$
\Big| \Pr\big(T_n(\theta_0)\le c^{ad}_{1-\alpha}\big) -
\Pr\big(T_n(\theta_0)\le \chi^2_{1,1-\alpha}\big) \Big| \le 
2 \Pr(\hat{K}_n=1) \to 0,
$$
which implies
$$
\Pr\big(T_n(\theta_0)\le c^{ad}_{1-\alpha}\big) =
\Pr\big(T_n(\theta_0)\le \chi^2_{1,1-\alpha}\big) +o(1). $$
Finally, since $T_n(\theta_0)\Rightarrow \chi^2_1$,
$$ \Pr\big(T_n(\theta_0)\le \chi^2_{1,1-\alpha}\big)\to 1-\alpha, $$
and hence
$$ \Pr\big(T_n(\theta_0)\le c^{ad}_{1-\alpha}\big)\to 1-\alpha. $$
\end{proof}

As discussed in \citet{andrews2000inconsistency}, the use of a pre-test yields pointwise asymptotic validity at $\Delta=0$ and at any fixed $\Delta\neq 0$, but it is not uniformly valid over sequences $\Delta_n$ that approach zero at $1/\sqrt{n}$ rates.

\textbf{Choice of the pre-test critical value.} \Cref{eq:kappa-conditions} highlights the conditions for $\kappa_n$, and it requires $\kappa_n\to\infty$ (so the pre-test size tends to zero under $\Delta=0$), but not too quickly (so the pre-test still rejects with probability approaching one when $\Delta\neq 0$).
Equivalently, one may define $\kappa_n=z_{1-\gamma_n/2}$ for a sequence of pre-test significance levels $\gamma_n\downarrow 0$ such that $z_{1-\gamma_n/2}=o(\sqrt{n})$ (e.g., $\gamma_n=1/\log n$).
This is directly analogous to \citet{andrews2000inconsistency}, who emphasizes that the pre-test level should converge to zero to avoid incorrect boundary classification, while still allowing consistency away from the boundary. In our application, we adopt $\kappa_n = \sqrt{\log\log(\min\{n_1,n_2\})}$ which satisfies our required conditions on $\kappa_n$ (\Cref{eq:kappa-conditions}).

\subsection{Adaptive pre-test EL for many arms}\label{sec:Adaptive-many-arms}

In this section, we adapt our results for two arms to finite $J$ many arms or groups. Since pairwise pre-test comparisons combinatorially increases in the number of arms, we instead propose an "active-set" approach that first determines the best performing arm and then compares other arms to the maximum.
This procedure dynamically estimates $J_{eff}$, the number of arms tied at the maximum, and then selects the corresponding critical value. We detail the procedure in \Cref{alg:active_set} and then show asymptotically the procedure is both consistent and provides the correct pointwise coverage.

To set notation, we define the true active set of optimal arms as $\mathcal{A}^* = \{j : \mu_j = \theta_0\}$, where $\theta_0 = \max_j \mu_j$, and let $J^* = |\mathcal{A}^*|$. We define the empirical winner among the arm that has the highest sample mean: $\hat{k} = \arg\max_j \bar{X}_j$. The studentized differences of arm $j$ to the empirical winner (arm $\hat k$) is 
$S_{n,j} = \frac{\bar{X}_{\hat{k}} - \bar{X}_j}{\sqrt{s_{\hat{k}}^2/n_{\hat{k}} + s_j^2/n_j}}$
We choose the pre-test critical value as $\kappa_n = \sqrt{\log(\min_j n_j)}$.

\begin{algorithm}[H]
\SetAlgoLined
\caption{Adaptive EL Inference for $J$ Arms}
\label{alg:active_set}
\KwIn{Samples $\{X_{ji}\}_{i=1}^{n_j}$ for $j \in \{1, \dots, J\}$, significance level $\alpha$.}
\KwOut{Adaptive critical value $c_{1-\alpha}^{ad} \gets c^{(J_{eff})}_{1-\alpha}$.}
Compute sample means $\bar{X}_j$ and sample variances $s_j^2$ for all $j$\;
Identify the empirical winner: $\hat{k} \gets \arg\max_j \bar{X}_j$\;
Set the pre-test threshold: $\kappa_n \gets \sqrt{\log(\min_j n_j)}$\;
Initialize the active set: $\hat{\mathcal{A}} \gets \{\hat{k}\}$\;
\For{each arm $j \neq \hat{k}$}{
    Compute studentized difference: $S_{n,j} \gets \frac{\bar{X}_{\hat{k}} - \bar{X}_j}{\sqrt{s_{\hat{k}}^2/n_{\hat{k}} + s_j^2/n_j}}$\;
    \If{$S_{n,j} \le \kappa_n$}{
        $\hat{\mathcal{A}} \gets \hat{\mathcal{A}} \cup \{j\}$\;
    }
}
Count effective tied arms: $J_{eff} \gets |\hat{\mathcal{A}}|$\;
\end{algorithm}

\begin{lemma}[Consistency of the $J$-Arm Active Set]\label{lem:Consistency-J-arm}
For arm $j \in \{1,\ldots,J\}$, assume $n_j \to \infty$ such that $n_j/n \to \pi_j \in (0,1)$ for all $j$, and $s_j^2 \xrightarrow{p} \sigma_j^2 \in (0, \infty)$. Then, $\Pr(\hat{\mathcal{A}} = \mathcal{A}^*) \to 1$ as $n \to \infty$.
\end{lemma}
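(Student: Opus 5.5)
The plan is to show that, with probability tending to one, three things hold together: the empirical winner $\hat k$ lies in $\mathcal{A}^*$, every other arm of $\mathcal{A}^*$ passes the pre-test ($S_{n,j}\le\kappa_n$), and every arm outside $\mathcal{A}^*$ fails it. Since $J$ is finite, a union bound over the $J$ arms and the $J$ possible values of $\hat k$ reduces $\Pr(\hat{\mathcal{A}}\neq\mathcal{A}^*)$ to finitely many probabilities, each of which we show tends to zero. This is the $J$-arm analogue of \Cref{lem:Consistency-2-arm}, and we reuse its two halves (bounded noise versus $\sqrt{n}$ divergence). Throughout we use $\kappa_n=\sqrt{\log(\min_j n_j)}$. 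It satisfies $\kappa_n\to\infty$ and $\kappa_n=o(\sqrt{n})$, because $n_j/n\to\pi_j>0$ gives $\min_j n_j\asymp n$.

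\emph{Step 1 (winner in the active set).} Let $\delta=\theta_0-\max_{j\notin\mathcal{A}^*}\mu_j>0$. If $\mathcal{A}^*$ is the full set, this step is vacuous. By the weak law of large numbers $\bar X_j\to_p\mu_j$ for every $j$. Hence with probability tending to one $\max_{j\in\mathcal{A}^*}\bar X_j>\theta_0-\delta/2>\max_{j\notin\mathcal{A}^*}\bar X_j$, so $\Pr(\hat k\in\mathcal{A}^*)\to1$.

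\emph{Step 2 (tied arms are retained).} On the event $\hat k\in\mathcal{A}^*$, take $j\in\mathcal{A}^*\setminus\{\hat k\}$. The quantity $\bar X_{\hat k}-\bar X_j$ is bounded above by $\max_{l,m\in\mathcal{A}^*}|\bar X_l-\bar X_m|$. Each pairwise difference satisfies $\sqrt n(\bar X_l-\bar X_m)=O_p(1)$ by the CLT, because $\mu_l=\mu_m$. The denominator $\sqrt{s_l^2/n_l+s_m^2/n_m}$, multiplied by $\sqrt n$, converges in probability to $\sqrt{\sigma_l^2/\pi_l+\sigma_m^2/\pi_m}>0$. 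So $S_{n,j}\le \max_{l\ne m\in\mathcal{A}^*}|\bar X_l-\bar X_m|/\sqrt{s_l^2/n_l+s_m^2/n_m}=O_p(1)$. The main subtlety is that $\hat k$ is random, so the statistic cannot be handled as a fixed pair. Bounding by the maximum over the finitely many fixed pairs removes this dependence. Since $\kappa_n\to\infty$, $\Pr(S_{n,j}>\kappa_n)\to0$.

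\emph{Step 3 (non-tied arms are excluded).} Now take $j\notin\mathcal{A}^*$. On the event $\hat k\in\mathcal{A}^*$, we have $\bar X_{\hat k}\ge \bar X_{l}$ for any fixed $l\in\mathcal{A}^*$. The difference $\bar X_l-\bar X_j$ converges in probability to $\theta_0-\mu_j\ge\delta$. The denominator is at most $\max_{m}\sqrt{s_m^2/n_m+s_j^2/n_j}=O_p(n^{-1/2})$. Therefore $S_{n,j}\ge c\sqrt n(1+o_p(1))$ for some $c>0$, which diverges faster than $\kappa_n=o(\sqrt n)$. As in the second half of \Cref{lem:Consistency-2-arm}, $\Pr(S_{n,j}\le\kappa_n)\to0$.

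Intersecting the events from Steps 1--3 over the finitely many arms gives $\Pr(\hat{\mathcal{A}}=\mathcal{A}^*)\to1$. The step needing most care is Step 2, because of the random index $\hat k$; the fixed-pair maximum bound is how we plan to handle it.
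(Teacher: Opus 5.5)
Your proposal is correct and follows essentially the same route as the paper: arms in $\mathcal{A}^*$ are retained because $S_{n,j}=O_p(1)$ while $\kappa_n\to\infty$, and arms outside $\mathcal{A}^*$ are excluded because $S_{n,j}$ diverges at rate $\sqrt{n}$ while $\kappa_n=o(\sqrt{n})$, combined with $\hat k\in\mathcal{A}^*$ with probability approaching one and a union bound over the finitely many arms. You handle the random index $\hat k$ explicitly, by bounding with a maximum over fixed pairs in $\mathcal{A}^*$ and by using $\bar X_{\hat k}\ge \bar X_l$ for a fixed $l\in\mathcal{A}^*$; this is more careful than the paper, which takes limits of the denominator as if $\hat k$ were a fixed index.
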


\begin{proof}
We evaluate the inclusion probability for optimal and suboptimal arms:

\textbf{Case 1 (Optimal Arms):} For any $j \in \mathcal{A}^*$, we have $\mu_j = \theta_0$ . We first evaluate the asymptotic properties of the studentized statistic $S_{n,j}$ by scaling the numerator and denominator by $\sqrt{n}$,
$$S_{n,j} = \frac{\sqrt{n}(\bar{X}_{\hat{k}} - \bar{X}_j)}{\sqrt{n}\sqrt{s_{\hat{k}}^2/n_{\hat{k}} + s_j^2/n_j}} = \frac{\sqrt{n}(\bar{X}_{\hat{k}} - \bar{X}_j)}{\sqrt{\frac{s_{\hat{k}}^2}{n_{\hat{k}}/n} + \frac{s_j^2}{n_j/n}}}$$

By the Central Limit Theorem, $\sqrt{n_m}(\bar{X}_m - \mu_m) = O_p(1)$ for all $m$. Since $\mu_{\hat{k}} = \mu_j = \theta_0$ and the fractional sample sizes converge ($n_m/n \to \pi_m$), the scaled difference in means is stochastically bounded: $\sqrt{n}(\bar{X}_{\hat{k}} - \bar{X}_j) = O_p(1)$.

For the denominator, we apply the assumptions that $s_m^2 \xrightarrow{p} \sigma_m^2 \in (0, \infty)$ and $n_m/n \to \pi_m \in (0,1)$. By the Continuous Mapping Theorem,
$$\sqrt{\frac{s_{\hat{k}}^2}{n_{\hat{k}}/n} + \frac{s_j^2}{n_j/n}} \xrightarrow{p} \sqrt{\frac{\sigma_{\hat{k}}^2}{\pi_{\hat{k}}} + \frac{\sigma_j^2}{\pi_j}} = C$$ where $ C\in (0, \infty)$
The limit of the scaled denominator ($C$) is bounded away from zero and infinity. Taking the ratio of these rates yields $S_{n,j} = O_p(1)$.

Since $S_{n,j} = O_p(1)$, for any $\epsilon > 0$, there exists a constant $M > 0$ such that $\Pr(S_{n,j} > M) < \epsilon$ for all sufficiently large $n$ (and equivalently $\Pr(S_{n,j} \leq M)  = 1- \Pr(S_{n,j} > M)  > 1 - \epsilon$). The procedure requires a tuning parameter that diverges, $\kappa_n \to \infty$, there exists an $N$ such that for all $n > N$, $\kappa_n > M$. Then we see that
$$\Pr(S_{n,j} \le \kappa_n) \ge \Pr(S_{n,j} \le M) > 1 - \epsilon$$
Since this holds for any $\epsilon > 0$, we conclude $\lim_{n \to \infty} \Pr(S_{n,j} \le \kappa_n) = 1$. Thus, $\Pr(j \in \hat{\mathcal{A}}) \to 1$.

\textbf{Case 2 (Suboptimal Arms):} For any $j \notin \mathcal{A}^*$, we have $\mu_j < \theta_0$. Let true mean gap be $\Delta_j = \theta_0 - \mu_j > 0$. 

By the Weak Law of Large Numbers, $\bar{X}_m \xrightarrow{p} \mu_m$ for all $m$. Because the optimal arms converge to $\theta_0$ and suboptimal arms converge to strictly lesser values, the empirical winner $\hat{k}$ will belong to $\mathcal{A}^*$ with probability approaching 1. Conditioning on the high-probability event that the winner is in the active set ($\{\hat{k} \in \mathcal{A}^*\}$), we have $\bar{X}_{\hat{k}} \xrightarrow{p} \theta_0$. 

The unscaled numerator of the test statistic converges in probability to the true mean gap, which is a strictly positive constant.
$$\bar{X}_{\hat{k}} - \bar{X}_j \xrightarrow{p} \theta_0 - \mu_j = \Delta_j > 0$$

From Case 1, for the denominator of the test statistics, the scaled portion of the standard error converges to a strictly positive constant $C$,
$$\sqrt{\frac{s_{\hat{k}}^2}{n_{\hat{k}}/n} + \frac{s_j^2}{n_j/n}} \xrightarrow{p} \sqrt{\frac{\sigma_{\hat{k}}^2}{\pi_{\hat{k}}} + \frac{\sigma_j^2}{\pi_j}} = C$$
where $C \in (0, \infty)$.
Then, we see that for $S_{n,j}$,
$$S_{n,j} = \frac{\bar{X}_{\hat{k}} - \bar{X}_j}{\frac{1}{\sqrt{n}} \sqrt{\frac{s_{\hat{k}}^2}{n_{\hat{k}}/n} + \frac{s_j^2}{n_j/n}}} = \sqrt{n} \left( \frac{\bar{X}_{\hat{k}} - \bar{X}_j}{\sqrt{\frac{s_{\hat{k}}^2}{n_{\hat{k}}/n} + \frac{s_j^2}{n_j/n}}} \right)$$

By the Continuous Mapping Theorem, the term inside the parentheses converges in probability to $\Delta_j / C > 0$. This strictly positive constant is multiplied by $\sqrt{n}$ and the test statistic diverges to positive infinity exactly at a $\sqrt{n}$ rate.

The pre-test procedure chooses a threshold $\kappa_n$ such that it grows strictly slower than $\sqrt{n}$, meaning $\kappa_n = o(\sqrt{n})$. To evaluate the probability of the arm being included in the active set, we divide the inequality by $\sqrt{n}$,
$$\Pr(S_{n,j} \le \kappa_n) = \Pr\left( \frac{S_{n,j}}{\sqrt{n}} \le \frac{\kappa_n}{\sqrt{n}} \right)$$
As $n \to \infty$, the left side of the inner inequality converges in probability to $\Delta_j / C > 0$, while the right side ($\kappa_n / \sqrt{n}$) converges deterministically to $0$. The probability that a strictly positive sequence is bounded above by a sequence converging to zero is further asymptotically zero
$$\lim_{n \to \infty} \Pr\left( \frac{S_{n,j}}{\sqrt{n}} \le \frac{\kappa_n}{\sqrt{n}} \right) = 0$$
Therefore, $\Pr(j \in \hat{\mathcal{A}}) \to 0$ and suboptimal arms are consistently excluded from the active set.

Combining both cases over the finite $J$ arms and applying the union bound yields $\Pr(\hat{\mathcal{A}} = \mathcal{A}^*) \to 1$.
\end{proof}

\begin{theorem}[Pointwise Asymptotic Coverage]\label{thm:pointwise-coverage-many-arms}
Let $\theta_0 = \max_j \mu_j$. Under the conditions of Lemma \Cref{lem:Consistency-J-arm}, for any fixed $\alpha \in (0,1)$, $\Pr(T_n(\theta_0) \le c_{1-\alpha}^{ad}) \to 1 - \alpha$, where $T_n(\theta_0)$ is the profile empirical likelihood deviance.
\end{theorem}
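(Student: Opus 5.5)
The plan mirrors the proof of \Cref{thm:pointwise-coverage-two-arms}. The pre-test is replaced by the active set of \Cref{alg:active_set}, and the single kink law is replaced by the $J^*$-arm chi-bar-squared law of \Cref{sec:Many-arms-chi-bar-squared-adaption}. There are three steps.

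\textbf{Step 1: limit of $T_n(\theta_0)$.} I would first show that $T_n(\theta_0)\Rightarrow T^{(J^*)}$. Here $T^{(J^*)}$ denotes the squared distance from $Z\sim N(0,I_{J^*})$ to the boundary $\mathcal K$ of the negative orthant in $\mathbb R^{J^*}$. Its CDF is the mixture $\Pr(T\le t)$ displayed in \Cref{sec:Many-arms-chi-bar-squared-adaption} with $J=J^*$. When $J^*=1$ this mixture reduces to $\tfrac12F_1+\tfrac12F_1=F_1$, so it is just the $\chi^2_1$ limit.

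To obtain the limit, I would write the profile deviance over $\mathcal M(\theta_0)=\{\mu:\max_j\mu_j=\theta_0\}$. As in the two-arm case, independence across arms gives
\[
T_n(\theta_0)=\min_{m}\Big[\ell_{m,n}(\theta_0)+\sum_{j\neq m}\mathbf 1\{\bar X_j>\theta_0\}\,\ell_{j,n}(\theta_0)\Big].
\]
Each one-sample deviance minimizes at $\bar X_j$, so the constraint $\mu_j\le\theta_0$ binds only if $\bar X_j>\theta_0$. For suboptimal arms $j\notin\mathcal A^*$, we have $\Pr(\bar X_j>\theta_0)\to0$, and $\ell_{j,n}(\theta_0)\to\infty$ in probability. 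Hence, with probability approaching one, these indicator terms vanish and the minimum is attained at some $m\in\mathcal A^*$. For $j\in\mathcal A^*$, the standard one-sample EL expansion gives $\ell_{j,n}(\theta_0)=Z_{j,n}^2+o_p(1)$, where $Z_{j,n}=\sqrt{n_j}(\bar X_j-\theta_0)/\sigma_j$. The arms are independent, so $(Z_{j,n})_{j\in\mathcal A^*}\Rightarrow N(0,I_{J^*})$. The expression above is then a continuous function of these quantities, namely
\[
\min_{m\in\mathcal A^*}\Big[Z_m^2+\sum_{j\in\mathcal A^*\setminus\{m\}}Z_j^2\,\mathbf 1\{Z_j>0\}\Big],
\]
which is exactly the projection distance to $\mathcal K$. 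The continuous mapping theorem then yields $T_n(\theta_0)\Rightarrow T^{(J^*)}$.

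\textbf{Step 2: the critical value.} By \Cref{lem:Consistency-J-arm}, $\Pr(J_{eff}=J^*)\ge\Pr(\hat{\mathcal A}=\mathcal A^*)\to1$. Therefore $c^{ad}_{1-\alpha}=c^{(J^*)}_{1-\alpha}$ on an event $E_n$ with $\Pr(E_n)\to1$.

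\textbf{Step 3: coverage.} I would decompose exactly as in the two-arm theorem:
\[
\bigl|\Pr(T_n(\theta_0)\le c^{ad}_{1-\alpha})-\Pr(T_n(\theta_0)\le c^{(J^*)}_{1-\alpha})\bigr|\le 2\Pr(E_n^c)\to0 .
\]
The mixture CDF of $T^{(J^*)}$ is continuous, so $c^{(J^*)}_{1-\alpha}$ is a continuity point of the limit law. Portmanteau then gives $\Pr(T_n(\theta_0)\le c^{(J^*)}_{1-\alpha})\to\Pr(T^{(J^*)}\le c^{(J^*)}_{1-\alpha})=1-\alpha$.

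\textbf{Main obstacle.} The hardest part is Step 1. The finite-sample deviance involves all $J$ arms through a minimum over branches, and I must make rigorous that both the suboptimal arms and the non-minimizing branches contribute only $o_p(1)$. I would handle this by working on the event $\{\bar X_j<\theta_0\ \forall j\notin\mathcal A^*\}$, whose probability tends to one. On that event, every branch with $m\notin\mathcal A^*$ contains the term $\ell_{m,n}(\theta_0)\to_p\infty$ and so cannot achieve the minimum. Meanwhile, every branch with $m\in\mathcal A^*$ involves only the optimal arms. The remaining ingredient is the joint quadratic approximation of the one-sample deviances for the optimal arms, which is the standard Owen expansion applied arm by arm.
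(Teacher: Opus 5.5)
Your proposal is correct and takes essentially the paper's route: you use \Cref{lem:Consistency-J-arm} to replace $c^{ad}_{1-\alpha}$ by the oracle value $c^{(J^*)}_{1-\alpha}$ up to an $o(1)$ error, and you argue that the constraints of suboptimal arms are slack with probability approaching one, so that $T_n(\theta_0)\Rightarrow\bar\chi^2_{J^*}$. Your Step 1 is more careful than the paper's, which leaves three points implicit that you make explicit: you rule out the branches indexed by $m\notin\mathcal A^*$ via $\ell_{m,n}(\theta_0)\to_p\infty$, you write the limit as a continuous function of $(Z_{j,n})_{j\in\mathcal A^*}$, and you check continuity of the limiting CDF at $c^{(J^*)}_{1-\alpha}$.
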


\begin{proof}
We split the coverage probability based on the pre-test outcome into two terms. Because $c_{1-\alpha}^{ad} = c_{1-\alpha}^{(J^*)}$ under the event $\hat{\mathcal{A}} = \mathcal{A}^*$, we have,

\begin{equation}\label{eqn:Coverage-J-Arms-Decomposition}
    \Pr(T_n(\theta_0) \le c_{1-\alpha}^{ad}) = \Pr\left(T_n(\theta_0) \le c_{1-\alpha}^{(J^*)} \cap \hat{\mathcal{A}} = \mathcal{A}^*\right) + \Pr\left(T_n(\theta_0) \le c_{1-\alpha}^{(|\hat{\mathcal{A}}|)} \cap \hat{\mathcal{A}} \neq \mathcal{A}^*\right).
\end{equation}
The second term is bounded above by $\Pr(\hat{\mathcal{A}} \neq \mathcal{A}^*)$, which vanishes asymptotically by \Cref{lem:Consistency-J-arm}, so it is $o(1)$. To simplify the first term, we apply the Law of Total Probability to the marginal event $\{T_n(\theta_0) \le c_{1-\alpha}^{(J^*)}\}$ and rearrange the terms,
   $\Pr\left(T_n(\theta_0) \le c_{1-\alpha}^{(J^*)} \cap \hat{\mathcal{A}} = \mathcal{A}^*\right) = \Pr\left(T_n(\theta_0) \le c_{1-\alpha}^{(J^*)}\right) - \Pr\left(T_n(\theta_0) \le c_{1-\alpha}^{(J^*)} \cap \hat{\mathcal{A}} \neq \mathcal{A}^*\right)$. 
The second term on the right-hand side is bounded above by $\Pr(\hat{\mathcal{A}} \neq \mathcal{A}^*) \to 0$, rendering it $o(1)$. We substitute this back into \Cref{eqn:Coverage-J-Arms-Decomposition} to get the coverage probability,
$$ \Pr(T_n(\theta_0) \le c_{1-\alpha}^{ad}) = \Pr\left(T_n(\theta_0) \le c_{1-\alpha}^{(J^*)}\right) + o(1) $$

To establish the limiting distribution of $T_n(\theta_0)$, we first note that null hypothesis imposes the constraint $\max_j \mu_j = \theta_0$. For any suboptimal arm $m \not\in \mathcal{A}^*$, $\bar{X}_m \xrightarrow{p} \mu_m < \theta_0$. Consequently, the inequality constraint $\mu_m \le \theta_0$ is strictly slack with probability approaching 1, and the projection penalty for suboptimal arms converges to zero. 

The asymptotic distribution is entirely dictated by the $J^*$ active constraints from $\mathcal{A}^*$. Geometrically, this restricts the $J^*$-dimensional score vector to the boundary of the negative orthant. 
Therefore, the deviance converges in distribution to the $J^*$-arm limit, $T_n(\theta_0) \Rightarrow \bar{\chi}^2_{J^*}$.
From the definition of the critical value, $\Pr(\bar{\chi}^2_{J^*} \le c^{(J^*)}_{1-\alpha}) = 1 - \alpha$, which completes the proof.
\end{proof}

\section{Why standard cross-fitting underestimates the variance at the kink ($\mu_1=\mu_2$)} \label{app:Cross-Fit-Coverage}

This section discusses why the standard cross-fitting procedure will lead to undercoverage at the kink when the group means are equal ($\mu_1=\mu_2$). 
In short, at the kink, the cross-fitting induces a correlation across folds that leads the variance of the estimator to be underestimated.  

Consider the case where we split the data sample to two folds, compute two sample-splitting estimators, and average them for the cross-fitting estimator.
Intuitvely, at the kink, imagine Fold 1 has a large positive spike in arm 1's estimate. This single spike does two things simultaneously to the two sample-split estimators.
First, it forces sample-split estimator 1's decision because arm 1 looks huge in fold 1, estimator 1 decides to select arm 1.
Second, it inflates sample-split estimator 2's value because estimator 2 might independently decide to pick arm 1 from fold 2. 
If it does, then the estimator reads the value from fold 1 and is impacted by that same huge spike. 
Together this induces a correlation across arms from the random noise. 
If we are away from the kink, the selection will be driven by the true difference of the arms instead of the noise and this issue is alleviated.

We now work this out more formally and provide a correction term for the variance of the cross-fitting estimator at the kink. 
Consider two-fold cross-fitting with two groups $j\in\{1,2\}$ and folds $\ell\in\{1,2\}$.
Let $\hat\mu^{(\ell)}_j$ be the sample mean of group $j$ in fold $\ell$, that has $m$ observations, so
$Var(\hat\mu^{(\ell)}_j)=1/m$. Define the fold-$\ell$ selection indicator
$$
I^{(\ell)} = \mathbf{1}\{\hat\mu^{(\ell)}_1 > \hat\mu^{(\ell)}_2\}.
$$
From this notation, the two sample-splitting estimators are
$$
\hat\theta^{(1)} = I^{(1)}\hat\mu^{(2)}_1 + (1-I^{(1)})\hat\mu^{(2)}_2,\qquad
\hat\theta^{(2)} = I^{(2)}\hat\mu^{(1)}_1 + (1-I^{(2)})\hat\mu^{(1)}_2 .
$$
At the kink, $\mu_1=\mu_2$. Without loss of generality, we assume $\mu_1=\mu_2=0$, so $E[\hat\theta^{(1)}]=E[\hat\theta^{(2)}]=0$ and 
$$Cov(\hat\theta^{(1)},\hat\theta^{(2)}) = E[\hat\theta^{(1)}\hat\theta^{(2)}]. $$
Expanding the product inside the expectation term yields four terms:
\begin{align*}
\hat\theta^{(1)}\hat\theta^{(2)}
&= I^{(1)}I^{(2)}\,\hat\mu^{(2)}_1\hat\mu^{(1)}_1
+ I^{(1)}(1-I^{(2)})\,\hat\mu^{(2)}_1\hat\mu^{(1)}_2 \\
&\quad + (1-I^{(1)})I^{(2)}\,\hat\mu^{(2)}_2\hat\mu^{(1)}_1
+ (1-I^{(1)})(1-I^{(2)})\,\hat\mu^{(2)}_2\hat\mu^{(1)}_2 .
\end{align*}
To apply fold independence cleanly, we regroup each term into a fold-1 factor times a fold-2 factor. For the first term, we see
$$ I^{(1)}I^{(2)}\,\hat\mu^{(2)}_1\hat\mu^{(1)}_1 = (I^{(1)}\hat\mu^{(1)}_1)\,(I^{(2)}\hat\mu^{(2)}_1),$$
and similarly we rewrite for the other terms. Since all fold-1 quantities are independent of all fold-2 quantities, each expectation factors into a product of two within-fold expectations. For the first term, this implies 
$$ E[I^{(1)}I^{(2)}\,\hat\mu^{(2)}_1\hat\mu^{(1)}_1] = E[I^{(1)}\hat\mu^{(1)}_1]E[I^{(2)}\hat\mu^{(2)}_1].$$

We now turn our attention to computing the within fold values. We let $Z_1,Z_2\stackrel{iid}{\sim}N(0,1)$ and define $I=\mathbf{1}\{Z_1>Z_2\}$.
We can show,\footnote{To see this, we have $Z_1=(\Delta+S)/2$ with $\Delta=Z_1-Z_2\sim N(0,2)$, $S=Z_1+Z_2\sim N(0,2)$ independent,
so $E[Z_1\mathbf{1}\{\Delta>0\}]=\tfrac12E[\Delta\mathbf{1}\{\Delta>0\}]=\tfrac12\cdot \tfrac{\sqrt{2}}{\sqrt{2\pi}}=1/(2\sqrt{\pi})$.)}
$$ E[Z_1 I] = E[Z_1\mathbf{1}\{Z_1>Z_2\}] = \frac{1}{2\sqrt{\pi}}. $$

Scaling from $Z_j$ to $\hat\mu^{(\ell)}_j$ using the standard error of the group mean ($1/\sqrt m$) yields the constant
$$ K  = E\left[I^{(\ell)}\hat\mu^{(\ell)}_1\right] = \frac{1}{\sqrt m} \left(\frac{1}{2\sqrt{\pi}}\right). $$
By symmetry at the kink, the remaining within-fold expectations are
\[
E\!\left[I^{(\ell)}\hat\mu^{(\ell)}_2\right] = -K,\qquad
E\!\left[(1-I^{(\ell)})\hat\mu^{(\ell)}_2\right] = K,\qquad
E\!\left[(1-I^{(\ell)})\hat\mu^{(\ell)}_1\right] = -K.
\]

The expectations of each of the four terms are,
\begin{align*}
E\left[I^{(1)}I^{(2)}\hat\mu^{(2)}_1\hat\mu^{(1)}_1\right]
&= E[I^{(1)}\hat\mu^{(1)}_1]E[I^{(2)}\hat\mu^{(2)}_1] = K^2,\\
E\left[I^{(1)}(1-I^{(2)})\hat\mu^{(2)}_1\hat\mu^{(1)}_2\right]
&= E[I^{(1)}\hat\mu^{(1)}_2]E[(1-I^{(2)})\hat\mu^{(2)}_1] = (-K)(-K)=K^2,\\
E\left[(1-I^{(1)})I^{(2)}\hat\mu^{(2)}_2\hat\mu^{(1)}_1\right]
&= E[(1-I^{(1)})\hat\mu^{(1)}_1]E[I^{(2)}\hat\mu^{(2)}_2] = (-K)(-K)=K^2,\\
E\left[(1-I^{(1)})(1-I^{(2)})\hat\mu^{(2)}_2\hat\mu^{(1)}_2\right]
&= E[(1-I^{(1)})\hat\mu^{(1)}_2]E[(1-I^{(2)})\hat\mu^{(2)}_2] = K^2.
\end{align*}
Combining the terms, we attain,
$$ Cov(\hat\theta^{(1)},\hat\theta^{(2)}) = E[\hat\theta^{(1)}\hat\theta^{(2)}] = 4K^2
= 4\left(\frac{1}{m}\cdot\frac{1}{4\pi}\right) = \frac{1}{\pi m}. $$
At the kink $Var(\hat\theta^{(1)})=Var(\hat\theta^{(2)})=Var(\hat\mu^{(\ell)}_j)=1/m$, so we see that correlation across the two folds' estimates is 
$$
\rho = Corr(\hat\theta^{(1)},\hat\theta^{(2)}) = \frac{Cov(\hat\theta^{(1)},\hat\theta^{(2)})}{\sqrt{Var(\hat\theta^{(1)})Var(\hat\theta^{(2)})}} = \frac{1/(\pi m)}{1/m} = \frac{1}{\pi}.
$$
Finally, for the cross-fit average of the two folds, $\hat\theta_{\mathrm{CF}}=\frac{1}{2}(\hat\theta^{(1)}+\hat\theta^{(2)})$, we see
$$
Var(\hat\theta_{\mathrm{CF}}) = \frac14\Big(Var(\hat\theta^{(1)})+Var(\hat\theta^{(2)})+2Cov(\hat\theta^{(1)},\hat\theta^{(2)})\Big) = \frac{1}{2m}\left(1+\frac{1}{\pi}\right),
$$
so relative to the usual variance $\frac{1}{2m}$ implied from the the independent splits, the correct kink variance is inflated by a factor of $\left(1+\frac{1}{\pi}\right)$.

\section{Extended simulation details} \label{app:Simulation-Details}

\subsection{Additional methods}
Table~\ref{tab:appendix_additional_methods} summarizes all additional procedures used in our simulations.  Figures~\ref{fig:wc_global_app}--\ref{fig:cov_select_app} replicate the main simulation plots on an expanded set of methods; each panel varies $d$ (columns) and $\sigma$ (rows), and the legend matches the main figures.

\begin{table*}[h!]
\centering
\caption{Additional methods}
\label{tab:appendix_additional_methods}
\begingroup
\setlength{\tabcolsep}{3pt}
\renewcommand{\arraystretch}{1.02}
\setlength{\aboverulesep}{0pt}
\setlength{\belowrulesep}{0pt}
\setlength{\abovecaptionskip}{2pt}
\setlength{\belowcaptionskip}{0pt}

\resizebox{\textwidth}{!}{
\begin{tabularx}{\textwidth}{@{}>{\raggedright\arraybackslash}p{\methodcolw}Y@{}}
\toprule
\textbf{Method} & \textbf{Procedure} \\
\midrule

\textbf{Hong--Li 18 (numerical derivative bootstrap)} &
\algcell{
\textbf{Input:} $D=\{(T_i,Y_i)\}_{i=1}^N$, $B=200$.
Let $D_k=\{Y_i:T_i=k\}$, $n_k=|D_k|$, and $n=n_1+n_2$.
Compute $\bar{Y}_k(D)$ and $\hat{\phi}(D)=\max_{k}\bar{Y}_k(D)$.\\
Set $\varepsilon = N^{-1/4}$ (fallback: $n^{-1/4}$ if needed).\\[2pt]

\textbf{For} $b=1,\dots,B$ \textbf{do}\\
\hspace*{1em} Sample $D_1^{*(b)}$ by drawing $n_1$ elements from $D_1$ with replacement;
sample $D_2^{*(b)}$ by drawing $n_2$ elements from $D_2$ with replacement.\\
\hspace*{1em}Compute bootstrap means $\bar{Y}_k^{*(b)}=\bar{Y}_k(D_k^{*(b)})$ for $k\in\{1,2\}$.\\
\hspace*{1em}Form perturbations:
$h_k^{*(b)}=\sqrt{n}\,\big(\bar{Y}_k^{*(b)}-\bar{Y}_k(D)\big)$, $k\in\{1,2\}$.\\
\hspace*{1em}Compute shifted functional
$\hat{\phi}^{*(b)}_{\varepsilon}=\max_{k}\big\{\bar{Y}_k(D)+\varepsilon\,h_k^{*(b)}\big\}$.\\
\hspace*{1em}Compute numerical-derivative draw
$d^{*(b)}=\big(\hat{\phi}^{*(b)}_{\varepsilon}-\hat{\phi}(D)\big)/\varepsilon$.\\
\textbf{End for.}\\[4pt]

\textbf{Bias-corrected point estimate:}\;
$\displaystyle
\hat{\phi}_{\mathrm{HL}}
=\frac{1}{B}\sum_{b=1}^{B}\left(\hat{\phi}(D)-\frac{d^{*(b)}}{\sqrt{n}}\right)
=\hat{\phi}(D)-\frac{\bar d}{\sqrt{n}},\quad
\bar d=\frac{1}{B}\sum_{b=1}^{B} d^{*(b)}.$
} \\ \hline

\textbf{AKM24 Conditional (AKM--Cond)} &
\algcell{
\textbf{Input:} $x_W=\hat{\phi}(D)$, $x_L$, $sd_W$, $\alpha=0.05$.\\
(Here $x_L$ is the nonwinning sample mean and $sd_W=\hat{se}(\bar{Y}_{\hat{k}})$ is the winning-arm standard error.)\\
\textbf{Point estimate:} solve for $\hat{\mu}_{\mathrm{cond}}$ such that
$F_{\mathrm{TN}}(x_W;\mu,sd_W,L=x_L,U=\infty)=\tfrac12$.\\
\textbf{95\% CI:} solve for $(\mu^L,\mu^U)$ such that
$F_{\mathrm{TN}}(x_W;\mu^L,\cdot)=1-\alpha/2$ and $F_{\mathrm{TN}}(x_W;\mu^U,\cdot)=\alpha/2$,
with the same truncation $[x_L,\infty)$.
} \\ \hline

\textbf{AKM24 Projection (AKM--Proj)} &
\algcell{
\textbf{Input:} $x_W=\hat{\phi}(D)$, $sd_W$, $\alpha=0.05$, $K=2$.\\
Let $c_{\alpha}=\Phi^{-1}\!\left(\frac{1+(1-\alpha)^{1/K}}{2}\right)$.\\
\textbf{95\% CI:} $\big[x_W-c_{\alpha}sd_W,\;x_W+c_{\alpha}sd_W\big]$.
} \\ \hline

\textbf{Empirical likelihood (EL, chi-square)} &
\algcell{
\textbf{Input:} arm samples $y_1=\{Y_i:T_i=1\}$ and $y_2=\{Y_i:T_i=2\}$, $\alpha=0.05$.\\
\textbf{Critical value (standard):}\;
$q_{\chi^2}(1-\alpha)=\chi^2_{1,\,1-\alpha}$, i.e.,
$\Pr\!\big(\chi^2_1 \le q_{\chi^2}(1-\alpha)\big)=1-\alpha$.\\
\textbf{95\% CI:}\; $\{\theta:\ \mathrm{dev}(\theta)\le \chi^2_{1,\,1-\alpha}\}$.

} \\ \hline

\textbf{Empirical likelihood (EL, chi-bar)} &
\algcell{
\textbf{Input:} arm samples $y_1=\{Y_i:T_i=1\}$ and $y_2=\{Y_i:T_i=2\}$, $\alpha=0.05$.\\
\textbf{Critical value (chi-bar):}\; $q_{\bar{\chi}^2}(1-\alpha)$ solves
$\Pr\!\big(\bar{\chi}^2 \le q_{\bar{\chi}^2}(1-\alpha)\big)=1-\alpha$, where
$\bar{\chi}^2$ is the chi-bar-squared null distribution for the max-mean constraint.\\
\textbf{95\% CI:}\; $\{\theta:\ \mathrm{dev}(\theta)\le q_{\bar{\chi}^2}(1-\alpha)\}$.

  } \\\hline
  \bottomrule
\end{tabularx}
}
\endgroup
\end{table*}

\begin{figure}[!ht]
    \centering
    \caption{Winner's Curse Bias (Global) }
    \includegraphics[width=0.9 \textwidth]{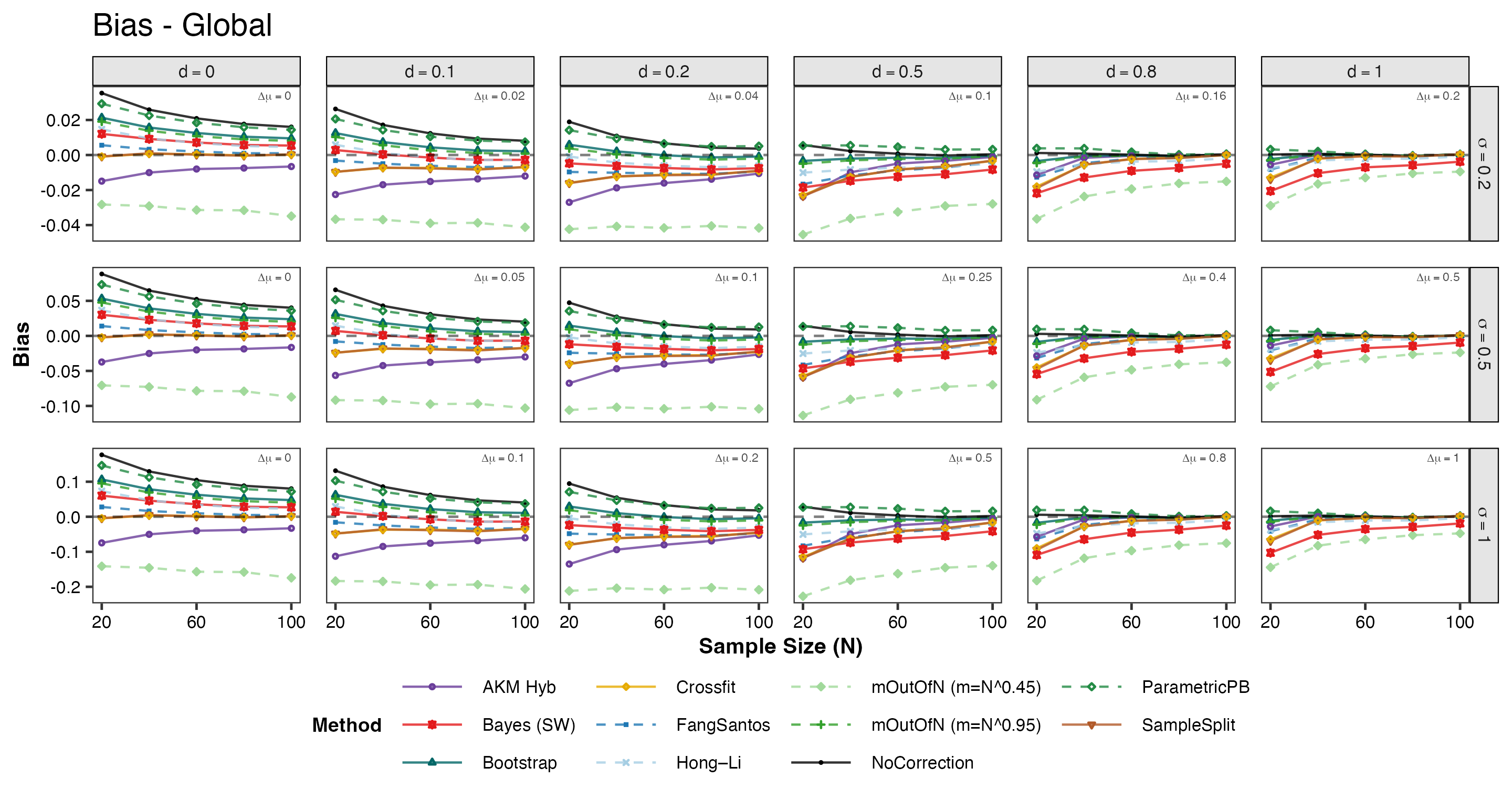}
    \label{fig:wc_global_app}

    \footnotesize
    Notes: WC Global bias as a function of sample size $N$. Panels vary by $\Delta\mu$ (columns) and $\sigma$ (rows). The legend matches Figure~\ref{fig:wc_global}, with the additional method Hong--Li.
\end{figure}

\normalsize

\begin{figure}[!ht]
    \centering
    \caption{Winner's Curse Bias (Select)}
    \includegraphics[width=0.9 \textwidth]{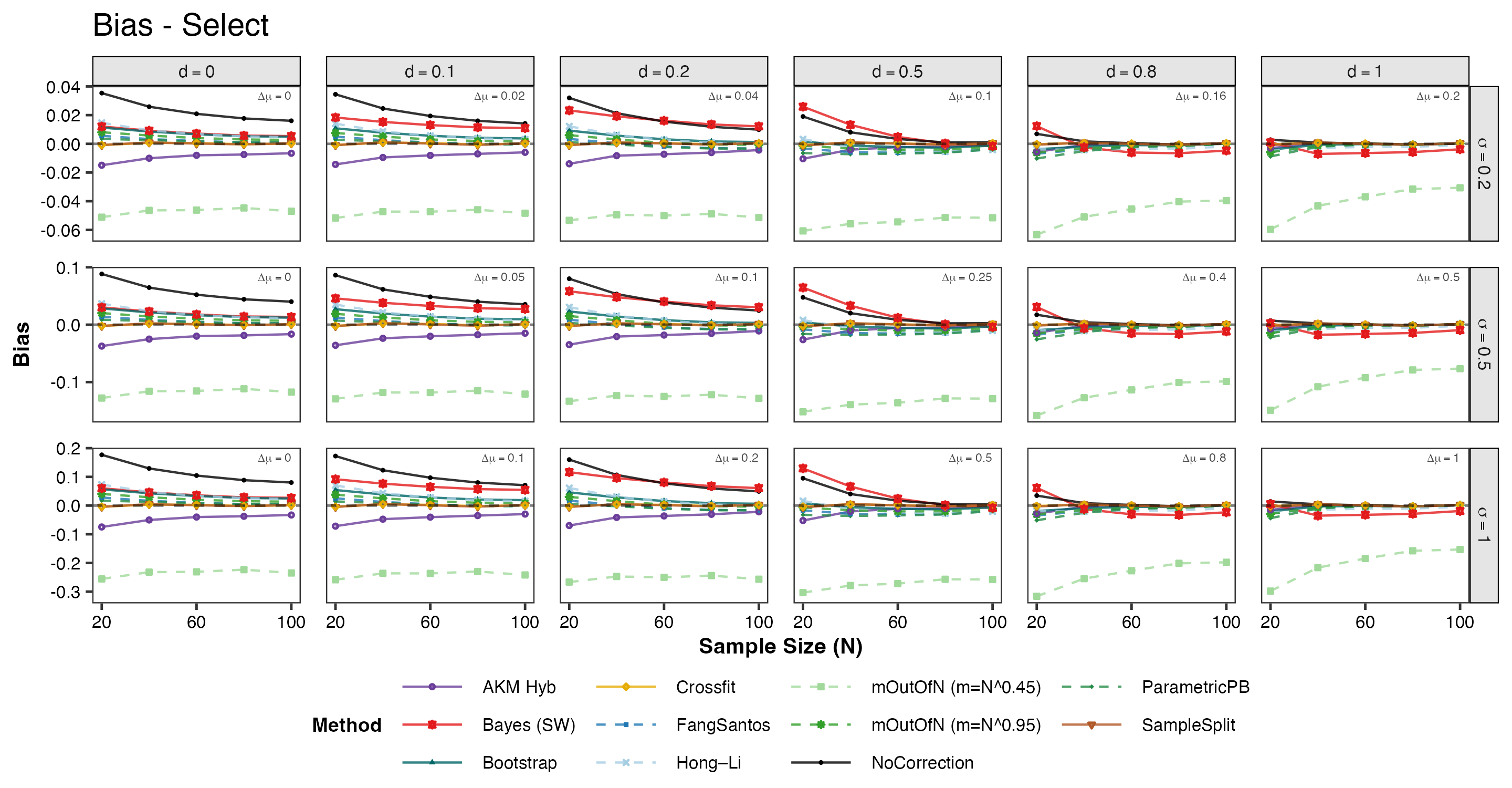}
    \label{fig:wc_select_app}
    \footnotesize
    
    Notes: WC Select bias as a function of sample size $N$. Panels vary by $\Delta\mu$ (columns) and $\sigma$ (rows). The legend matches Figure~\ref{fig:wc_global_app}.
\end{figure}

\normalsize

\begin{figure}[!ht]
    \centering
    \caption{Mean Squared Error (Global)}
    \includegraphics[width=0.9 \textwidth]{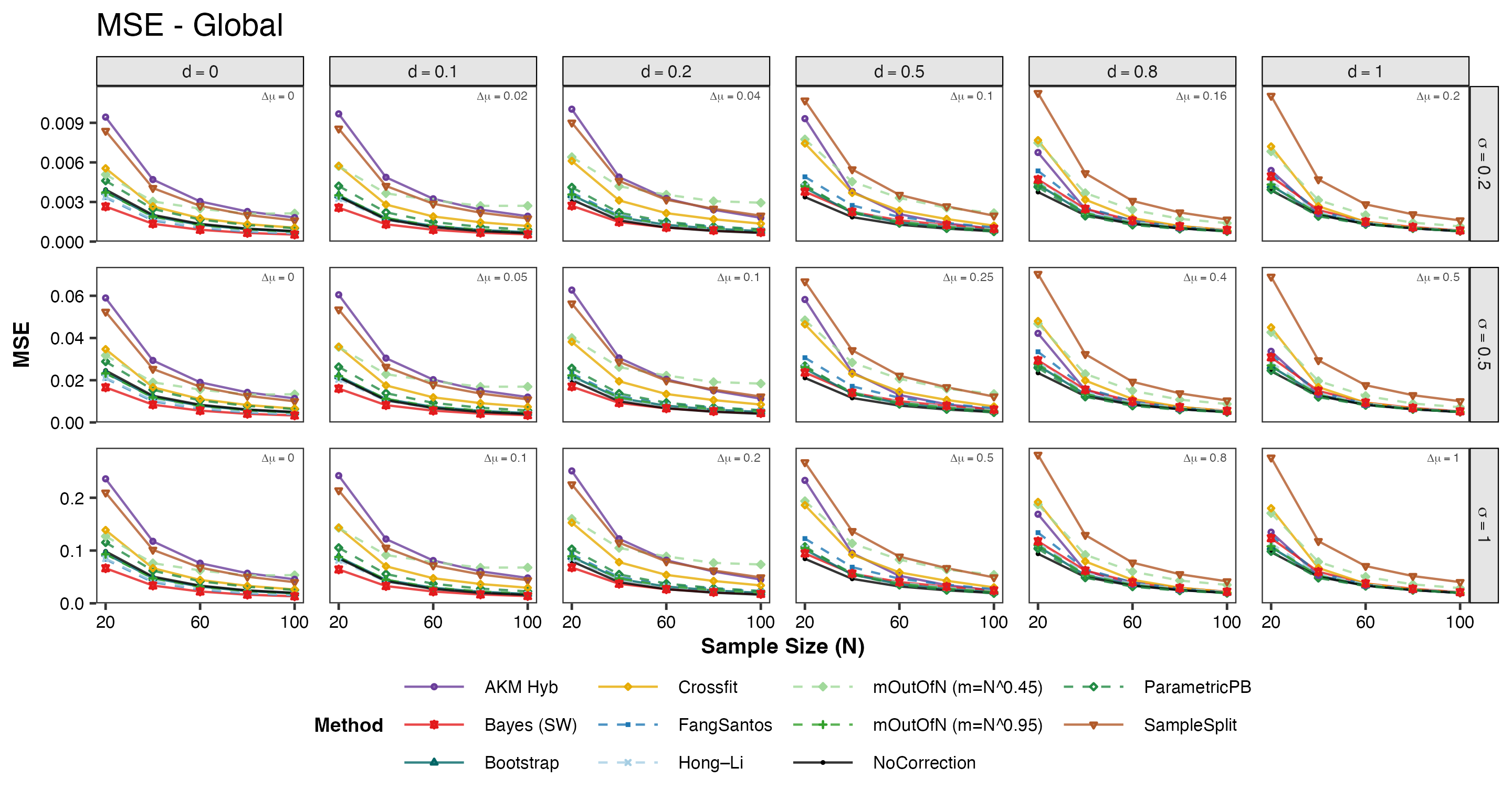}
    \label{fig:mse_global_app}
        \footnotesize

        Notes: Mean squared error (MSE) for WC Global as a function of sample size $N$. Panels vary by $\Delta\mu$ (columns) and $\sigma$ (rows). Legend as in Figure~\ref{fig:wc_global_app}.
\end{figure}

\normalsize

\begin{figure}[!ht]
    \centering
    \caption{Mean Squared Error (Select)}
    \includegraphics[width=0.9 \textwidth]{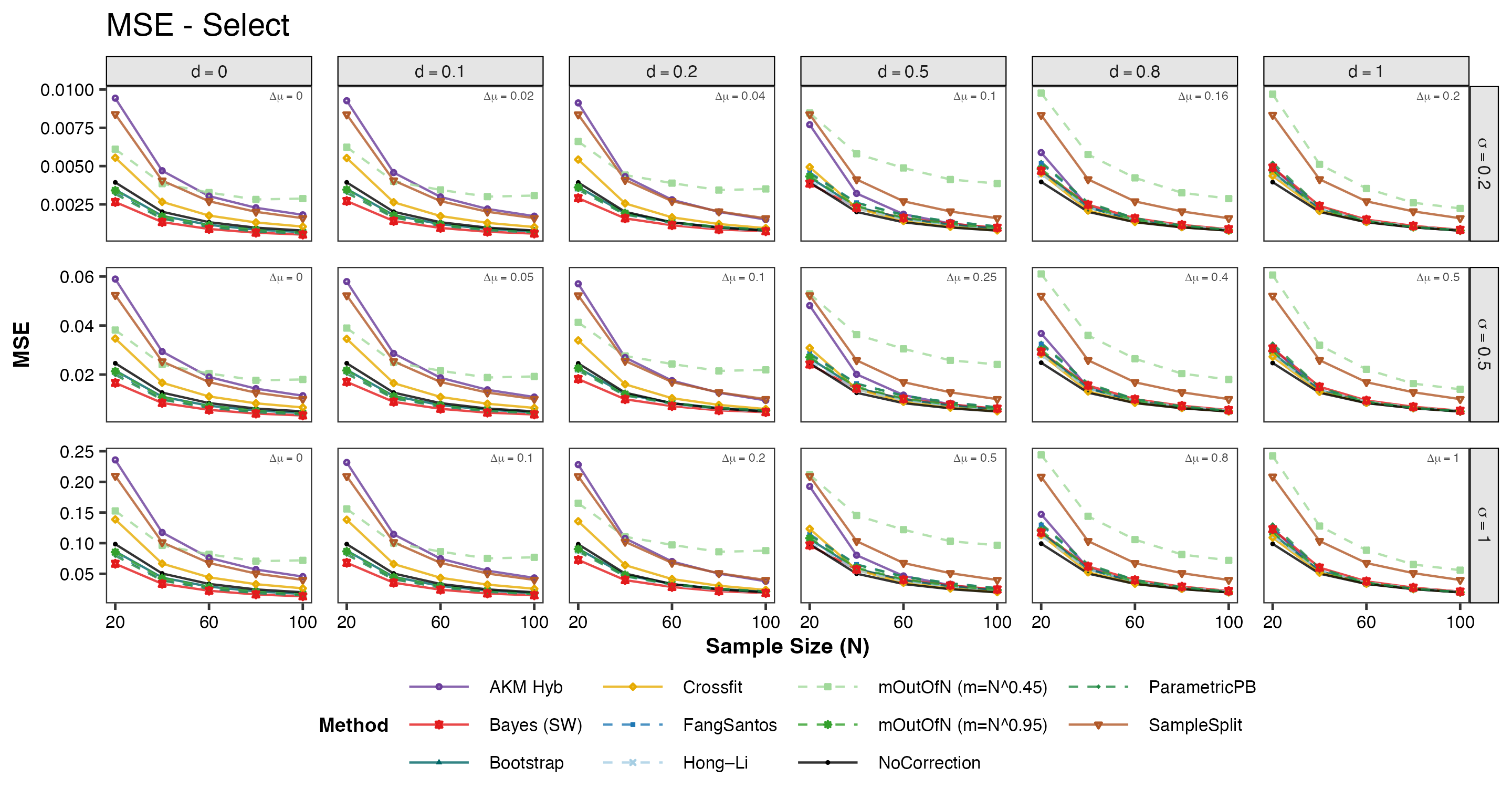}
    \label{fig:mse_select_app}
        \footnotesize

        Notes: Mean squared error (MSE) for WC Select as a function of sample size $N$. Panels vary by $\Delta\mu$ (columns) and $\sigma$ (rows). Legend as in Figure~\ref{fig:wc_global_app}.
\end{figure}

\normalsize

\begin{figure}[!ht]
    \centering
    \caption{Coverage (Global)}
    \includegraphics[width=0.9 \textwidth]{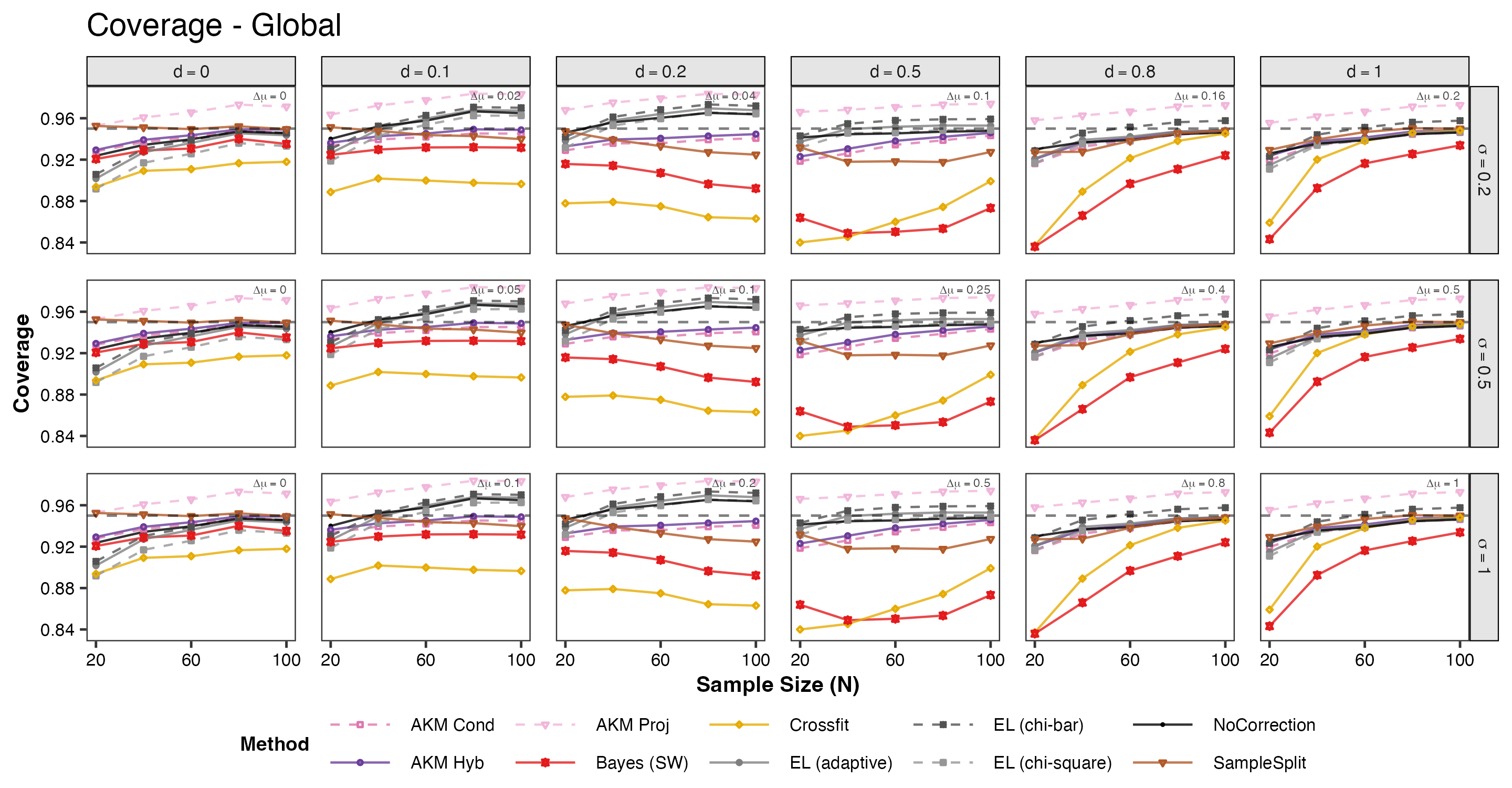}
    \label{fig:cov_global_app}

\footnotesize
    Empirical coverage of $\alpha$-level two-sided confidence intervals for $\mu_{k^*}$, i.e., $\Pr\left(\mu_{k^*}\in CI_\alpha(\hat\mu_{\hat k})\right)$, as a function of sample size $N$. Panels vary by $\Delta\mu$ (columns) and $\sigma$ (rows).   Legend as in Figure~\ref{fig:cov_global}, adding \cite{andrews2024inference} (conditional, projection) and EL with $\chi^2$ or chi-bar-squared critical values.
\end{figure}

\normalsize

\begin{figure}[!ht]
    \centering
    \caption{Coverage (Select)}
    \includegraphics[width=0.9 \textwidth]{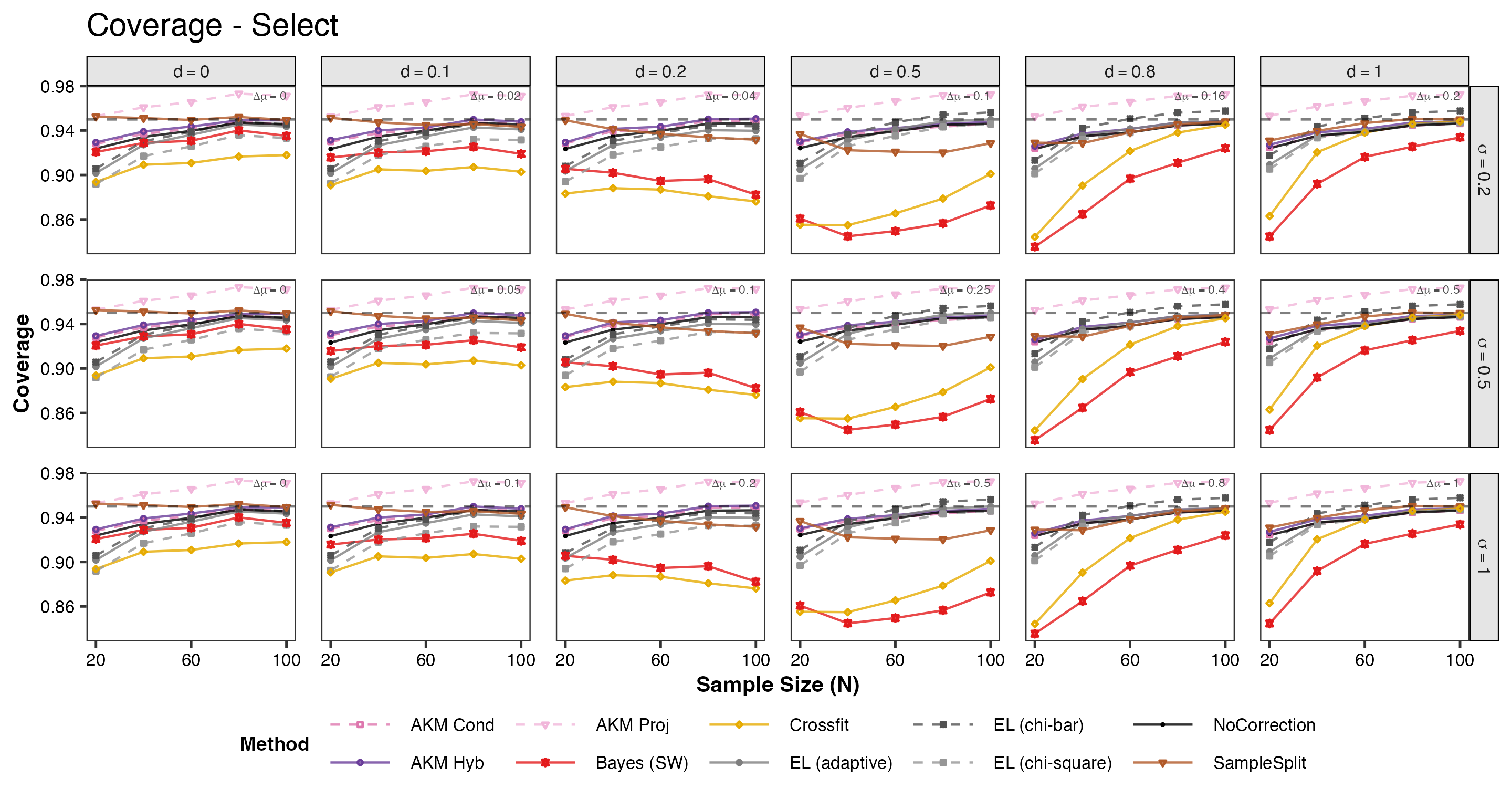}
    \label{fig:cov_select_app}

    \footnotesize
     Notes: Empirical coverage of $\alpha$-level two-sided confidence intervals for $\mu_{\hat k}$, i.e., $\Pr\left(\mu_{\hat k}\in CI_\alpha(\hat\mu_{\hat k})\right)$, as a function of sample size $N$. Panels vary by $\Delta\mu$ (columns) and $\sigma$ (rows). Legend as in Figure~\ref{fig:cov_global_app}.
\end{figure}

\normalsize

\clearpage
\subsection{Binomial distribution}

\begin{figure}[!ht]                                                         
    \centering                                                   \caption{Winner's Curse Bias (Global)}                                      
        
    \includegraphics[width=0.9 \textwidth]{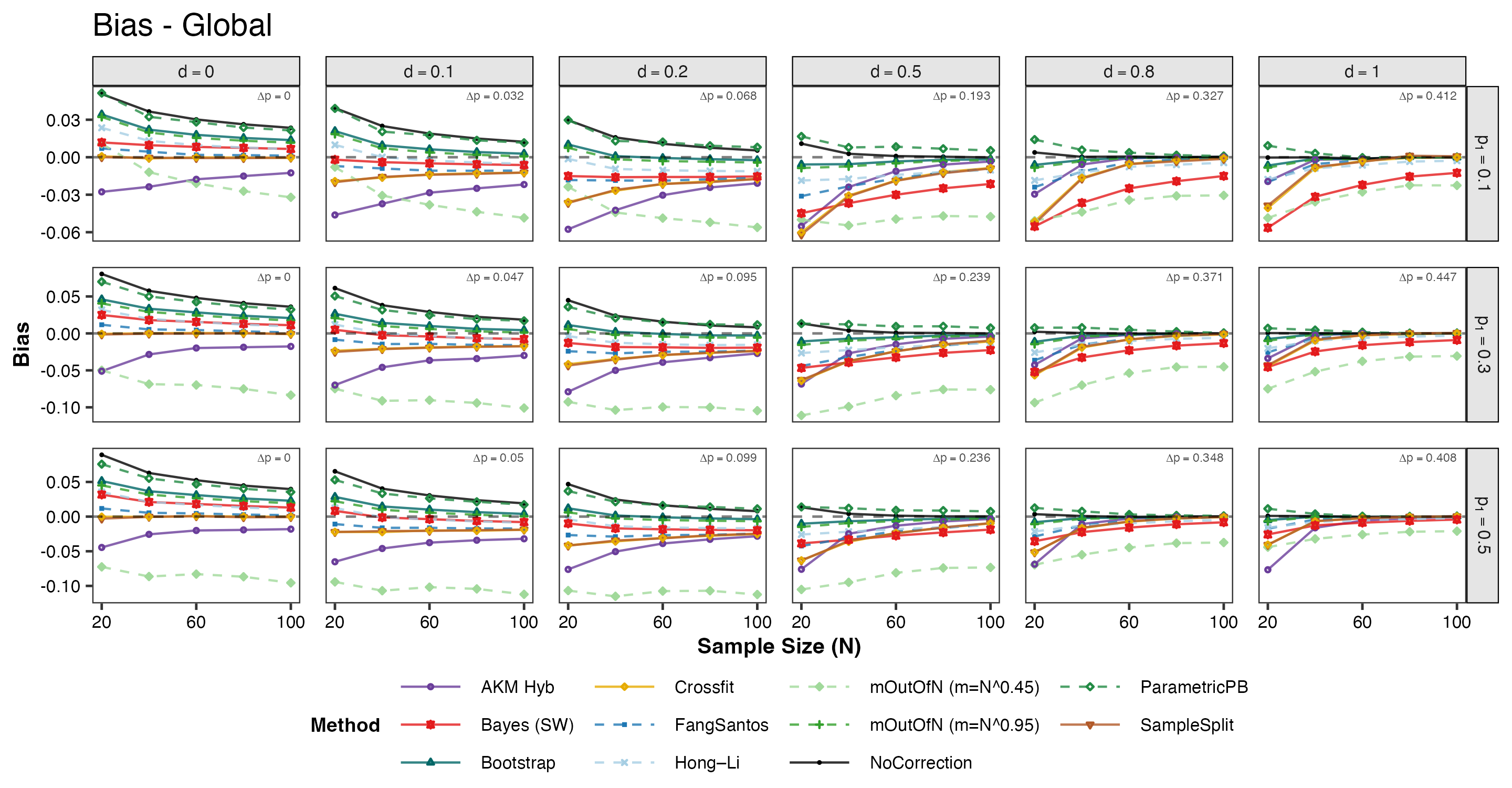}             
    \label{fig:dgrid-bias-global}                                               
    \footnotesize                                  
    
    \textit{Notes:} WC Global as a function of sample size. Panels vary by      
    Cohen's $d$ (columns, $d \equiv \Delta\mu/\sigma$) and $\sigma$ (rows);     
    $\Delta\mu$ shown as an inset annotation in each panel.         
  \end{figure}                                                         
  \begin{figure}[!htbp]                                                         
    \centering                                                 
        \caption{Winner's Curse Bias (Select)}                     
\includegraphics[width=0.9 \textwidth]{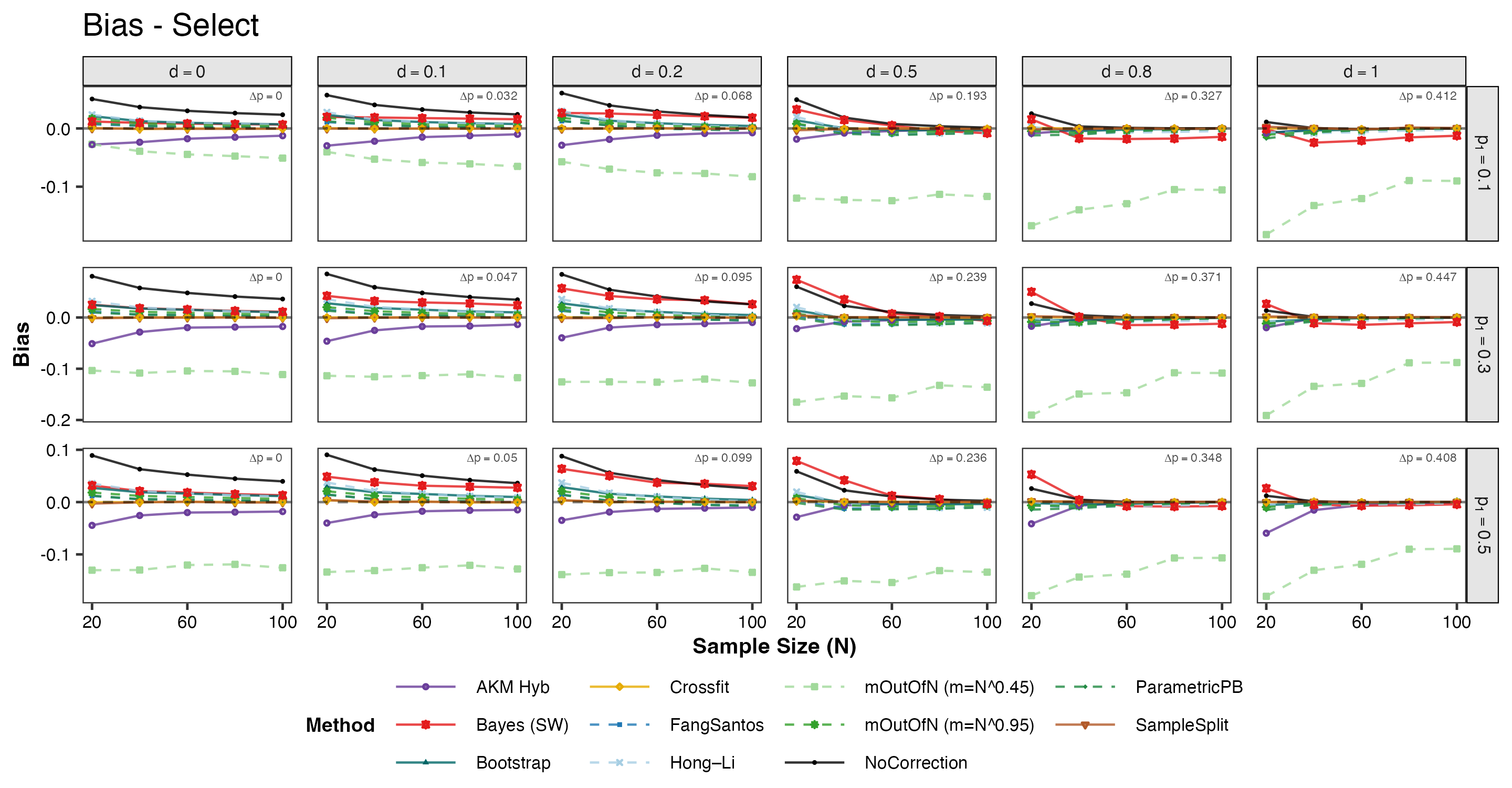}             
    \label{fig:dgrid-bias-select}
    \footnotesize                             
    
    \textit{Notes:} WC Select bias as a function of sample size $N$.
    Panels vary by Cohen's $d$ (columns) and $\sigma$ (rows).                   
  \end{figure}                                                 

  \begin{figure}[!htbp]
    \centering
        \caption{Mean Squared Error (Global)}
\includegraphics[width=0.9 \textwidth]{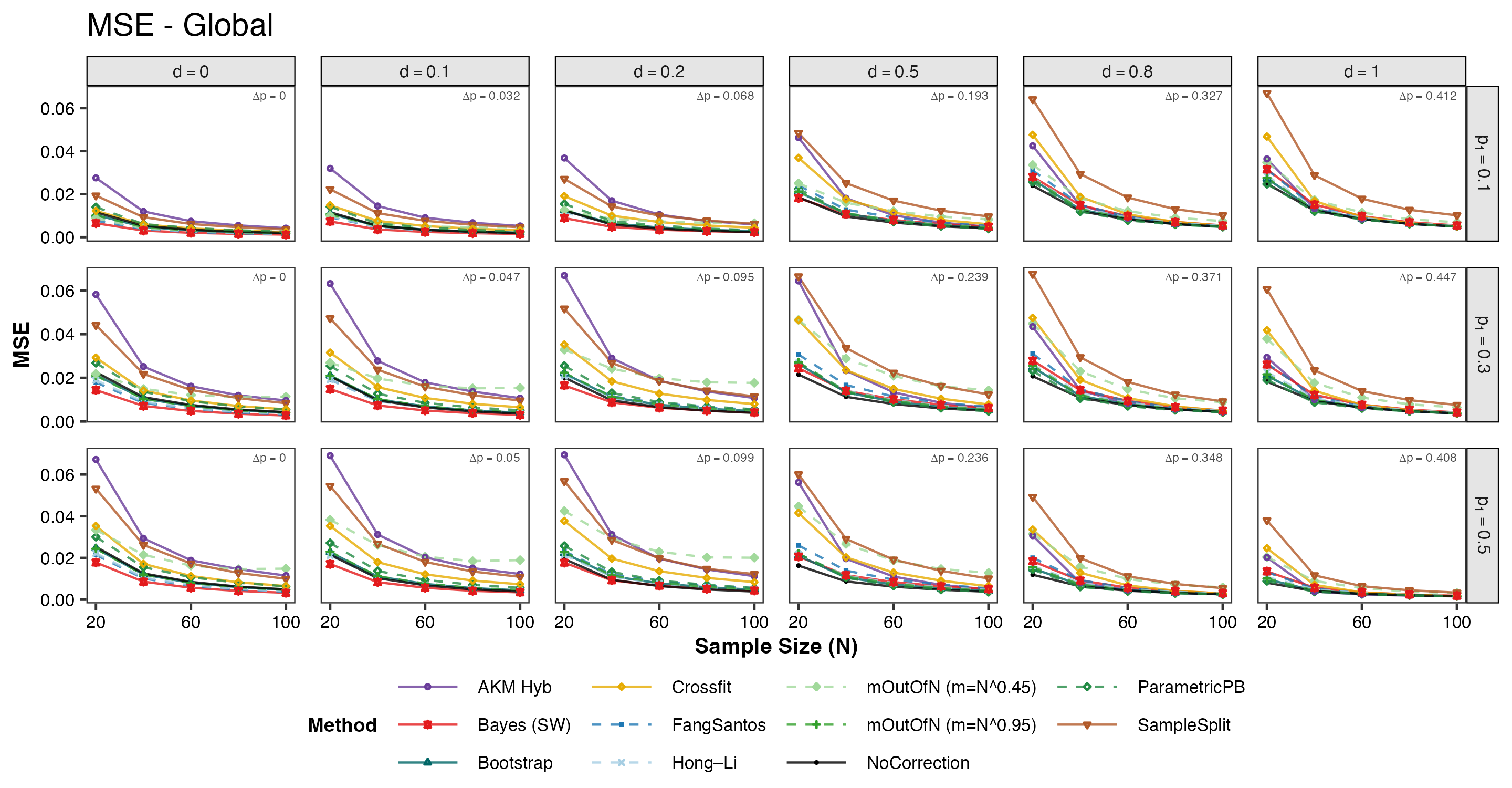}
    \label{fig:dgrid-mse-global}              
    \footnotesize                         
    
    \textit{Notes:} MSE for WC Global as a function of sample size $N$.
    Panels vary by Cohen's $d$ (columns) and $\sigma$ (rows).                   
  \end{figure}                            
                                                  
  \begin{figure}[!htbp]                                                         
    \centering
        \caption{Mean Squared Error (Select)}                      
\includegraphics[width=0.9 \textwidth]{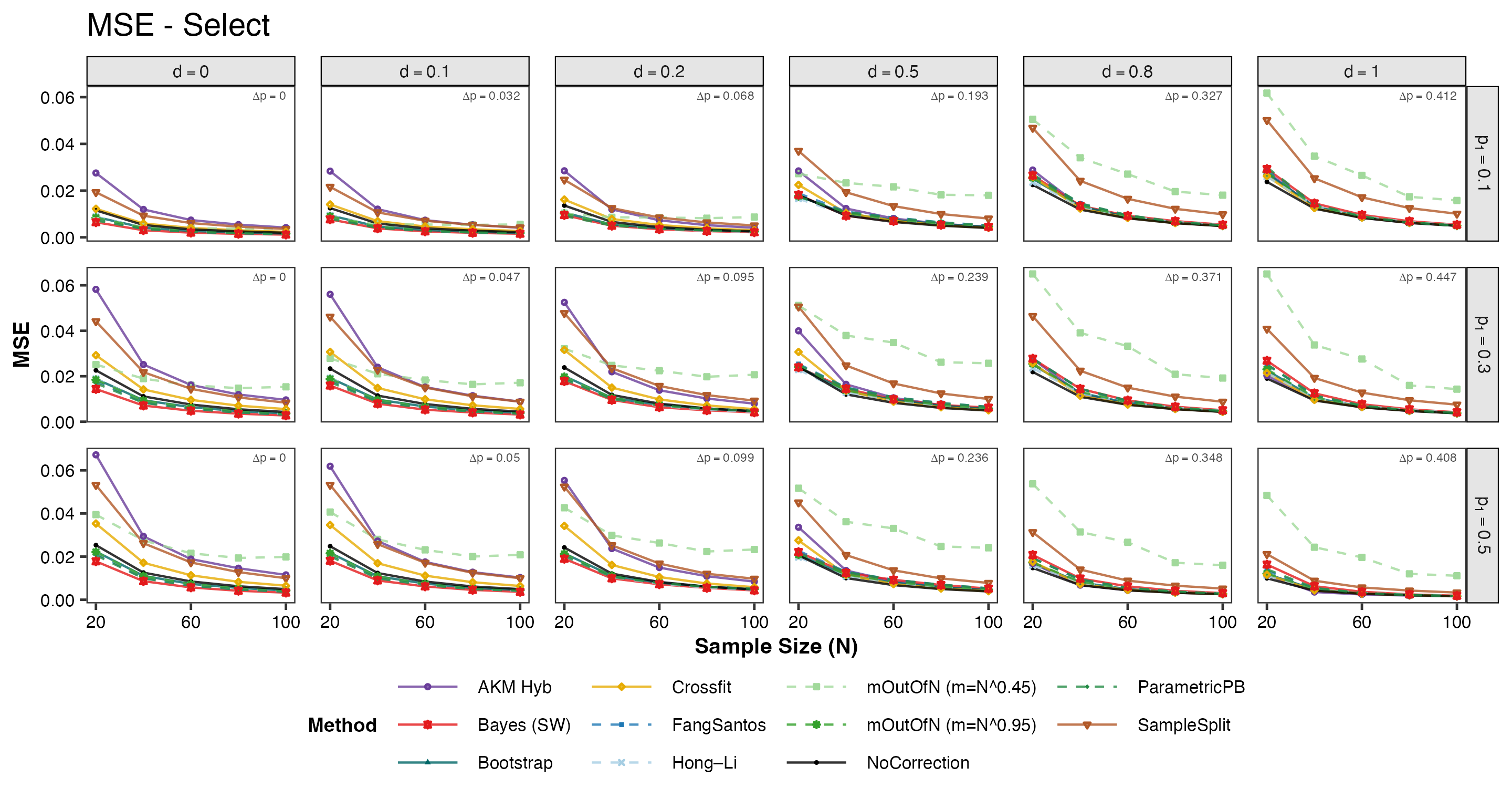}            
    \label{fig:dgrid-mse-select}          
    \footnotesize                             
    
    \textit{Notes:} MSE for WC Select as a function of sample size $N$.         
    Panels vary by Cohen's $d$ (columns) and $\sigma$ (rows).
  \end{figure}                                                                  
                                                 
  \begin{figure}[!htbp]                                                         
    \centering                             \caption{Coverage (Global)}                                                 
      \includegraphics[width=0.9 \textwidth]{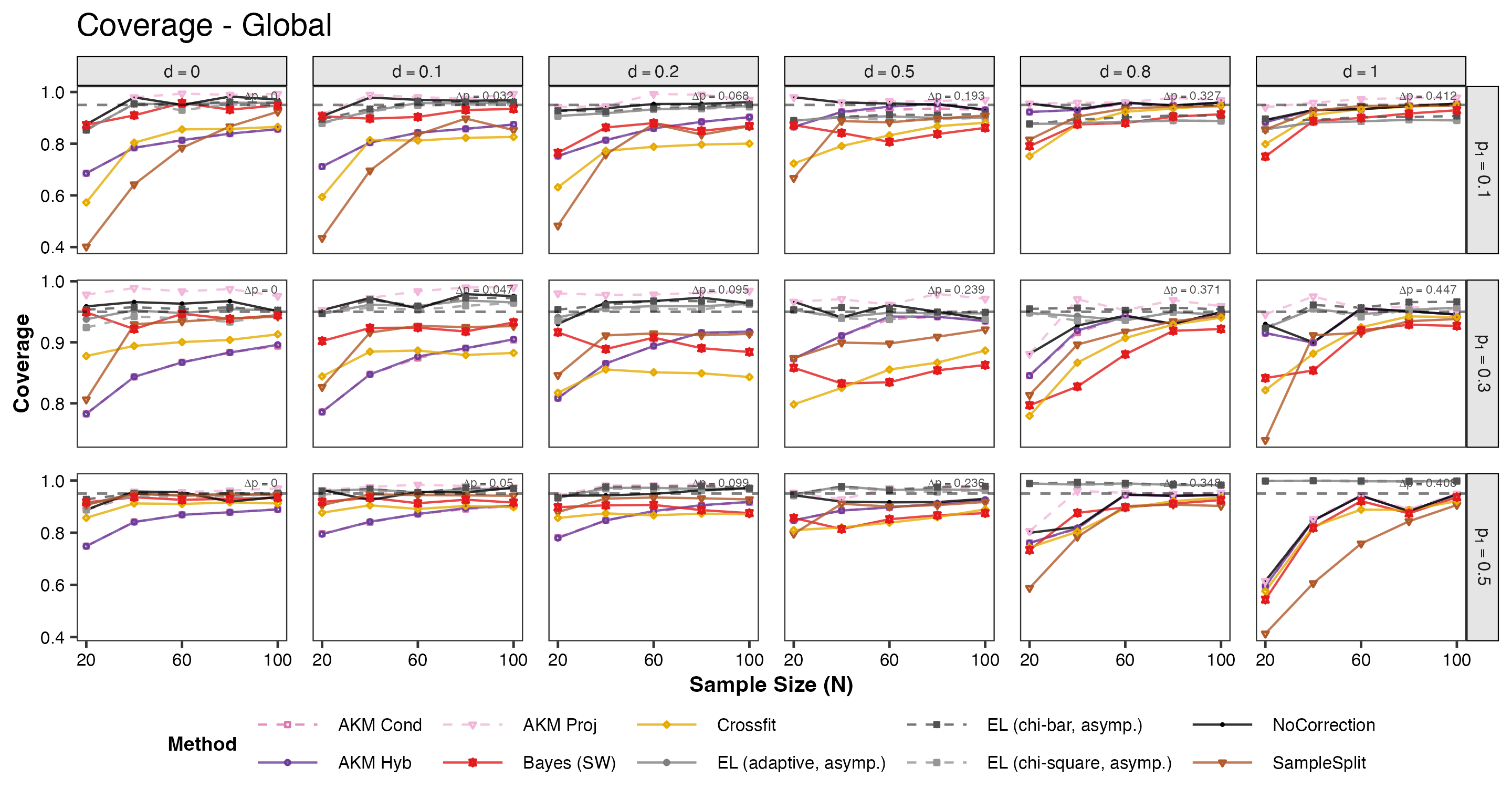}
    \label{fig:dgrid-cov-global}                               
    \footnotesize
    
    \textit{Notes:} Empirical coverage of 95\% two-sided CIs for $\mu_{k^*}$,   
  i.e.,  
    $\Pr(\mu_{k^*}\in CI_\alpha(\hat{\mu}_{\hat k}))$, as a function of sample  
  size $N$.  Panels vary by Cohen's $d$ (columns) and $\sigma$ (rows).                                 
  \end{figure}                                                 
                                      
  \begin{figure}[!htbp]                                        
    \centering                                                             \caption{Coverage (Select)}                        
    \includegraphics[width=0.9 \textwidth]{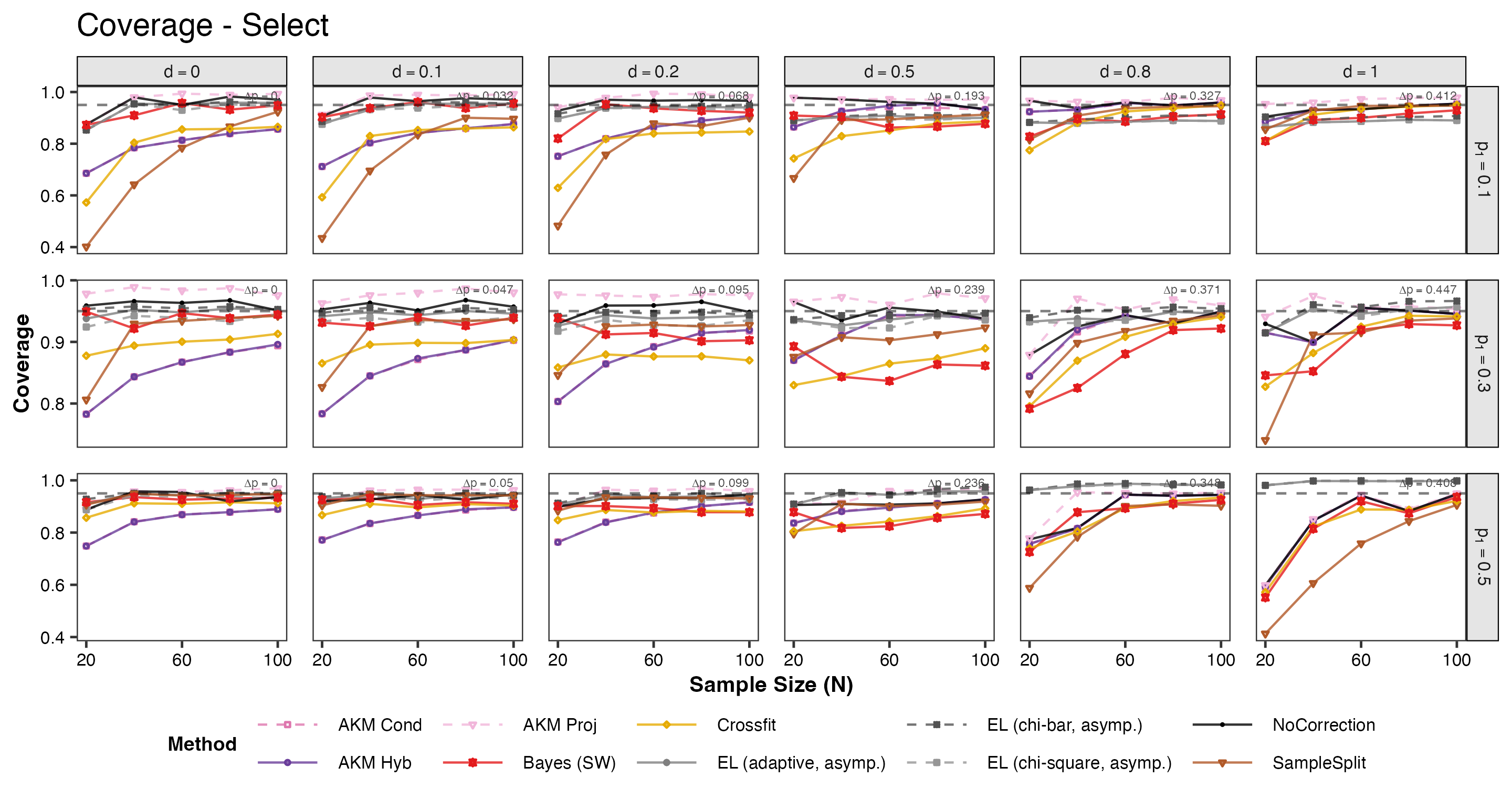}
    \label{fig:dgrid-cov-select}          
    \footnotesize

    \textit{Notes:} Empirical coverage of 95\% two-sided CIs for                
  $\mu_{\hat{k}}$, i.e.,                      
    $\Pr(\mu_{\hat k}\in CI_\alpha(\hat{\mu}_{\hat k}))$, as a function of      
  sample size $N$.                                             
    Panels vary by Cohen's $d$ (columns) and $\sigma$ (rows).                   
  \end{figure}   

\clearpage

\section{Multiple Arms}

\begin{table}[!ht]
\centering
\caption{Key Modifications for $K$-Arm Generalization}
\label{tab:k-arm-changes}
\footnotesize
\renewcommand{\arraystretch}{1.1}
\begin{tabular}{p{2.2cm} p{13.2cm}}
\toprule
\textbf{Method} & \textbf{Modification for $K$ Arms} \\
\midrule

Standard nonparametric bootstrap
& Replace $\max$ over 2 with $\max$ over $K$. Stratified fallback if any arm is absent from a pooled resample. \\[2pt]
\hline \\[-6pt]

$m$-out-of-$n$
& $m = \max(K, \lfloor N^\gamma \rfloor)$. Proportional-allocation fallback ($\lfloor m/K \rfloor$ per arm) if any arm is absent. \\[2pt]
\hline \\[-6pt]

Parametric PB
& Arm-specific pairwise SE $\widehat{se}_{\text{diff},k} = \sqrt{\widehat{se}_{\hat k}^2 + \widehat{se}_k^2}$ and threshold $\eta_{N,k} = c_\eta \widehat{se}_{\text{diff},k}\sqrt{2\log\log N}$ for each $k \neq \hat k$. Arm $k$ joins the active set $\hat A$ if $|\bar Y_{\hat k} - \bar Y_k| \leq \eta_{N,k}$. Active-set arms shrunk to group mean $\tilde\tau_k = |\hat A|^{-1}\sum_{j \in \hat A}\bar Y_j$; others keep $\bar Y_k$. \\[2pt]
\hline \\[-6pt]

Fang \& Santos
& \textbf{Pre-test:} $S_k = (\bar Y_{\hat k} - \bar Y_k)/\widehat{se}_{\text{diff},k}$; active set $\hat A = \{k : S_k \leq \kappa_n\}$, $\kappa_n = \sqrt{\log(\min_k n_k)}$. Winner always in $\hat A$.
\textbf{Bootstrap:} Stratified; $h_k^{*(b)} = \sqrt{n_k}(\bar Y_k^{*(b)} - \bar Y_k)$.
\textbf{Derivative:} $d^{*(b)} = \max_{k \in \hat A} h_k^{*(b)}$.
\textbf{Correction:} $\hat\phi_{FS} = \hat\phi - \frac{1}{B}\sum_b d^{*(b)} / \sqrt{n_{\text{eff}}}$, $n_{\text{eff}} = \min_k n_k$. \\[2pt]
\hline \\[-6pt]

Hong \&  Li
& Stratified bootstrap perturbations $h_k^{*(b)} = \sqrt{N}(\bar Y_k^{*(b)} - \bar Y_k)$ for $k=1,\ldots,K$. Shifted functional $\hat\phi_\varepsilon^{*(b)} = \max_{k}\{\bar Y_k + \varepsilon h_k^{*(b)}\}$, $\varepsilon = N^{-1/4}$. Numerical derivative $d^{*(b)} = (\hat\phi_\varepsilon^{*(b)} - \hat\phi)/\varepsilon$. Correction: $\hat\phi_{HL} = \hat\phi - \frac{1}{B}\sum_b d^{*(b)}/\sqrt{N}$.  \\[2pt]
\hline \\[-6pt]

Sample split
& $\hat k_A = \arg\max_{k=1,\ldots,K}\bar Y_k(D_A)$. Retry splitting to ensure both folds contain all $K$ arms with $\geq 2$ obs each. \\[2pt]
\hline \\[-6pt]

Cross-fit
& Same as sample split in both directions, with $\arg\max$ over $K$. \\[2pt]
\hline \\[-6pt]

AKM Cond.
& $x_L = \bar Y_{(2)}$ (second-highest mean). Independence $\Rightarrow$ $\{\hat k \text{ wins}\}$ reduces to $\bar Y_{\hat k} \geq x_L$. Truncated normal on $[x_L, \infty)$ unchanged. \\[2pt]
\hline \\[-6pt]

AKM Proj.
& $c_\alpha(K) = \Phi^{-1}\!\bigl(\frac{1+(1-\alpha)^{1/K}}{2}\bigr)$. CI: $\bar Y_{\hat k} \pm c_\alpha(K)\cdot \widehat{se}_{\hat k}$. \\[2pt]
\hline \\[-6pt]

AKM Hybrid
& Projection bounds use $c_\beta(K)$: $L(\mu) = \max\{x_L, \mu - c_\beta(K) sd_W\}$, $U(\mu) = \mu + c_\beta(K) sd_W$. \\[2pt]
\hline \\[-6pt]

EL (3 variants)
& \textbf{Profile deviance} over $K$ branches: $\text{dev}(\theta) = \min_k\{D_k(\theta) + \sum_{j\neq k}\mathbf{1}\{\bar Y_j > \theta\}D_j(\theta)\}$.
\textbf{Chi-bar-sq CDF} for $J$ active arms: $\Pr(T \leq t) = \sum_{k=1}^{J}\binom{J}{k}(\frac{1}{2})^{J} F_k(t) + (\frac{1}{2})^{J}[1-(1-F_1(t))^J]$, extending Eq.~(23).
Active set: $J_{\text{eff}} = |\{k : S_k \leq \kappa_n\}|$, $S_k = (\bar Y_{\hat k} - \bar Y_k)/(\hat\sigma\sqrt{2/n_{\min}})$. Variants: \emph{adaptive} ($J_{\text{eff}}$), \emph{chi-bar} ($J=K$), $\chi^2$ ($J=1$). \\[2pt]
\hline \\[-6pt]

Bayes (S\&W, 2006)
& Grand mean $\bar\mu = K^{-1}\sum_k \bar Y_k$. MOM prior variance $\hat\sigma^2_\mu = \max\bigl(0, \frac{1}{K-1}\sum_k(\bar Y_k - \bar\mu)^2 - \frac{1}{K}\sum_k \widehat{se}_k^2\bigr)$. Shrinkage $\alpha_k = \hat\sigma^2_\mu/(\hat\sigma^2_\mu + \widehat{se}_k^2)$; posterior $\hat v_k = \alpha_k\bar Y_k + (1-\alpha_k)\bar\mu$. Select arm with highest $\hat v_k$. \\

\bottomrule
\end{tabular}

\vspace{3pt}
{\scriptsize \textit{Notes:} $\hat k = \arg\max_k \bar Y_k$, $n_{\min} = \min_k n_k$, $\widehat{se}_k = \hat\sigma_k/\sqrt{n_k}$, $\widehat{se}_{\text{diff},k} = \sqrt{\widehat{se}_{\hat k}^2 + \widehat{se}_k^2}$, and $\hat\sigma = \sqrt{K^{-1}\sum_k\hat\sigma_k^2}$. All other details ($B=200$, split $1/2$, $\alpha=0.05$, $\beta = \alpha/10$) identical to the two-arm case in Tables~3 and~5.}
\end{table}
\begin{figure}[!ht]
      \centering
      \caption{Winner's Curse Bias (Global) for $K$ Arms}
      \includegraphics[width=0.8 \textwidth]{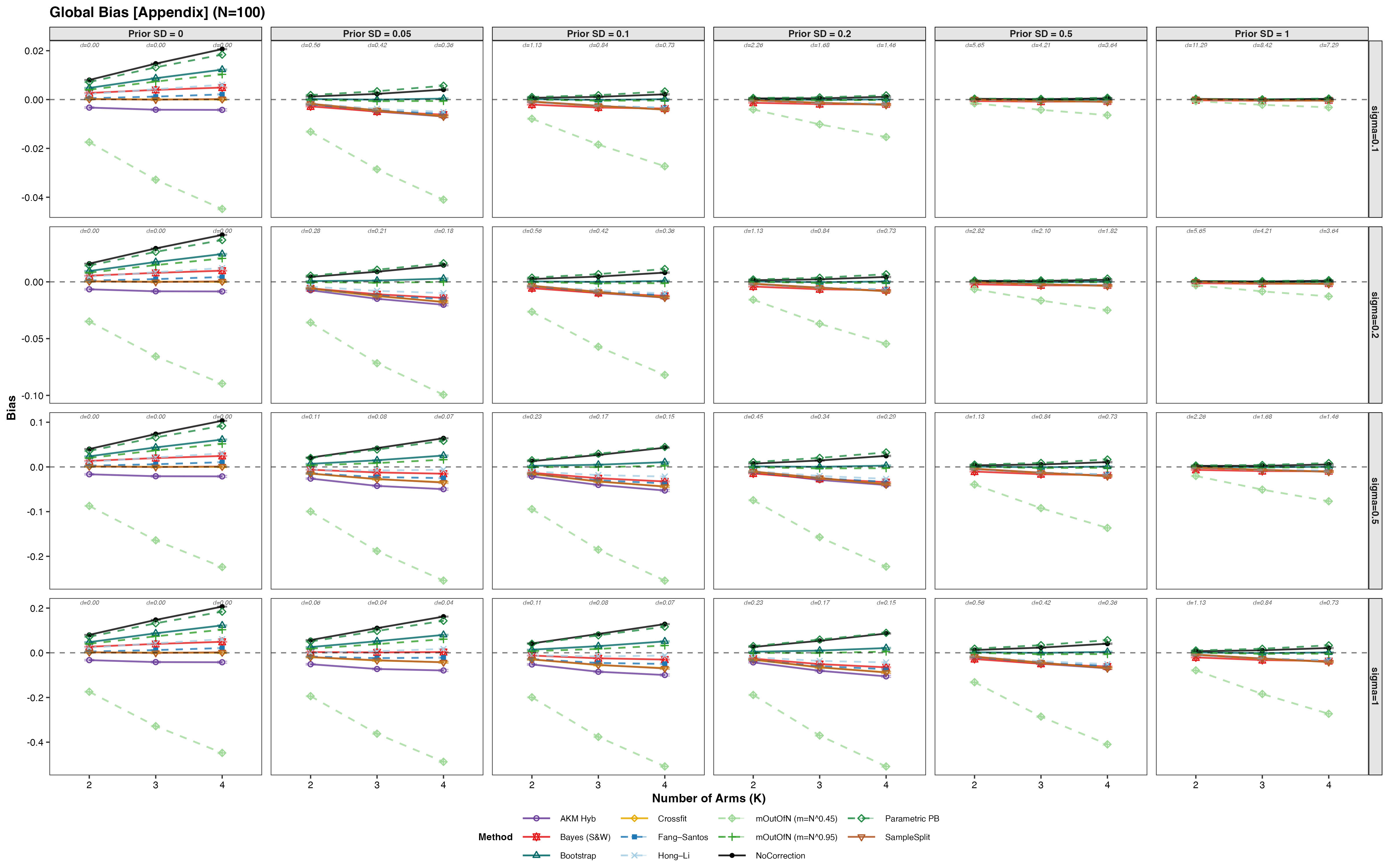}
      \label{fig:prior_K_bias_global_app}

       \footnotesize
       Notes: WC Global as a function of number of arms $K$. Each arm's true
  mean is drawn from $\tau_k \sim \mathcal{N}(1, \sigma_0^2)$ with $N=100$.
  Panels vary by Prior SD $\sigma_0$ (columns) and $\sigma$ (rows). 
  \end{figure}

  \begin{figure}[!ht]
      \centering
      \caption{Winner's Curse Bias (Select) for $K$ Arms}
      \includegraphics[width=0.8 \textwidth]{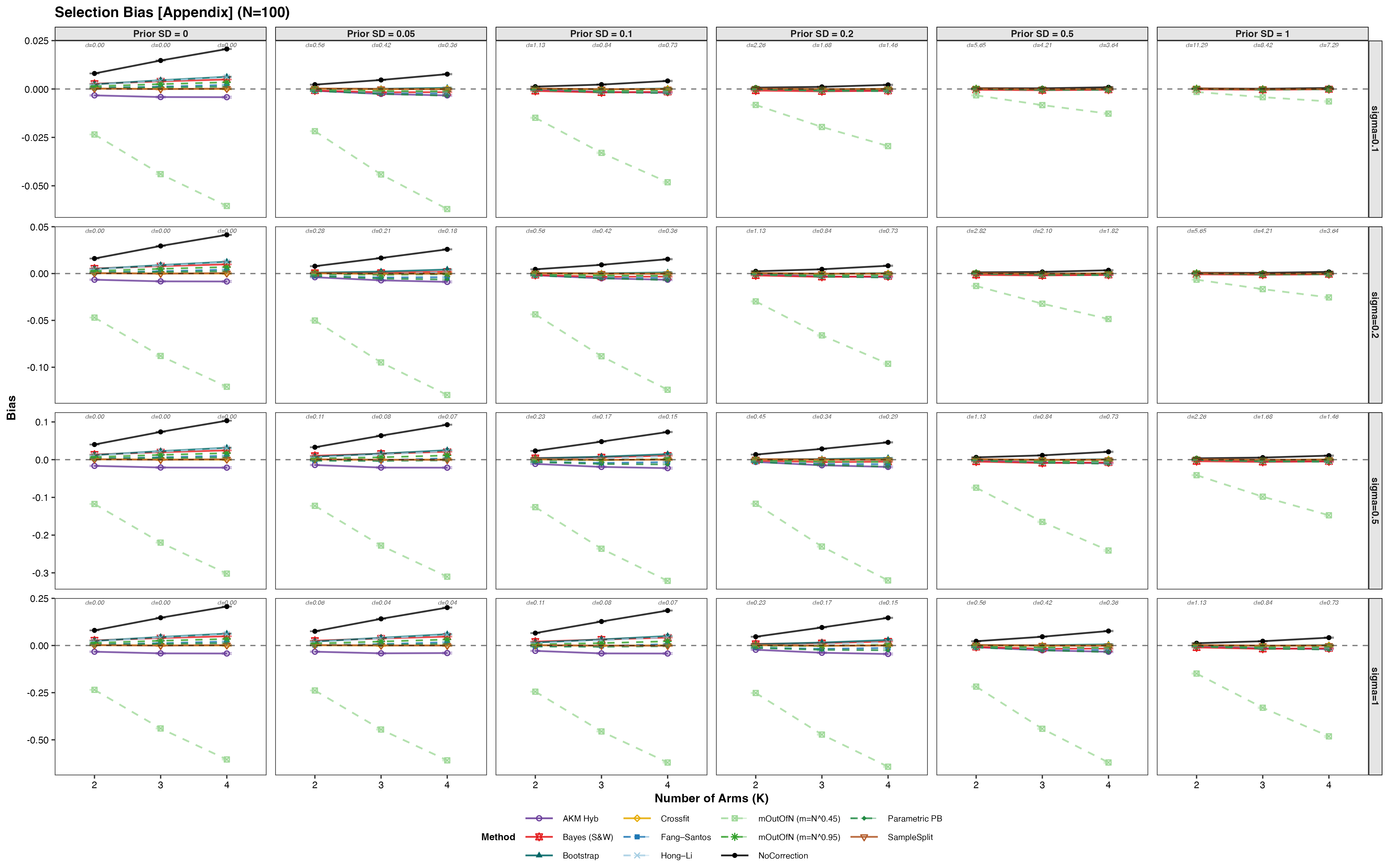}
      \label{fig:prior_K_bias_select_app}

       \footnotesize
       Notes: WC Select bias as a function of number of arms $K$. Panels vary by
   Prior SD $\sigma_0$ (columns) and $\sigma$ (rows). 
  \end{figure}

\begin{figure}[!ht]
      \centering
      \caption{Mean Squared Error (Global) for $K$ Arms}

  \includegraphics[width=0.8 \textwidth]{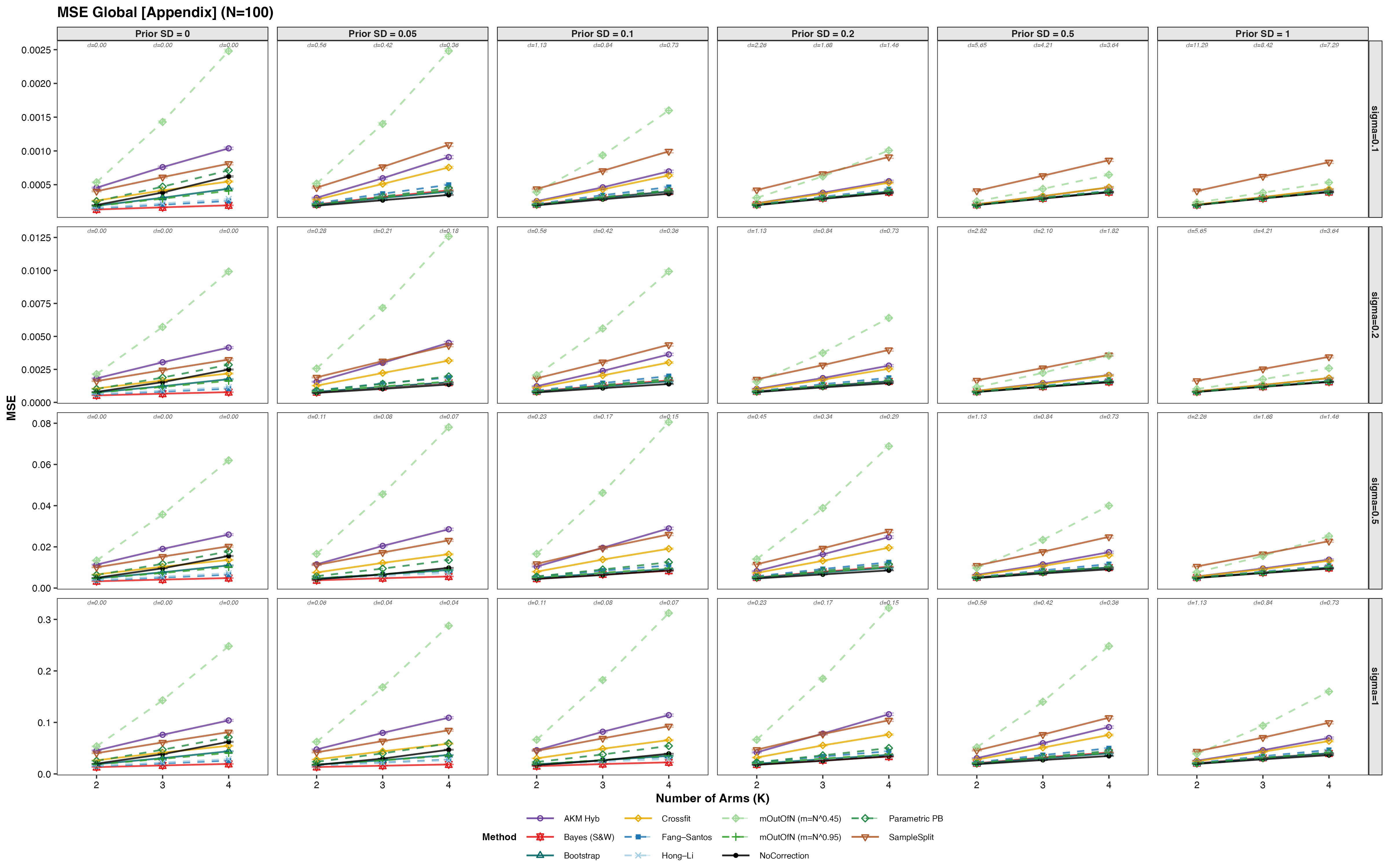}
      \label{fig:prior_K_mse_global_app}

       \footnotesize
       Notes: Mean squared error (MSE) for WC Global as a function of number of
  arms $K$. Panels vary by Prior SD $\sigma_0$ (columns) and $\sigma$ (rows).
  $N=100$.
  \end{figure}

  \begin{figure}[!ht]
      \centering
      \caption{Mean Squared Error (Select) for $K$ Arms}

  \includegraphics[width=0.8 \textwidth]{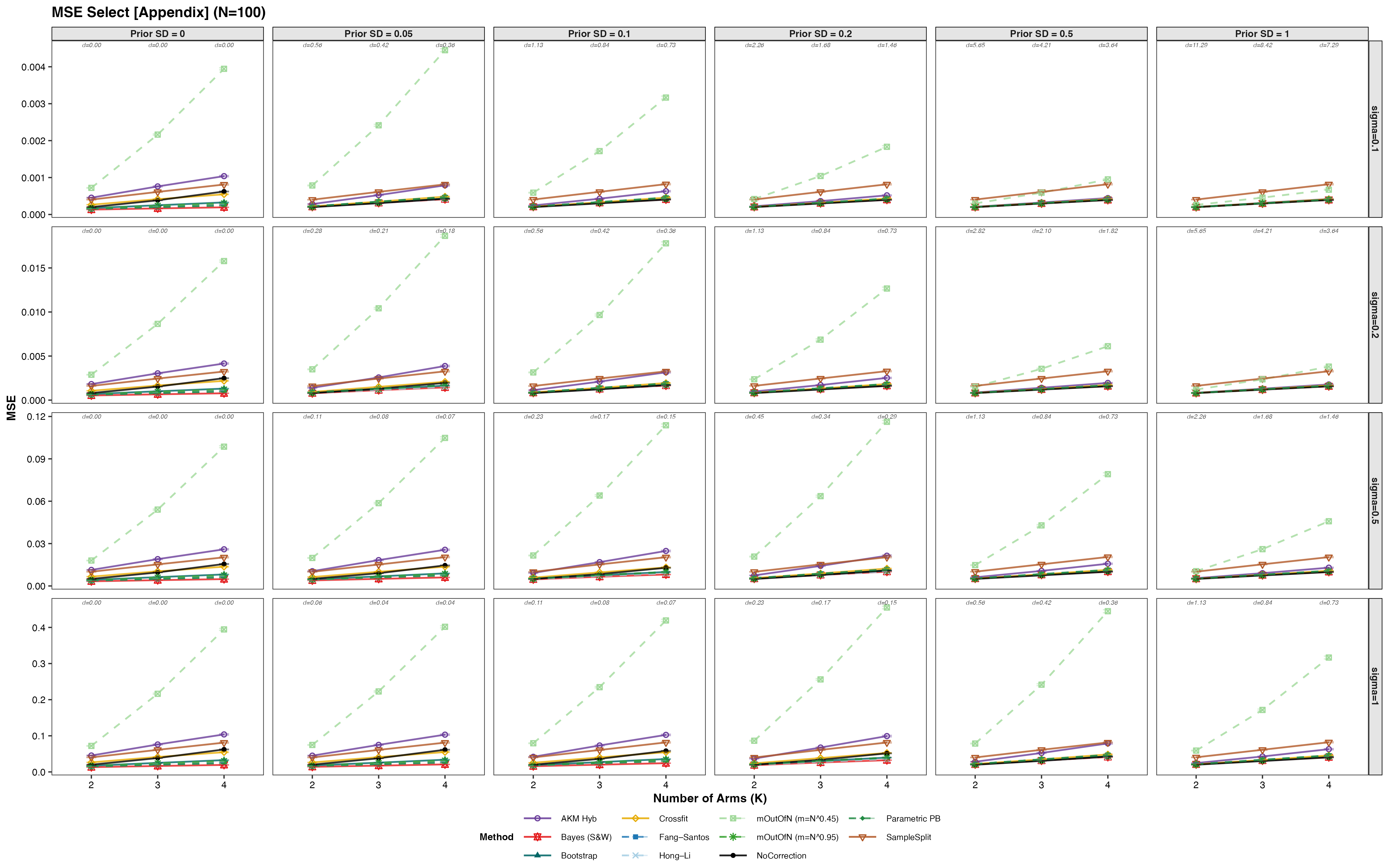}
      \label{fig:prior_K_mse_select_app}

       \footnotesize
       Notes: Mean squared error (MSE) for WC Select as a function of number of
  arms $K$. Panels vary by Prior SD $\sigma_0$ (columns) and $\sigma$ (rows).
  $N=100$. 
  \end{figure}

  \begin{figure}[!ht]
      \centering
      \caption{Coverage (Global) for $K$ Arms}

  \includegraphics[width=0.8 \textwidth]{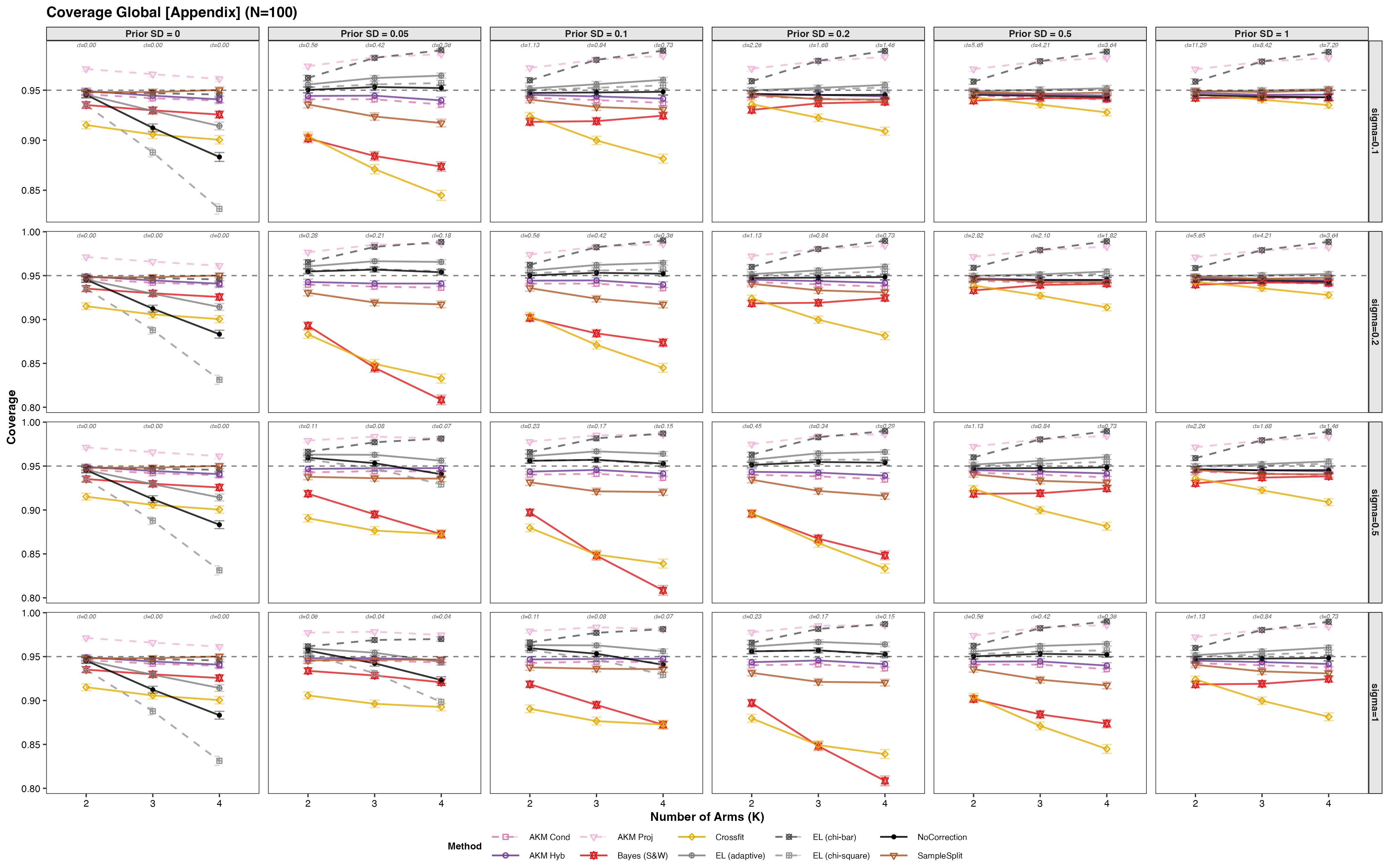}
      \label{fig:prior_K_cov_global_app}

       \footnotesize
       Notes: Empirical coverage of $\alpha$-level two-sided confidence
  intervals for $\mu_{k^*}$, i.e., $\Pr\left(\mu_{k^*}\in
  CI_\alpha(\hat\mu_{\hat k})\right)$, as a function of number of arms $K$.
  Panels vary by Prior SD $\sigma_0$ (columns) and $\sigma$ (rows). 
  \end{figure}

  \begin{figure}[!ht]
      \centering
      \caption{Coverage (Select) for $K$ Arms}

  \includegraphics[width=0.8 \textwidth]{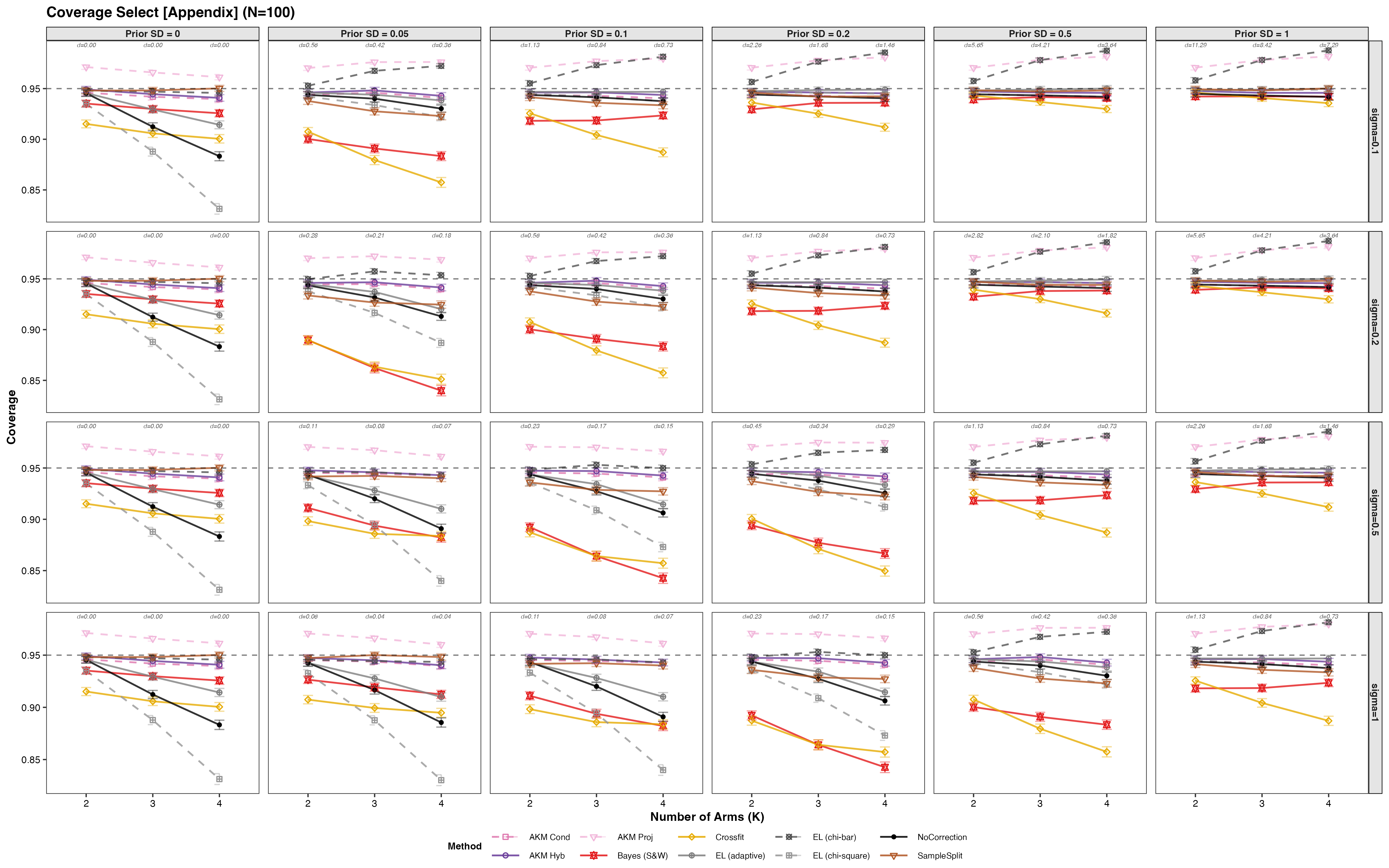}
      \label{fig:prior_K_cov_select_app}

       \footnotesize
       Notes: Empirical coverage of $\alpha$-level two-sided confidence
  intervals for $\mu_{\hat k}$, i.e., $\Pr\left(\mu_{\hat k}\in
  CI_\alpha(\hat\mu_{\hat k})\right)$, as a function of number of arms $K$.
  Panels vary by Prior SD $\sigma_0$ (columns) and $\sigma$ (rows). 
  \end{figure}

\end{document}